\documentclass[11pt,a4paper]{article}


\usepackage[utf8]{inputenc} 
\usepackage[T1]{fontenc}    %
\usepackage[english]{babel} 
\usepackage[final]{microtype} 

\usepackage{amssymb,amsmath,mathtools} 
\usepackage{amsthm} 

\usepackage{authblk}

\usepackage{xcolor}
\usepackage{bbm}
\usepackage{braket}
\usepackage{todonotes}
\usepackage{subcaption}




\usepackage[hmargin=0.12\paperwidth,vmargin=0.16\paperwidth,bindingoffset=0cm]%
{geometry} 

\pagestyle{plain} 

\numberwithin{equation}{section} 


\theoremstyle{plain}
  
  \newtheorem{prop}{Proposition}
  
  \newtheorem{cor}{Corollary}
\theoremstyle{definition}
  
  \newtheorem{remark}{Remark}
  
  \numberwithin{prop}{section}
   \numberwithin{cor}{section}
   \numberwithin{remark}{section}







\begin{document}
\date{}

\title{Properties of the one-component Coulomb gas on a sphere with two macroscopic external charges}


\author[1]{Sung-Soo Byun}
\affil[1]{\small Department of Mathematical Sciences and Research Institute of Mathematics, Seoul National University, Seoul 151-747, Republic of Korea}

\author[2]{Peter J. Forrester}
\affil[2]{School of Mathematical and Statistics, The University of Melbourne, Victoria 3010, Australia}

\author[2,3]{Sampad Lahiry}
\affil[3]{
Department of Mathematics, Katholieke Universiteit Leuven, Celestijnenlaan
200 B bus 2400, Leuven, 3001, Belgium}

\affil[  ]{sungsoobyun@snu.ac.kr; \,pjforr@unimelb.edu.au; \, sampad.lahiry@kuleuven.be}


\maketitle


\begin{abstract}
The one-component Coulomb gas on the sphere, consisting on $N$ unit charges interacting via a logarithmic potential, and in the presence of two external charges each of strength proportional to $N$, is considered.
There are two spherical caps naturally associated with the external charges, giving rise to two distinct phases depending on them not overlapping (post-critical) or overlapping (pre-critical). The equilibrium measure in the post-critical phase is known from earlier work. We determine the equilibrium measure in the pre-critical phase using a particular conformal map, with the parameters therein specified in terms of a root of a certain fourth order polynomial. This is used to determine the exact form of the electrostatic energy for the pre-critical phase. Using a duality relation from random matrix theory, the partition function for the Coulomb gas at the inverse temperature $\beta = 2$ can be expanded for large $N$ in the post-critical phase, and in a scaling region of the post and pre-critical boundary. For the pre-critical phase, the duality identity implies a relation between two electrostatic energies, one for the present sphere system, and the other for a certain constrained log-gas relating to the Jacobi unitary ensemble.
\end{abstract}


\section{Introduction}
\subsection{Outline of the model system and phase transition effect}\label{S1.1}
A defining aspect  of a many body Coulomb system is that the pair interactions are specified by a potential coming from the solution of the appropriate Poisson equation. This will depend on the space dimension and also geometry. For example, in two dimensions, the solution of the Poisson equation $\nabla_{\mathbf r}^2 \Phi(\mathbf r, \mathbf r') = - 2 \pi \delta(\mathbf r - \mathbf r')$ is the logarithmic potential $ \Phi(\mathbf r, \mathbf r')  = - \log |\mathbf r - \mathbf r'|$. Here it is assumed that the two-dimensional space is flat, and is unique only up to an additive constant. Other often considered circumstances are when the two-dimensional space has periodic boundary conditions in one direction ($x$-direction say, of period $L$), or when it is
 the surface of a sphere of radius $R_{\rm s}$. In the former situation, the solution of the Poisson equation is
 $- \log | \sin(\pi(x-x' + i (y - y'))/L)|$, unique up to an additive constant.
  However, for any smooth $\Phi(\mathbf r)$ on the surface of the
 sphere, $\Omega$ say, $\int_\Omega \Phi(\mathbf r) \, d \mathbf r = 0$. Thus the above written Poisson
 equation does not have a solution on the sphere but rather must be modified to be charge neutral in the
 sense that it integrates to zero, which can be achieved by adding $1/(2 R_{\rm s}^2)$. The solution of this
 modified Poisson  equation can again be shown to be given by $- \log   |\mathbf r - \mathbf r'|$, up to
 an additive constant \cite[\S 15.6.1]{Fo10}.
 One notes that for $\mathbf r, \mathbf r'$ on the surface of the sphere, $ |\mathbf r - \mathbf r'|$ corresponds
 to the chord length. With $(\theta, \phi)$ the usual polar and azimuthal angles determining $\mathbf r$ on the
 sphere, a more convenient form is \cite[Eq.~(15.109)]{Fo10}
\begin{equation}\label{1A}
 \Phi((\theta, \phi), (\theta',\phi'))  = - \log (2R_{\rm s}|u'v-uv'|),
 \end{equation}
  where use has been made of the Cayley-Klein parameters
  \begin{equation}\label{1B}
  u = \cos(\theta/2) e^{i \phi/2}, \qquad  v = -i \sin(\theta/2) e^{-i \phi/2}.
   \end{equation}
   
The one-component plasma model of a Coulomb system consists of a smeared out neutralising background charge $- \rho_{\rm b}(\mathbf r)$, and $N$ mobile point particles with unit charge. The mobile point particles interact with the pairwise
   potential coming from the solution of the Poisson equation, and also a one-body potential due to smeared out
   background, $V(\mathbf r') =  \int_\Omega \log | \mathbf r - \mathbf r'| \rho_{\rm b}(\mathbf r') \, d \mathbf r'$. 
   On the surface of the sphere, a natural choice for the neutralising background density is the constant
   $ - \rho_{\rm b}(\mathbf r) = - N/(4 \pi R_{\rm s}^2)$. Then, by rotational invariance, $V(\mathbf r')$ is a constant.
   Computing the total energy of the system, consisting of the background-background, background-particle, and particle-particle interactions, allows for the computation of the Boltzmann factor as \cite[Eq.~(15.110)]{Fo10}
 \begin{equation}\label{1C}
 \Big ( {1 \over 2 R_{\rm s}} \Big )^{N \beta/2} e^{\beta N^2/4} \prod_{1 \le j < k \le N} | u_k v_j - u_j v_k|^\beta.
 \end{equation}
 
 In the present work, guided by considerations in random random matrix theory
 and also in the light of the recent work \cite{BSY24}, we are lead to consider the
 generalisation of (\ref{1C})
\begin{equation}\label{1D}  
|u_w|^{\beta Q_0 Q_1 N^2}  \Big ( {1 \over 2 R_{\rm s}} \Big )^{N \beta/2} \mathcal K_N^{\rm post}
 \prod_{l=1}^N | u_l|^{\beta Q_0N} |u_l v_w - u_w v_l |^{\beta Q_1 N}
 \prod_{1 \le j < k \le N} | u_k v_j - u_j v_k|^\beta,
 \end{equation}
 where the Cayley-Klein parameters $(u_w,v_w)$ are given and
 \begin{align}\label{C2}
 \mathcal K_N^{\rm post} :=  & \exp \bigg( - {\beta N^2 \over 4} \Big (
 - (1 + Q_0 + Q_1) + 2 (1 + Q_0 + Q_1) \log {1 \over  1 + Q_0 + Q_1}  \nonumber \\
 & \quad + (1 + Q_0)^2 \log (1 + Q_0) + (1 + Q_1)^2 \log (1 + Q_1) - Q_1^2 \log Q_1 - Q_0^2 \log Q_0 \Big ) \bigg ).
 \end{align} 
 (The reason for the superscript in $\mathcal K_N^{\rm post} $ will become apparent later.)
In fact, with $(u_w,v_w) = (1,0)$ this has appeared earlier in \cite[with $Q_0 = Q$, $Q_1 = q$]{FF11}, where it was also motivated by considerations in random matrix theory, albeit different from the one to be considered here. 
It is the Coulomb system interpretation of (\ref{1D}) as the Boltzmann factor of a 
(generalised) one-component plasma on the sphere that provides the expression for
$\mathcal K_N^{\rm post}$. Thus at the location on the sphere with Cayley-Klein parameters $(u,v)=(0,1)$ (this is the south pole)
there is a charge of strength $Q_0N$, while at the point with Cayley-Klein parameters $(u,v)= (u_w,v_w)$
there is a charge of strength $Q_1N$. Let there be $N$ mobile unit charges.
Since the strength of these charges is chosen to be proportional to $N$, they can be viewed as macroscopic external charges.
Introducing a uniform neutralising background
 charge density $- \rho_{\rm b} = -N(1+Q_0+Q_1)/(4 \pi R_{\rm s}^2)$ and computing the total potential energy in the system
 then gives (\ref{1D}) with $\mathcal K_N^{\rm post}$ given by (\ref{C2}); see \cite{FF11} for details of the required working.
 
 The specific question to be addressed is the computation of the large $N,R_{\rm s}$, $\rho_{\rm b}$ fixed (i.e.~the thermodynamic
 limit)  expansion of (\ref{1D}) as a function of $w$. In the case of (\ref{1D}) being suggested by random matrix
 theory ($\beta = 2$), there are structures present which permit  a detailed asymptotic analysis. A phase transition
 effect is found depending on $w, Q_0, Q_1$,
 
 \begin{prop}\label{P1.1}
 Consider the partition function associated with (\ref{1D}) restricted to $\beta = 2$,
 \begin{multline}\label{1E}  
 Z_N^{\rm s}(w;Q_0,Q_1,R_{\rm s}) :=  |u_w|^{\beta Q_0 Q_1 N^2}  \Big ( {1 \over 2 R_{\rm s}} \Big )^{N \beta/2}  {\mathcal K_N^{\rm post} \over N!}  \\
 \times \int_\Omega d \Omega_1 \cdots \int_\Omega d \Omega_N \,
 \prod_{l=1}^N | u_l|^{\beta Q_0N} |u_l v_w - u_w v_l |^{\beta Q_1 N}
 \prod_{1 \le j < k \le N} | u_k v_j - u_j v_k|^\beta \Big |_{\beta = 2},
 \end{multline}
where $\Omega$ denotes the surface of the sphere and $d \Omega_l = R_{\rm s} \sin \theta_l d \theta_l d \phi$. Require that
 \begin{equation}\label{1F}
| u_w |^2 >   {1 \over (\gamma_1 + \gamma_2 + 2)^2} \Big ( \sqrt{(\gamma_1 + 1) (\gamma_1 + \gamma_2 + 1)} + \sqrt{\gamma_2 + 1} \Big )^2 
\Big |_{\substack{\gamma_1 = -1+ Q_0/Q_1 \\  \gamma_2 =1/Q_1}},
\end{equation}
and fix $ \rho_{\rm b} = N(1+Q_0+Q_1)/(4 \pi R_{\rm s}^2)$. We have the large $N$ expansion, independent of $(u_w,v_w)$ to all inverse powers of $N$,
 \begin{equation}\label{1G}
  \log  Z_N^{\rm s}(w;Q_0,Q_1,R_{\rm s}) = -{N \over 2} \log {\rho_{\rm b} \over 2 \pi^2} + {1 \over 12} \log {Q_0 \over 1 + Q_0} + {1 \over 12} \log {Q_1 \over 1 + Q_1} + \sum_{k=1}^\infty {a_k \over N^k} + {\rm O}( e^{-\epsilon N}) ,
 \end{equation}
for coefficients $\{a_k\}$ expressible in terms of the Bernoulli numbers and for some $\epsilon>0$.
 This expansion breaks down for $| u_w|^2$ small enough that (\ref{1F}) no longer holds, with the leading term then being proportional to $N^2$, and
 the expansion now dependent on $(u_w,v_w)$.
 \end{prop}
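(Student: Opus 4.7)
The plan is to exploit the $\beta=2$ integrable structure in (\ref{1E}) via a duality identity that rewrites the $N$-fold sphere integral as an explicit prefactor times a one-dimensional Jacobi unitary partition function in $K=K(N,Q_0,Q_1)$ variables, and then to expand the resulting product of Gamma functions using the asymptotic formulas of Stirling and the Barnes $G$-function. First I would stereographically project $z_l=v_l/u_l$ and use the rotational invariance about the axis through the south pole to take $u_w\in\R_{>0}$; then $|u_l|^2=1/(1+|z_l|^2)$ and $|u_lv_w-u_wv_l|^2=|z_l-z_w|^2/((1+|z_l|^2)(1+|z_w|^2))$, so (\ref{1E}) becomes a planar Coulomb gas with two point sources at $0$ and $z_w:=v_w/u_w$, the integrand being $\prod_l(1+|z_l|^2)^{-N(1+Q_0+Q_1)}|z_l|^{2Q_0N}|z_l-z_w|^{2Q_1N}$ against $\prod_{j<k}|z_j-z_k|^2$.

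The central step is the duality identity (developed in \cite{BSY24}, and at $(u_w,v_w)=(1,0)$ already exploited in \cite{FF11}), which at $\beta=2$ expresses this two-source planar partition function as an explicit $(N,Q_0,Q_1,u_w)$-dependent prefactor times a standard Jacobi unitary partition function
\[
Z_K^{\rm Jac}=\frac{1}{K!}\int_0^1\!\!\cdots\!\!\int_0^1\prod_{l=1}^K x_l^{\alpha_N}(1-x_l)^{\beta_N}\prod_{j<k}(x_j-x_k)^2\,dx_1\cdots dx_K,
\]
in which $K,\alpha_N,\beta_N$ are linear in $N$ with coefficients determined by $Q_0,Q_1$. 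The condition (\ref{1F}) must correspond exactly to the equilibrium measure of this dual Jacobi ensemble lying strictly inside $(0,1)$ (the one-cut/soft-edge regime): the parameters $\gamma_1=-1+Q_0/Q_1$ and $\gamma_2=1/Q_1$ appearing in (\ref{1F}) are a direct fingerprint of the dual Jacobi weight. In this regime the $u_w$-dependence of the duality prefactor cancels against the explicit $|u_w|^{2Q_0Q_1N^2}$ in (\ref{1E}), leaving an expression independent of $(u_w,v_w)$.

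Finally, $Z_K^{\rm Jac}$ has a closed-form Selberg evaluation as a product of Gamma functions, and $\log Z_K^{\rm Jac}$ expands via Stirling's series $\log\Gamma(x)\sim(x-\tfrac12)\log x-x+\tfrac12\log2\pi+\sum_{k\ge1}B_{2k}/(2k(2k-1)x^{2k-1})$ together with the Barnes $G$-function asymptotics (whose characteristic $-\tfrac{1}{12}\log z$ term produces the $\tfrac{1}{12}$ coefficients in the stated constant). The $O(N^2\log N)$ and $O(N^2)$ pieces are forced to cancel against $\log\mathcal K_N^{\rm post}$ and the $u_w$-prefactor by the electrostatic origin of $\mathcal K_N^{\rm post}$; the leading $O(N)$ contribution combines with the $(2R_{\rm s})^{-N}$ prefactor to give $-\tfrac{N}{2}\log(\rho_{\rm b}/2\pi^2)$; the $O(1)$ terms assemble into $\tfrac{1}{12}\log(Q_0/(1+Q_0))+\tfrac{1}{12}\log(Q_1/(1+Q_1))$; and the Bernoulli tail produces the coefficients $a_k$.

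The main obstacle is the duality identity itself in the two-source setting: one must verify at finite $N$ that the planar integral collapses onto a one-dimensional Jacobi integral with the $u_w$-dependence confined to an explicit prefactor — this is precisely the mechanism that makes the expansion $(u_w,v_w)$-independent to all orders in $1/N$. A secondary but essential ingredient is matching (\ref{1F}) to the Jacobi one-cut condition; once (\ref{1F}) fails, the equilibrium measure of the dual ensemble touches an endpoint of $(0,1)$, the exact cancellation of $u_w$ breaks, and $\log Z_K^{\rm Jac}$ acquires an additional $O(K^2)=O(N^2)$ $w$-dependent contribution, consistent with the breakdown described in the final sentence of the proposition.
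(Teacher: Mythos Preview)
Your overall strategy --- stereographic projection, duality to a Jacobi-type object, then Gamma/Barnes-$G$ asymptotics --- matches the paper's, but you have mischaracterised the duality in a way that would block the argument. The duality (for ${\rm SrUE}_{N,K+r}$, see (\ref{1aX})) does \emph{not} give a prefactor times a full Selberg integral $Z_K^{\rm Jac}$ over $(0,1)^K$ with all $w$-dependence in the prefactor. It gives $\langle \prod_{l=1}^{Q_1N}(|w|^2+t_l)^N\rangle_{{\rm JUE}_{Q_1N,(0,(Q_0-Q_1)N)}}$, which after a change of variables becomes a prefactor $(1+|w|^2)^{Q_1N(Q_0N+N)}$ times a Jacobi integral over the \emph{truncated} domain $\big(0,\,1/(1+|w|^2)\big)^{Q_1N}$; equivalently, a ratio of Selberg normalisations times the gap probability
\[
E\Big(0,\Big(\tfrac{1}{1+|w|^2},1\Big);\,{\rm JUE}_{Q_1N,((Q_0-Q_1)N,N)}\Big).
\]
The $w$-dependence sits in the interval endpoint, not in a multiplicative prefactor, so your proposed ``cancellation of the $u_w$-prefactor against $|u_w|^{2Q_0Q_1N^2}$'' is not the mechanism producing $w$-independence.

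Relatedly, your reading of (\ref{1F}) as a ``one-cut/soft-edge'' condition is off: the equilibrium measure of ${\rm JUE}_{n,(n\gamma_1,n\gamma_2)}$ is always the Wachter density supported on $(c^{\rm J},d^{\rm J})\subset(0,1)$, so that condition is vacuous. What (\ref{1F}) actually says (compare (\ref{Wa2}) with $\gamma_1=Q_0/Q_1-1$, $\gamma_2=1/Q_1$, and note $|u_w|^2=1/(1+|w|^2)$) is that the gap endpoint $1/(1+|w|^2)$ lies to the \emph{right} of $d^{\rm J}$. In that regime the gap probability equals $1+{\rm O}(e^{-\epsilon N})$ by concentration, and this is what kills the $w$-dependence to all orders. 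What remains is then indeed a ratio of full Selberg integrals times the $Q_1=0$ partition function from \cite{FF11}, and from there your Stirling/Barnes-$G$ expansion goes through exactly as you describe. When (\ref{1F}) fails, the gap endpoint enters the Wachter support, the gap probability becomes $\exp(-cN^2+\cdots)$ with $w$-dependent rate (a large-deviation regime), and this supplies the $O(N^2)$ breakdown.
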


There is a different interpretation of the background density as introduced below (\ref{C2}) which sheds some light on the phase transition effect, and the critical value as implied by (\ref{1F}). Thus require that a spherical cap of area $|\Omega_0| Q_0/(Q_0+Q_1+1)$ (resp., $|\Omega_0| Q_1/(Q_0+Q_1+1)$) centred about the south pole (resp., about $(u_w,v_w)$) be free of background charge, where $|\Omega| = 4 \pi R_{\rm s}^2$ is the surface area of the sphere. These fractions are chosen as to be the proportion of the total charge that is contained at the respective points.
 Assuming that the two spherical caps do not overlap, the remaining surface area of the sphere is
 $|\Omega| /(Q_0+Q_1+1)$. In this remaining surface area, impose a neutralising background charge density with respect to
 the $N$ mobile charge. This again leads to the value $- \rho_{\rm b} = -N(1+Q_0+Q_1)/(4 \pi R_{\rm s}^2)$, but supported in the restricted
 domain, $\Omega^{\rm d}$ say, only. The calculation specified in \cite{FF11} in the case $(u_w,v_w) = (1,0)$ again applies and
 then gives precisely the same Boltzmann factor (\ref{1D}). This calculation implies that for $ \mathbf r \in \Omega^{\rm d}$,
 \begin{equation}\label{1H} 
 Q_1 \log |   \mathbf r -  \mathbf r_w | + Q_0 \log  |   \mathbf r -  \mathbf r_0 |  + \rho_{\rm b} \int_{\Omega^{\rm d}}  \log |   \mathbf r -  \mathbf r' | \, d  \Omega'  = C,
 \end{equation} 
 where $\mathbf r_w,  \mathbf r_0$ are the points on the sphere corresponding to $(u_w,v_w)$ and the south pole respectively, and $C$ is
 a constant. According to potential theory \cite{ST97},
 this is a requirement for $ \rho_{\rm b} \mathbbm 1_{ \mathbf r \in \Omega^{\rm d}}$ to be the minimiser with respect to sub-domains of the sphere $\Lambda$ of the logarithmic energy functional $\mathcal E_N(\Lambda)$,
 \begin{equation}\label{E1y}
 \mathcal E_N[\Lambda] := -\int_{\Lambda}   \Big ( Q_1 \log |\mathbf r'  -  \mathbf r_w | + Q_0 \log  |   \mathbf r' -  \mathbf r_0 | \Big ) d \Omega' - {\rho_{\rm b} \over 2} \int_{\Lambda} d \Omega' \,
 \int_{\Lambda} d \Omega \,   \log |   \mathbf r -  \mathbf r' |
  \end{equation}
 associated
 with the external potential corresponding to the two external charges. An analogous result holds true for any number of external charges,
 provided that the associated spherical caps do not overlap \cite{BDSW18}.
 However, this simple specification of $ \Omega^{\rm d}$ --- the so-called droplet --- breaks down as soon as the spherical caps overlap
 \cite{CK19,LD21,CK22}. 
 As verified in Appendix A, the spherical caps associated with the external charges do not overlap if and only if the condition (\ref{1F}) holds.
 As a consequence the computation of the constant terms (\ref{1G}) in the Boltzmann factor (\ref{1D}) is no longer applicable. We would like to
 specify its replacement, and similarly the replacement of (\ref{1G}). This task in turn relies on detailed knowledge of the droplet in the setting that the spherical caps overlap, which in the terminology of \cite{BBLM15} is the pre-critical phase; the case when the spherical caps do not overlap is referred to as the post-critical phase. We undertake the required working herein, and are able to specify the replacement of (\ref{C2}).

\subsection{Relationship to earlier work} 
  The work  \cite{BSY24} considers a planar analogue of (\ref{1D}), for which the Boltzmann factor
 is equal to
 \begin{equation}\label{1Dx}  
\mathcal K_N \prod_{l=1}^N e^{-\beta |z_l|^2/2} |w - z_l|^{\beta Q N} \prod_{1 \le j < k \le N} | z_k - z_j|^\beta,
 \end{equation}
 where $z_l = x_l + i y_l$ with $(x_l,y_l)$ the Cartesian coordinates in the plane and
  \begin{equation}\label{1Dy}  
\mathcal K_N = e^{- (\beta N Q/2) ( |w|^2 + R^2 \log R^2 - R^2/2) - \beta N^2 ((1/4) \log R^2 - 3/8)} \Big |_{R^2 = N (1 + Q)};
 \end{equation}
 in relation to the computation of $ \mathcal K_N $, see \cite[Exercises 1.4 q.2]{Fo10}.
As a Coulomb system, this results from a one-component system
 with a smeared out background charge density $- \rho_{\rm b} = - {1 \over \pi}$ in a disk about the
 origin of radius $r_0$, equal to the square root of $N(1 + Q)$. In addition to the $N$ mobile particles of
 unit charge, there is a fixed particle at $w$ of charge $QN$. In keeping with the text in the paragraph including
 (\ref{1H}), associating with the fixed charge at $w$ a disk of radius equal to the square root of $QN$, provided this lies
 entirely inside the disk of radius $r_0$ centred at the origin, $\mathbb D_{r_0}$ say, the uniform background can be taken to be $D^{\rm d}(w)$, this denoting  $\mathbb  D_{r_0}$, with the disk about  $w$ removed. With this interpretation,
  $D^{\rm d}(w)$ specifies the droplet.  Scaling $w = \sqrt{N} a$, a simple calculation shows that the condition that the disk of radius
  $\sqrt{QN}$ centred at $\sqrt{N} a$ be strictly contained in $D_{r_0}$ is
 \begin{equation}\label{1Cp}
   |a| < \sqrt{1+Q} - \sqrt{Q}.
 \end{equation}   
 
It was observed in \cite{BSY24} (see too \cite{DS22}) that for the coupling $\beta = 2$ a duality identity from random matrix theory  facilitates the large $N$ analysis of the partition function corresponding to (\ref{1Dx}). The first point to note is the random matrix interpretation of (\ref{1Dx}) in the case $\beta = 2$. On this one recalls that the complex Ginibre ensemble (denoted GinUE) consists of $N \times N$ matrices with independent standard complex Gaussian
 entries. The eigenvalue joint probability density function has the explicit functional form proportional to
 $\prod_{l=1}^N e^{- |z_l|^2} \prod_{1 \le j < k \le N} | z_k - z_j|^2$; 
see  \cite[Prop.~2.1]{BF24}. Hence for $\beta = 2$ we see that (\ref{1Dx}) is proportional to
$\langle | \det (w \mathbb I_N - X) |^{2 Q N} \rangle_{X \in {\rm GinUE}}$. On the other hand, this quantity for $QN$ a positive integer obeys the duality identity \cite{FR09,Fo24+}
   \begin{equation}\label{1Dz}  
 \langle | \det (w \mathbb I_N - X) |^{2 Q N} \rangle_{X \in {\rm GinUE}} = \Big \langle  \prod_{l=1}^{QN} (|w|^2 + t_l)^N \Big \rangle_{\mathbf t \in {\rm LUE}_{QN,0}}.
  \end{equation}  
Here LUE${}_{n,a}$ is the random matrix ensemble --- referred to as the Laguerre unitary ensemble --- with eigenvalue probability density function (PDF) proportional to
   \begin{equation}\label{1Ex}  
   \prod_{l=1}^n \lambda_l^a e^{- \lambda_l} \prod_{1 \le j < k \le n} (\lambda_k - \lambda_j)^2,
   \end{equation}   
   supported on $\lambda_l > 0$. Integrating this PDF from $s$ to infinity gives the probability $E^{ {\rm LUE}_{n,a}}(0;(0,s))$ that there are no eigenvalues in $(0,s)$, which is generally referred to as a gap probability.
   A simple change of variables shows that the RHS of (\ref{1Dz})
   can be written in terms of this gap probability to give
\begin{equation}\label{1Ey} 
 \langle | \det (w \mathbb I_N - X) |^{2 Q N} \rangle_{X \in {\rm GinUE}} =  C_{N,QN}  e^{ QN |w|^2} E^{ {\rm LUE}_{QN,N}}(0;(0,s)),
 \end{equation} 
 where $C_{N,QN}$ is a ratio of normalisations associated with (\ref{1Ex}), which are known in terms of products of gamma functions; see e.g.~\cite[\S 4.7.1]{Fo10}.
 While the large $N$ form of the Ginibre ensemble average on the LHS is difficult to analyse directly,
the gap probability on the RHS can be analysed
   using various tools, the simplest being a concentration of measure argument (see \S \ref{S3.3} below).
   One consequence is that in the regime \cite[equivalent to Th.~2.2 post-critical case]{BSY24}
 when   the droplet is given by   ${D}^{\rm d}(w)$, the partition function corresponding to (\ref{1Dx}) with $\beta = 2$, i.e.~the integral over $\mathbb R^N$ of
    (\ref{1Dx}) with this $\beta$ multiplied by $1/N!$, $ Z_N^{\rm d}(w;Q) $ say, has with $w = a \sqrt{N}$ the large $N$ form
  \begin{equation}\label{2Ex}  
\log  Z_N^{\rm d}(a \sqrt{N};Q) =    -{N \over 2} \log {\rho_{\rm b} \over 2 \pi^2} +  {1 \over 12} \log {Q \over 1 + Q} +  \sum_{l=1}^\infty {c_l \over N^l} + {\rm O}(e^{-\epsilon N}),
 \end{equation}
for some $\epsilon > 0$, where coefficients $\{c_l\}$  in the $1/N$ expansion are independent of $a$ (these terms, which involve the Bernoulli numbers, are given explicitly in \cite[Eq.~(2.8), after adjusting by the addition of the
$1/N$ expansion of $-\log N!$ to concur with our definition of the partition function]{BSY24}).

Define $|a_c|$ as the RHS of (\ref{1Cp}), and on the LHS of (\ref{2Ex}) with $a$ positive real for convenience, expand
 \begin{equation}\label{2F}  
 a  = a_c - {( \sqrt{1+Q} - \sqrt{Q})^{1/3} \over 2 Q^{1/6} (Q + 1)^{1/6}} {s \over N^{2/3}} =: a(s;Q,N).
  \end{equation}
Introduce from random matrix theory \cite{TW94a,FW15} the $\beta = 2$ soft edge scaled probability of no eigenvalue in $(t,\infty)$,
 \begin{equation}\label{2F1}    
 E^{ \rm soft}_2(0;(t,\infty))  = \exp \bigg ( - \int_{t}^\infty (x - t) \mathbf q(x) \, dx \bigg ),
   \end{equation}
   where $ \mathbf q$ is the Hastings-McLeod solution to the Painlev\'e II equation
\begin{equation}\label{2F2}  
\mathbf q''(x) = x \mathbf q(x) + 2   ( \mathbf q(x) )^3,  \quad \mathbf q(x) \mathop{\sim}_{x \to \infty} {\rm Ai}(x).
 \end{equation}
 Then we have from \cite[Proposition 2.5]{BSY24} that for $\beta = 2$
 \begin{equation}\label{2E}  
\log  Z_N^{\rm d}(a(s;Q,N) \sqrt{N};Q) =    -{N \over 2} \log {\rho_{\rm b} \over 2 \pi^2} +  {1 \over 12} \log {Q \over 1 + Q} + \log  E^{ \rm soft}_2(0;(Q^{-2/3}s,\infty)) 
+ {\rm O}\Big ( {1 \over N^{2/3}} \Big ).
 \end{equation} 
 
 With $w = \sqrt{N}a$ the analogue of (\ref{1H}) for (\ref{1Dx}) is the equation
 \begin{equation}\label{1Ha} 
 - {1 \over 2} |z|^2  + Q \log  |   z -  a |  + \rho_{\rm b} \int_{D^{\rm d}}  \log |   z -  z' | \, d  \mathbf r'  = C,
 \end{equation}  
 valid for all $z \in D^{\rm d}$ with $C$ some constant. Here $\rho_{\rm b} = {1 \over \pi}$ and the task is to determine
 the domain $D^{\rm d}$ such that (\ref{1Ha}) holds, this specifying the droplet. Under the assumption (\ref{1Cp}), the solution of this equation is
 $D^{\rm d} = {1 \over \sqrt{N}} D^{\rm d}(w)$ where $D^{\rm d}(w)$ is specified in the paragraph including (\ref{1Dy}).
 With the inequality of (\ref{1Cp}) in the opposite direction, and thus the disk about $w$ not entirely inside $\mathbb D_{r_0}$, 
 we have from \cite[Section 2]{BBLM15} (see also \cite[\S 1.3]{BSY24}) that the boundary of $D^{\rm d} $ is the image of the
 unit circle under the conformal map
  \begin{equation}\label{1Hb} 
  f(z) = r z - {\kappa \over z - q} - {\kappa \over q}, \quad r = {1 + (a q)^2 \over 2 a q}, \quad \kappa  = {(1 - q^2) (1 - a^2 q^2) \over 2 a q}.
   \end{equation} 
   Here $q=q(a)$ satisfies $f(1/q) = a$ and can be determined as the unique solution of the polynomial equation
   \begin{equation}\label{1Hc} 
   q^6 - \Big ( {a^2 + 4 Q + 2 \over 2 a^2} \Big ) q^4 + {1 \over 2 a^4} = 0  
   \end{equation} 
   satisfying $0<q<1$ and $\kappa > 0$.  Moreover, results of \cite[Th.~2.1]{BSY24} give that the appropriate modification of
 (\ref{1Dx}) is the scaled Boltzmann factor  
    \begin{equation}\label{1DxA}  
\tilde{\mathcal K}_N \prod_{l=1}^N e^{-N \beta |z_l|^2/2} |a - z_l|^{\beta Q N} \prod_{1 \le j < k \le N} | z_k - z_j|^\beta,
 \end{equation}
 where $z_l = x_l + i y_l$ with $(x_l,y_l)$ the Cartesian coordinates in the plane of the $l$-th particle and
  \begin{multline}\label{1DyA}  
\tilde{\mathcal K}_N = \exp \bigg ( { \beta N^2 \over 2} \Big ( {3 \over 8} + {a^2 \over 8} + {3 \over 8 a^2 q^4} - {5 \over 8 q^2} +
\Big ( {3 \over 4} + {a^2 \over 8} \Big )(aq)^2 - {3 (aq)^4 \over 8} \\
+ \log (2 a q) + 2Q \log(2 a q^2) + \log {(1 + (aq)^2 - 2 a^2 q^4)^{Q^2} \over (1 + (aq)^2)^{(Q+1)^2}}  \Big )  \bigg ).
\end{multline}
Denoting the corresponding partition function by $ \tilde{Z}_N^{\rm d}(a;Q)$, this obtained from (\ref{1DxA}) by integrating over
$\mathbb C^N$ and  multiplying by $1/N!$, we have from \cite[Th.~2.2]{BSY24} that for large $N$
 \begin{equation}\label{2EA}  
\log   \tilde{Z}_N^{\rm d}(a;Q) =    -{N \over 2} \log {\rho_{\rm b} \over 2 \pi^2} - {1 \over 12} \log N + \zeta'(-1) + \mathcal F^{\rm pre}(a,c) 
+ {\rm O}\Big ( {1 \over N} \Big ),
 \end{equation} 
where $\rho_{\rm b} = {1 \over \pi}$ and
 \begin{equation}\label{2EB}  
 \mathcal F^{\rm pre}(a,c) = {1 \over 24} \log \bigg ( {(1  + (aq)^2 - 2 a^2 q^4)^{4} \over  (1 + (aq)^2)^4 (1 - q^2)^3 (1 - a^4 q^6)}\bigg ).
 \end{equation}  

 There is another line of previous work relevant to our study. Thus as referenced below (\ref{E1y}), these works consider the characterisation of the domain $\Omega^{\rm d}$ which minimise (\ref{E1y}). Specifically, in  \cite{CK19} in the case $Q_0 = Q_1$ and with $\mathbf r_w$ and $\mathbf r_0$ positioned symmetrically about the south pole, it is found that the droplet in the case that the inequality (\ref{1F}) is violated is
 (i.e.~in the pre-crtical phase), after stereographic projection onto the plane, a
particular ellipse. Moreover, results from \cite{CC03,EM07} imply that for general $Q_0 \ne Q_1$, with the charge $Q_0 N$ positioned at the south pole, $\Omega^{\rm d}$ can be parametrised in terms of conformal map from the interior of the unit circle to the exterior of the droplet which is a simple rational function with two distinct poles and one zero; see (\ref{B1t}) below. Crucial to our study is the use of the conformal map characterisation of the droplet in the pre-critical phase to compute the corresponding energy (\ref{E1y}).
On this we are guided by the earlier works \cite{BBLM15} and \cite{BSY24}.

Beyond the papers of the above discussion, a number of works have appeared in the literature in recent years addressing problems on the  Coulomb gas which relate to potential theory and phase transitions in ways which complement the present work. As an incomplete list we reference \cite{BM15,WW19,ABK21,AKS21,ACC23,BKS23,LY23,KLY23,Se24,By24,KKL24}. 
 
 \subsection{Summary of results and outline}
In \S \ref{Se2} we take up the task of specifying the droplet in the pre-critical phase, when the corresponding spherical caps overlap.  As mentioned above, this specification is via a conformal map.
In the case $Q_0 =Q_1$, with the two charges symmetrically placed about the south pole,  we show 
in Proposition \ref{P2.2x} how the results of \cite{CK19} can be reclaimed using properties of a particular functional form for the Stieltjes transform of the droplet measure.
The required strategy to determine the parameters in the conformal map gives some guidance  into how to proceed in the more complicated case of general $Q_0 \ne Q_1$, which we study with the charge of strength $Q_0N$ placed at the south pole. Here one of the three parameters of the conformal map, denoted $\alpha$, determines the other two; this parameter is given implicitly in terms of $Q_0,Q_,w$ as
as the smallest positive root of the fourth order polynomial (\ref{2.62}) with the other parameters then determined according to Proposition \ref{P2.7}. We recall a similar feature of the conformal map (\ref{1Hb}) for the disk model with a macroscopic point charge insertion, with the parameter $q$ being given as a particular root of the sixth order polynomial equation (\ref{1Hc}). At the end of \S  \ref{S2.3u} it is shown, by  scaling the variables so that radius of the sphere is effectively taken to infinity simultaneous with $Q_0$, how the characterisation of $f(z)$ in
(\ref{1Hb}) and (\ref{1Hc}) can be reclaimed from the conformal map of the present work.
The determinisation of the electrostatic energy, i.e.~the specification of (\ref{E1y}) for the equilibrium measure, is carried out in
\S \ref{S2.4r}. For this, the analytic properties of the conformal map are used in an essential way. As a consequence the replacement of the constant (independent of the particle coordinates) factor $\mathcal K^{\rm post}$ in (\ref{1D}), to be denoted $\mathcal K^{\rm pre}$ is determined according to (\ref{2.8c+}), Proposition \ref{P8x} and Proposition \ref{P8y}.
We mention that, although not stated explicitly in the text, our results on the equilibrium measure, combined with the standard convergence of empirical measures for planar symplectic ensembles and Coulomb gases with Neumann boundary conditions (see e.g. \cite{BC12}), immediately gives rise to the analogous results for the symplectic counterparts of the complex spherical ensembles \cite{May13,MP17,BF23a,BP24}, with the proviso that the charges are on the real axis. Similarly, the evaluation of energies also applies to the symplectic ensembles.

The analogue of the random matrix theory duality relation (\ref{1Ey}) as relevant to the present Coulomb gas on a sphere with two macroscopic external charges is introduced in \S \ref{S3.2}. In \S \ref{S3.2} it is
shown how this permits a quick proof of Proposition \ref{P1.1}, as well as its analogue for a scaling limit of the boundary between the post and pre-critical regimes. In \S \ref{S4t} it is shown how, in the pre-critical regime, the duality leads to an identity between electrostatic energies, one for the two-dimensional sphere system, and the other for a certain constrained one-dimensional log-gas coming from the Jacobi unitary ensemble. In light of the result (\ref{2EA}) for the disk Coulomb gas with a macroscopic charge deduced using the duality (\ref{1Ey}), it may seem that a further natural application of the sphere Coulomb gas duality is to compute higher order terms in the large $N$ expansion in the pre-critical phase. As discussed in Appendix C, this task can in fact
be deferred to a later study involving a Coulomb gas analogue of the averaged characteristic polynomial for truncated unitary matrices with Haar measure, which from a related viewpoint is subject to present attention \cite{BCMS25,BCMS25a}.

 \section{Equivalent planar Coulomb gas and the equilibrium measure}\label{Se2}
  \subsection{Stereographic projection}\label{S2.1}

We would like to change variables in (\ref{1D}) from the  Riemann sphere to the complex plane using 
\begin{equation}\label{SP}
e^{i \phi_l} \tan {\theta_l \over 2} = z_l,
\end{equation}
corresponding to a stereographic projection from the south pole to a plane at the north pole of a sphere of radius $R=1/2$ (see e.g.~\cite[Figure 15.2]{Fo10}).  This task, with $\beta = 2$, $Q_1=0$, $\beta Q_1 N /2 = K$ has been done previously in \cite[Eq.~(2.23)]{FF11},
and with $Q_0=Q_1=0$ and general $\beta$ in \cite[Eq.~(15.128)]{Fo10}. We require that a pair of
points on the sphere $\mathbf a$ and $\mathbf a'$ with 
Cayley-Klein parameters $(u,v)$ and $(u',v')$
 relate to a pair of points $z$ and $z'$ in the complex plane according to
\begin{equation}\label{SP1}
|\mathbf a - \mathbf a'| = | u' v - u v'| = \cos {\theta \over 2} | z - z' |  \cos {\theta' \over 2}.
\end{equation}

 \begin{prop}\label{P2.1}
 With  $z_l = x_l + i y_l$, write $d \mathbf r_l = dx_l dy_l$, and denote by $d \Omega$ the area differential on the sphere of radius $1/2$.
 Map from the sphere to the plane via (\ref{SP}), and let
 $(u_w,v_w)$ be the Cayley-Klein parameters on the sphere relating to the point $w$ in the plane.
 We have
 \begin{multline}\label{13}
  |u_w|^{2Kr}  \prod_{l=1}^N  |u_l|^{2K} |u_w v_l - u_l v_w|^{2r}
  \prod_{1 \le j < k \le N} | u_k v_j - u_j v_k |^2 \, d\Omega_1 \cdots d \Omega_N \\
  =
{1 \over  (1 + |w|^2)^{r(K+N)} }
\prod_{l=1}^N { | w -  z_l |^{2r}  \over   (1 + | z_l|^2)^{K+r+N+1} } \prod_{1 \le j < k \le N} |z_k - z_j|^2 \, d\mathbf r_1 \cdots d \mathbf r_N.
  \end{multline}
  \end{prop}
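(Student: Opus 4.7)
The proof is essentially a bookkeeping exercise once the three elementary ingredients have been assembled, so the plan is to collect those ingredients and then track powers of $(1+|z_l|^2)$ and $(1+|w|^2)$.

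First, I would use the definitions (\ref{1B}) to compute $|u_l|^2=\cos^2(\theta_l/2)$. Since (\ref{SP}) gives $|z_l|^2=\tan^2(\theta_l/2)$, this yields $\cos^2(\theta_l/2)=1/(1+|z_l|^2)$, hence $|u_l|^{2K}=(1+|z_l|^2)^{-K}$ and, by exactly the same calculation, $|u_w|^{2Kr}=(1+|w|^2)^{-Kr}$. Then I would invoke the identity (\ref{SP1}), which is the standard fact that Cayley-Klein chord distances on the sphere pull back to Euclidean distances in the plane up to the conformal factor $\cos(\theta/2)\cos(\theta'/2)$. This gives the two key relations
\begin{equation*}
|u_wv_l-u_lv_w|^{2r}=\frac{|w-z_l|^{2r}}{(1+|w|^2)^r(1+|z_l|^2)^r},\qquad |u_kv_j-u_jv_k|^2=\frac{|z_k-z_j|^2}{(1+|z_k|^2)(1+|z_j|^2)}.
\end{equation*}

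Next I would compute the Jacobian of (\ref{SP}). Writing $x_l=\tan(\theta_l/2)\cos\phi_l$, $y_l=\tan(\theta_l/2)\sin\phi_l$, the Jacobian of $(\theta_l,\phi_l)\mapsto(x_l,y_l)$ equals $\tfrac12\sec^2(\theta_l/2)\tan(\theta_l/2)=\sin\theta_l/(4\cos^4(\theta_l/2))$. With $R_{\mathrm s}=1/2$ we have $d\Omega_l=\tfrac14\sin\theta_l\,d\theta_l\,d\phi_l$, so
\begin{equation*}
d\Omega_l=\cos^4(\theta_l/2)\,d\mathbf r_l=\frac{d\mathbf r_l}{(1+|z_l|^2)^2}.
\end{equation*}

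Finally I would multiply everything and verify the exponent count. The prefactor in $w$ is $(1+|w|^2)^{-Kr}(1+|w|^2)^{-rN}=(1+|w|^2)^{-r(K+N)}$, matching the right-hand side of (\ref{13}). For each index $l$ the power of $(1+|z_l|^2)$ in the denominator is assembled as $K$ (from $|u_l|^{2K}$) plus $r$ (from $|u_wv_l-u_lv_w|^{2r}$) plus $(N-1)$ (the number of Vandermonde factors in which $l$ appears) plus $2$ (from the area element), totalling $K+r+N+1$, exactly as required. This completes the identification with the right-hand side of (\ref{13}).

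There is no genuine obstacle; the only mild friction is keeping the exponent accounting straight and confirming the factor $(1+|z_l|^2)^{-2}$ in the stereographic Jacobian for the radius-$1/2$ sphere. The identity (\ref{SP1}) itself can either be quoted from \cite{Fo10} or checked by a one-line trigonometric computation using (\ref{1B}) and (\ref{SP}).
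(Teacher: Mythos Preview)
Your proof is correct and follows essentially the same route as the paper: both arguments rest on (\ref{SP1}), the identity $|u_l|^2=1/(1+|z_l|^2)$, and the stereographic Jacobian $d\Omega_l=(1+|z_l|^2)^{-2}\,d\mathbf r_l$. The only cosmetic difference is that the paper shortcuts part of the exponent bookkeeping by quoting the $r=0$ case from \cite[Eq.~(2.23)]{FF11} and then inserting the extra factor $\prod_l|w-z_l|^{2r}$ via (\ref{SP1}), whereas you carry out the full count directly; your self-contained version is arguably cleaner.
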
 
  
 \begin{proof}
With
\begin{equation}
w =  e^{i \phi_w} \tan {\theta_w \over 2}
\end{equation}
we see from (\ref{SP1}) that 
\begin{equation}
 \prod_{l=1}^N  | w -  z_l |^{2 r} = {1 \over |u_w|^{2rN} } \prod_{l=1}^N {1 \over |u_l|^{2r}} |u_w v_l - u_l v_w|^{2r}.
 \end{equation}
Combining this with   \cite[Eq.~(2.23) with $K \mapsto K + r$]{FF11} shows
\begin{multline}
  {1 \over |u_w|^{2rN} } \prod_{l=1}^N  |u_l|^{2K} |u_w v_l - u_l v_w|^{2r}
  \prod_{1 \le j < k \le N} | u_k v_j - u_j v_k |^2 \, d\Omega_1 \cdots d \Omega_N \\
 =  \prod_{l=1}^N { | w -  z_l |^{2r}  \over (1 + | z_l|^2)^{K+r+N+1} }  \prod_{1 \le j < k \le N} |z_k - z_j|^2 \, d\mathbf r_1 \cdots d \mathbf r_N \\
  \end{multline}
  Now noting
  $$
|u_w|^2 = {1 \over 1 + |w|^2}
$$
gives (\ref{13}).  
\end{proof}

By rotation invariance of the sphere, without loss of generality we can choose $\phi_w = 0$ so that $w$ is positive real, a condition that will
be assumed henceforth.

Following \cite{CK19}, particularly in the case $Q_0 = Q_1$, we will see that there is some advantage in rotating the sphere so that the two charges are located
symmetrically with respect to the south pole. Specifically, we locate the charge $Q_0 N$ at the point on the sphere with azimuthal angle $\phi = \pi$ and
polar angle ${1 \over 2} (\pi + \theta_w)$, and the charge $Q_1 N$ at the point on the sphere with azimuthal angle $\phi = 0$ and polar angle ${1 \over 2} (\pi + \theta_w)$. Denoting the Cayley-Klein parameters of the latter by $(u_{w_{\rm s}},v_{w_{\rm s}})$, then  the Cayley-Klein parameters
of the former are $(iu_{w_{\rm s}},-iv_{w_{\rm s}})$. The effect of this rotation in (\ref{1D}) is to replace the factor $|u_l|^{\beta Q_0 N} | u_l v_{w_{\rm s}} - u_{w_{\rm s}}  v_l |^{\beta Q_1 N}$
by 
$
| u_l v_{w_{\rm s}} + u_{w_{\rm s}}  v_l |^{\beta Q_0 N} | u_l v_{w_{\rm s}} - u_{w_{\rm s}}  v_l |^{\beta Q_1 N},
$
and to replace $|u_w|^{\beta Q_0 Q_1 N^2}$ by $|2 u_{w_s} v_{w_s} |^{\beta Q_0 Q_1 N^2}$. With this done, and specialising to 
\begin{equation}\label{sf}
\beta = 2, \quad
Q_1 N = r, \quad Q_0 N = K, \quad R=1/2
\end{equation}
in (\ref{1D}), we calculate using the working of the proof of Proposition \ref{P2.1} that in place of (\ref{13}) is the change of variables
formula 
 \begin{multline}\label{13d}
 |2 u_{w_s} v_{w_s} |^{2Kr}  \prod_{l=1}^N | u_l v_{w_{\rm s}} + u_{w_{\rm s}}  v_l |^{2r}  | u_l v_{w_{\rm s}} - u_{w_{\rm s}}  v_l |^{2K}
  \prod_{1 \le j < k \le N} | u_k v_j - u_j v_k |^2 \, d\Omega_1 \cdots d \Omega_N \\
  =
{|2 w_{\rm s} |^{2 K r} \over  (1 + |w_{\rm s}|^2)^{Kr + (r + K)N} }
\prod_{l=1}^N { | w_{\rm s} +  z_l |^{2r}    | w_{\rm s} -  z_l |^{2K}  \over   (1 + | z_l|^2)^{K+r+N+1} } \prod_{1 \le j < k \le N} |z_k - z_j|^2 \, d\mathbf r_1 \cdots d \mathbf r_N,
  \end{multline}
where $w_{\rm s} > 1$ is the stereographic projection to the complex plane of the point on the sphere with Cayley-Klein parameters $(u_{w_{\rm s}},v_{w_{\rm s}})$.

\subsection{Equilibrium measure for symmetrically placed charges}
Firstly one notes that general considerations can be used to establish the existence and uniqueness of the equilibrium measure in the present setting \cite{LD21}.
From the discussion below Proposition \ref{P1.1} it suffices to consider the situation when the inequality (\ref{1F}) is violated. It is convenient to consider the
droplet in its form after a stereographic projection from the sphere to the plane. We have two formalisms, corresponding to either (\ref{13}) of (\ref{13d}).

In analogy with (\ref{1H}), in the case (\ref{13})  the boundary of the projected droplet,
$\tilde{\Omega}_{\rm d}$ say, can be determined as the solution of the equation
\begin{equation}\label{E1}
- (Q_0 + Q_1 + 1) \log (1 + |z|^2) + Q_1 \log |w - z|^2 + \int_{\tilde{\Omega}_{\rm d}} \mu(z') \log | z - z'|^2 \, d^2z' = C, \quad z \in \tilde{\Omega}_{\rm d},
\end{equation}
where $\mu(z)$ is the normalised density in the droplet and $C$ is independent of $z$.
Application of the operator $\partial_z \partial_{\bar{z}}$, and use of the fact that $\partial_z \partial_{\bar{z}}(- \log | z - z'|^2) = - {\pi \over 2} \delta(z - z')$ 
(this can be viewed as the solution of the Poisson equation in the complex plane), tells us that
\begin{equation}\label{E2}
\mu(z) = {Q_0 + Q_1 + 1 \over \pi} {1 \over (1 + |z|^2)^2}.
\end{equation}
This is in keeping with the density on the sphere inside the droplet being the constant $(Q_0 + Q_1 + 1)/\pi$; the factor $1/(1 + |z|^2)^2$ can be viewed
as the Jacobian relating to mapping from the area differential on the sphere to the plane under stereographic projection.
In the case of (\ref{13d}), replacing (\ref{E1}) is the condition
\begin{multline}\label{E1d}
- (Q_0 + Q_1 + 1) \log (1 + |z|^2) + Q_0 \log |w_{\rm s} + z|^2 + Q_1 \log |w_{\rm s} - z|^2 \\
+ \int_{\tilde{\Omega}_{\rm d_s}} \mu(z') \log | z - z'|^2 \, d^2z' = C_{\rm s}, \quad z \in \tilde{\Omega}_{\rm d_s},
\end{multline}
with the reasoning leading to (\ref{E2}) again applying, but with the droplet  $\Omega_{\rm d_s}$ distinct from that in (\ref{E1}).

As previously mentioned, the formalism \eqref{E1d}, with symmetrically located charges, offers certain advantages in computations. On the other hand, a particular advantage of the formalism \eqref{E1} is that it allows one to recover \eqref{1Ha} by taking a properly scaled limit. More precisely, let 
$$
W(z;w):=  (Q_0 + Q_1 + 1) \log (1 + |z|^2) - Q_1 \log |w - z|^2
$$
denote the expression (up to sign) appearing on the LHS of \eqref{E1}. Then, one can observe that 
\begin{equation}\label{US}
\lim_{ Q_0 \to \infty } \bigg(   W(z;w) \Big|_{ z \mapsto \frac{z}{ \sqrt{ Q_0} } , w\mapsto \frac{w}{ { Q_0} } } -Q_1 \log Q_0 \bigg) =  |z|^2 - Q_1\log|w-z|^2,
\end{equation}
where the RHS corresponds to twice of the expression appearing on the LHS of \eqref{1Ha} (with $Q_1 \to Q$, $w \to a$ and up to sign). Since the additive constant in such an external potential does not affect the definition of the Coulomb system, this particular scaled limit should recover the droplet described in the previous literature in terms of \eqref{1Hb}. That this is indeed the case is verified at the end of \S \ref{S2.3u}. 

\subsection*{Symmetrically placed charges with $Q_0 = Q_1$}
The main result of \cite{CK19} is that in the case $Q_0 = Q_1$ the droplet in the case that the inequality (\ref{1F}) is violated is a particular ellipse. (It is pointed out in \cite{CK22} that this same result can be deduced from \cite[Example 3]{GT11}.)
The condition for (\ref{1F}) to be violated is \cite{CK19}
\begin{equation}\label{W1}
Q_0 > {1 \over w_{\rm s}^2 - 1}, \quad (w_{\rm s} > 1),
\end{equation}
as can be checked by setting $Q_0 = Q_1$, $\cos^2 {\theta_w \over 2} = \sin^2 \theta_{w_{\rm s}} = (2 w_{\rm s}/(1 + w_{\rm s}^2))^2$.
The derivation does not proceed directly from (\ref{E1d}) but rather via the introduction of a certain dual weighted
energy problem. It is instructive to show how the result of \cite[Th.~1.1 with the role of $x$ and $y$ interchanged]{CK19}  can be obtained directly from (\ref{E1d}).

\begin{prop}\label{P2.2x}
Let the parameters $Q_0, w_{\rm s}$ be as in (\ref{W1}). Consider the conformal mapping
\begin{equation}\label{E2f}
\zeta_{\rm s}(u) = a_1 u + {a_2 \over u}, \quad a_1 = {1 \over 2} ({c_2} - {c_1} ), \: \: a_2 = {1 \over 2} ( {c_2} + {c_1}),
\end{equation}
where
\begin{equation}\label{E2f+}
c_1^2 = {w_{\rm s}^2 + 1 \over 2 (w_{\rm s}^2 Q_0 - Q_0 - 1)}, \quad c_2^2 = {w_{\rm s}^2  - 1 \over 2 (w_{\rm s}^2 Q_0 + Q_0 +1)},
\end{equation}
mapping the interior of the unit disk to the exterior of the ellipse 
\begin{equation}\label{E2g}
{x^2 \over c_2^2} + {y^2 \over c_1^2}  = 1.
\end{equation}
With $Q_0 = Q_1$, $\tilde{\Omega}_{\rm d_s}$ in (\ref{E1d}) is given by the interior of this ellipse.
\end{prop}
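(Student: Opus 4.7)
The approach is to verify the variational identity \eqref{E1d} by constructing in closed form the Cauchy transform of the candidate equilibrium measure $d\mu(z) = \frac{2Q_0+1}{\pi(1+|z|^2)^2}\mathbbm{1}_E(z)\,d^2z$ via the conformal map $\zeta_{\rm s}$, where $E$ denotes the interior of the proposed ellipse. Applying $\partial_z$ to \eqref{E1d} with $Q_0=Q_1$ converts it into the pointwise identity
\[
G(z) \,=\, \frac{(2Q_0+1)\bar z}{1+|z|^2} - \frac{2Q_0 z}{z^2 - w_{\rm s}^2}, \qquad z \in E,
\]
where $G(z):=\int_E d\mu(z')/(z-z')$. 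Since $\partial_{\bar z}(\bar z/(1+|z|^2)) = 1/(1+|z|^2)^2 = \pi\mu/(2Q_0+1)$, both sides share the same $\partial_{\bar z}$ inside $E$; their difference is therefore holomorphic on $E$, so it suffices to check equality on $\partial E$.

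To realise this boundary matching, we extend $G$ to the exterior of $E$ by analytically continuing the right-hand side. On $\partial E$, writing $z=\zeta_{\rm s}(u)$ with $|u|=1$, one has $\bar z = \zeta_{\rm s}(1/u) = a_1/u + a_2 u =: S(z)$, the Schwarz function of the ellipse. We set
\[
G(z) \,:=\, \frac{(2Q_0+1)\,S(z)}{1+z\,S(z)} - \frac{2Q_0 z}{z^2 - w_{\rm s}^2}, \qquad z \in \C\setminus\overline E,
\]
with $S$ continued to the exterior via the branch $u(z)\to 0$ as $z\to\infty$. For this to coincide with the Cauchy transform of $\mu$ outside $E$, we require (i) $G$ to extend holomorphically across all would-be poles in $\C\setminus\overline E$, and (ii) $G(z)\to 0$ as $z\to\infty$. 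Condition (ii) follows from a Laurent expansion at $u=0$: using $\zeta_{\rm s}(u)\sim a_2/u$ and $S(\zeta_{\rm s}(u))\sim a_1/u$, the two fractions contribute $(2Q_0+1)u/a_2$ and $-2Q_0\, u/a_2$ respectively, summing to $u/a_2=1/z+O(1/z^2)$; this also yields $\int\mu=1$ automatically.

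Condition (i) is what pins down $(c_1,c_2)$. Inside $|u|<1$ the candidate has potential poles at (a) zeros of $1+\zeta_{\rm s}(u)\zeta_{\rm s}(1/u)=1+a_1^2+a_2^2+a_1a_2(u^2+u^{-2})$, i.e.\ $u^2=\xi_\pm$ where $\xi_\pm$ are the roots of $a_1a_2\xi^2+(1+a_1^2+a_2^2)\xi+a_1a_2=0$, and (b) zeros of $\zeta_{\rm s}(u)^2-w_{\rm s}^2$, i.e.\ roots of $a_1u^2\mp w_{\rm s}u+a_2=0$. A direct check from \eqref{E2f+} shows $c_1>c_2$ under \eqref{W1}, hence $a_1a_2=(c_2^2-c_1^2)/4<0$; consequently $\xi_\pm$ are real and positive with $\xi_+\xi_-=1$, and $v_0:=\sqrt{\xi_-}\in(0,1)$ is real, matching the real poles at $\pm w_{\rm s}$. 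Pole location demands $\zeta_{\rm s}(v_0)=w_{\rm s}$, and matching residues at $u=v_0$ supplies a second relation involving $Q_0$. The main obstacle is the algebra: eliminating $v_0$ via $(a_1v_0+a_2/v_0)^2=w_{\rm s}^2$, combining with the defining quadratic for $\xi_-$, and imposing the residue condition yields a coupled system that simplifies to a linear system in $(c_1^2,c_2^2)$ whose unique solution is \eqref{E2f+}. Once this is verified, both (i) and (ii) hold by construction, and the argument of the first paragraph closes the proof.
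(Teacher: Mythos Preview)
Your strategy is essentially the paper's own. The paper packages the exterior Cauchy transform as a function $H_{\rm s}$ (Proposition~\ref{P2.2y}), then writes $H_{\rm s}(\zeta_{\rm s}(u))$ exactly as your $G$, imposes analyticity for $|u|<1$, and reads off the same two constraints: pole location $\zeta_{\rm s}(v_0)=w_{\rm s}$ (their \eqref{4v}) and residue matching (their \eqref{B5aY}). Your normalisation check via the $u\to0$ Laurent expansion is also what the paper does in Remark~\ref{R1}.

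The only point where you diverge is the closing algebra. You assert that eliminating $v_0$ ``simplifies to a linear system in $(c_1^2,c_2^2)$''; this is not how the paper proceeds and is not obviously true. The paper instead introduces $x=-a_1/a_2$, obtains from pole location the relation \eqref{cc1} for $a_2^2$ in terms of $x$, substitutes into the residue condition \eqref{B5aZ1}, and finds a \emph{quartic} in $x$ which factors as $x\cdot(x-1/w_{\rm s}^2)$ times the quadratic \eqref{cc2}; the relevant root of \eqref{cc2} then fixes $a_1,a_2$. If you want to keep your formulation, a cleaner route is: from $\zeta_{\rm s}(v_0)=w_{\rm s}$ and $\zeta_{\rm s}(1/v_0)=-1/w_{\rm s}$ one gets $c_2(v_0+1/v_0)=w_{\rm s}-1/w_{\rm s}$ and $c_1(1/v_0-v_0)=w_{\rm s}+1/w_{\rm s}$, which \emph{are} linear in $c_1^2,c_2^2$ once $V:=v_0+1/v_0$ is known; the residue condition then becomes a single equation for $V$. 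Either way, the algebraic step needs to be carried out rather than asserted.
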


Our method of proof requires first establishing an intermediate result.

\begin{prop}\label{P2.2y}
In the case $Q_0 = Q_1$, for $z \in \tilde{\Omega}_{\rm d_s}$ we have
\begin{equation}\label{E4g}
Q_0 \bigg ( {1 \over z +w_{\rm s}} + {1 \over z - w_{\rm s}} \bigg ) + {2 Q_0  + 1 \over 2 \pi i} \int_{\partial \tilde{\Omega}_{\rm d_s}} {1 \over z - u} {\bar{u} \over 1 + u \bar{u}} \, du =0.
\end{equation}
Consequently, there exists a function $H_{\rm s}(u)$ analytic  in $\mathbb C \backslash \tilde{\Omega}_{d_{\rm s}}$ and on the boundary $\partial \tilde{\Omega}_{d_{\rm s}}$,
with the property that $H_{\rm s}(u) \to 0$ as $u \to \infty$,
and such that $w_{\rm s}, -w_{\rm s} \notin  \tilde{\Omega}_{\rm d_s}$
\begin{equation}\label{E5g+}
H_{\rm s}(u) = {\bar{u} \over 1 + u \bar{u}} - {Q_0 \over 2 Q_0  + 1} \bigg ( {1 \over u - w_{\rm s}} + {1 \over u + w_{\rm s}} \bigg ),
 \quad u \in \partial \tilde{\Omega}_{\rm d_s}.
\end{equation}
\end{prop}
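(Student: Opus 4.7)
The plan is to derive (\ref{E4g}) by applying $\partial_z$ to the equilibrium condition (\ref{E1d}) specialised to $Q_0=Q_1$, converting the resulting bulk integral to a boundary integral via Stokes' theorem, and then to construct $H_{\rm s}$ as a Cauchy-type transform whose interior boundary value is pinned by (\ref{E4g}) and whose exterior value is read off via the Plemelj--Sokhotski jump relation.

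First, differentiating (\ref{E1d}) with $Q_0=Q_1$ in $z$ and using the explicit density (\ref{E2}) yields the variational identity
\begin{equation*}
\frac{Q_0}{z+w_{\rm s}}+\frac{Q_0}{z-w_{\rm s}}+\int_{\tilde{\Omega}_{\rm d_s}}\frac{\mu(z')}{z-z'}\,d^2z' \;=\; (2Q_0+1)\,\frac{\bar z}{1+|z|^2}, \qquad z\in\tilde{\Omega}_{\rm d_s}.
\end{equation*}
The key technical observation is the antiderivative identity
\begin{equation*}
\partial_{\bar z'}\!\left[\frac{1}{z-z'}\cdot\frac{\bar z'}{1+z'\bar z'}\right] \;=\; \frac{1}{(z-z')(1+|z'|^2)^2},
\end{equation*}
which lets one rewrite $\mu(z')/(z-z')$ as $\partial_{\bar z'}$ of an explicit function. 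Because $z$ lies inside the droplet, I would excise a small disk around the pole $z'=z$, apply Stokes' theorem on the complement, and take the limit. The outer boundary produces the Cauchy contour integral appearing in (\ref{E4g}), while the small disk contributes a residue of exactly $(2Q_0+1)\bar z/(1+|z|^2)$; the antiholomorphic terms cancel and (\ref{E4g}) emerges.

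For the second assertion, I would introduce the Cauchy transform
\begin{equation*}
\mathcal{G}(z) \;:=\; \frac{1}{2\pi i}\oint_{\partial\tilde{\Omega}_{\rm d_s}}\frac{1}{z-u}\cdot\frac{\bar u}{1+u\bar u}\,du,
\end{equation*}
which is analytic on $\mathbb{C}\setminus\partial\tilde{\Omega}_{\rm d_s}$ and vanishes at infinity by inspection. Identity (\ref{E4g}) identifies the interior boundary value as $\mathcal{G}_+(u) = -\tfrac{Q_0}{2Q_0+1}\bigl(\tfrac{1}{u+w_{\rm s}}+\tfrac{1}{u-w_{\rm s}}\bigr)$. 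The Plemelj--Sokhotski relation (with the sign convention $1/(z-u)=-1/(u-z)$) gives the jump $\mathcal{G}_-(u)-\mathcal{G}_+(u)=\bar u/(1+u\bar u)$, so the restriction $H_{\rm s}:=\mathcal{G}|_{\mathbb{C}\setminus\overline{\tilde{\Omega}_{\rm d_s}}}$ assumes precisely the boundary value (\ref{E5g+}). To upgrade continuity to analyticity across $\partial\tilde{\Omega}_{\rm d_s}$, I would invoke analyticity of the droplet boundary (an ellipse in the symmetric case, as described in Proposition \ref{P2.2x}): on an analytic curve $\bar u$ coincides with the Schwarz function $S(u)$, and substituting $S$ for $\bar u$ in (\ref{E5g+}) analytically extends $H_{\rm s}$ across the boundary into a full one-sided neighbourhood. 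The non-inclusion $\pm w_{\rm s}\notin\tilde{\Omega}_{\rm d_s}$ is a geometric fact: the external point charges sit at the centres of the charge-depleted spherical caps, hence outside the droplet, which also ensures the poles of (\ref{E5g+}) lie outside the region where analyticity is claimed.

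The main obstacle I anticipate is the careful bookkeeping of the Stokes step: one must extract exactly the residue $(2Q_0+1)\bar z/(1+|z|^2)$ with the correct sign, since it is precisely the cancellation of this antiholomorphic residue with the right-hand side of the differentiated equilibrium condition that turns (\ref{E4g}) into a purely holomorphic statement and makes the Plemelj--Sokhotski construction of $H_{\rm s}$ go through.
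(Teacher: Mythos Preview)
Your derivation of (\ref{E4g}) matches the paper's: differentiate (\ref{E1d}), use $\partial_{\bar u}\bigl(\bar u/(1+u\bar u)\bigr) = (1+|u|^2)^{-2}$, and apply Cauchy--Pompeiu (your ``excise a disk and apply Stokes'' is exactly this). For the construction of $H_{\rm s}$, your Plemelj--Sokhotski route and the paper's route---rewrite the rational term $\frac{1}{u-w_{\rm s}}+\frac{1}{u+w_{\rm s}}$ via Cauchy's integral formula for interior $z$ (valid since $\pm w_{\rm s}\notin\tilde\Omega_{\rm d_s}$), combine with (\ref{E4g}) to obtain $\oint_{\partial\tilde\Omega_{\rm d_s}} H_{\rm s}(u)/(u-z)\,du=0$, then cite \cite{LD21,Ri72} for the analytic extension---are two standard, equivalent packagings of the same idea.

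The one genuine gap is your appeal to Proposition~\ref{P2.2x} for the analyticity of $\partial\tilde\Omega_{\rm d_s}$ (and hence of its Schwarz function). In the paper's architecture Proposition~\ref{P2.2y} is the intermediate lemma used to \emph{prove} Proposition~\ref{P2.2x}, so invoking the ellipse description here is circular. The paper avoids this by citing the general results of \cite{LD21,Ri72}, which deliver the analytic extension to the closed exterior without any advance knowledge of the droplet shape. You should do the same, or else invoke a priori free-boundary regularity for logarithmic equilibrium problems to justify analyticity of $\partial\tilde\Omega_{\rm d_s}$ independently of Proposition~\ref{P2.2x}.
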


\begin{proof}
Acting on (\ref{E1d}) in the case $Q_0 = Q_1$ with $\partial_z$ gives
\begin{equation}\label{E3g+}
- {\bar{z} \over 1 + |z|^2} + {Q_0 \over 2 Q_0  + 1} \bigg ( {1 \over w_{\rm s} - z} + {1 \over w_{\rm s} + z} \bigg )
 + {1 \over \pi} \int_{\tilde{\Omega}_{\rm d_s}} {1 \over z - z'} \,  {d^2 z' \over (1 + |z'|^2)^2}
= 0, \quad z \in \tilde{\Omega}_{\rm d_s}.
\end{equation}
Inside the integral, with $z'=u$ for convenience, we observe
\begin{equation}\label{2.11a+}
{1 \over (1 + |u|^2)^2} = \partial_{\bar{u}} \Big (  {\bar{u} \over 1 + u \bar{u}}  \Big ).
\end{equation}
Use of the Cauchy-Pompeiu formula then gives
\begin{equation}\label{2.19}
{1 \over \pi} \int_{\tilde{\Omega}_{\rm d_s}} {1 \over z - u}  \partial_{\bar{u}} \Big (  {\bar{u} \over 1 + u \bar{u}}  \Big ) \, d^2 u =
{\bar{z} \over 1 +  z \bar{z}} -{1 \over 2 \pi i}  \int_{\partial \tilde{\Omega}_{\rm d_s}} {1 \over u - z} {\bar{u} \over 1 + u \bar{u}} \, du.
\end{equation}
This can be substituted for the final term on the LHS of (\ref{E3g+}) to obtain (\ref{E4g}).

In relation to
the statement relating to the function $H(u)$, one begins by noting from Cauchy's theorem that for $w_{\rm s}, -w_{\rm s} \notin  \tilde{\Omega}_{d_{\rm s}}$
and $z \in \tilde{\Omega}_{d_{\rm s}}$,
\begin{equation}\label{2.18+}
Q_0 \bigg ( {1 \over z +w_{\rm s}} + {1 \over z - w_{\rm s}} \bigg )  = {Q_0  \over 2 \pi i}  \int_{\partial \tilde{\Omega}_{\rm d_s} } {1 \over u - z}  
\bigg ( {1 \over u +w_{\rm s}} + {1 \over u - w_{\rm s}} \bigg )  \, du.
\end{equation}
Now let $H_{\rm s}(u)$ defined on $\partial \tilde{\Omega}_{\rm d_s}$ by (\ref{E5g+}).
Substituting (\ref{2.18+}) in  (\ref{E4g}) then implies,
$$
0 =  {1 \over 2 \pi i}  \int_{\partial \tilde{\Omega}_{\rm d_s}} {H_{\rm s}(u) \over u - z}    \, du.
$$
Reasoning given in \cite{LD21}, which can be traced back to the pioneering work of Richardson
\cite{Ri72},  allows it to be concluded from this that $H_{\rm s}(u)$ can be extended to an analytic
function outside of $\partial \tilde{\Omega}_{\rm d_s}$, decaying at infinity, while maintaining its stated boundary value.
\end{proof}

\begin{remark}
In the case $z$ outside of the droplet, application of the Cauchy-Pompeiu formula gives
$$
{1 \over \pi} \int_{\tilde{\Omega}_{\rm d_s}} {1 \over z - u}  \partial_{\bar{u}} \Big (  {\bar{u} \over 1 + u \bar{u}}  \Big ) \, d^2 u =
 -{1 \over 2 \pi i}  \int_{\partial \tilde{\Omega}_{\rm d_s}} {1 \over u - z} {\bar{u} \over 1 + u \bar{u}} \, du;
$$
cf.~(\ref{2.19}). Now substituting for ${\bar{u} \over 1 + u \bar{u}}$ in the integrand on the RHS according to
(\ref{E5g+}), then applying Cauchy's theorem from the viewpoint of the exterior of the droplet shows
\begin{equation}\label{2.20a}
{1 \over \pi} \int_{\tilde{\Omega}_{\rm d_s}} {1 \over z - z'} \,  {d^2 z' \over (1 + |z'|^2)^2} = H_{\rm s}(z), \quad z \notin \tilde{\Omega}_{\rm d_s}.
\end{equation}
In words we thus have that $H_{\rm s}(z)$ for values of $z$ outside of the droplet is equal to the Stieltjes (also called Cauchy) transform of the droplet.
\end{remark}

Let $\zeta_{\rm s}(u)$ be the conformal map from the interior of the unit disk to the exterior of the droplet. Provided this conformal map is real for $u$ real, we know from \cite[Lemma 2]{EM07} that
 \begin{equation}\label{B2x}
 H_{\rm s}(\zeta_{\rm s}(u)) = {\zeta_{\rm s}(1/u) \over 1 + \zeta_{\rm s}(u) \zeta_{\rm s}(1/u)} - {Q_0 \over 2 Q_0 + 1} \bigg ({1 \over \zeta_{\rm s}(u) + w_{\rm s}}
 + {1 \over \zeta_{\rm s}(u) - w_{\rm s}} \bigg ),
  \quad |u|<1.
 \end{equation}
 The requirement that (\ref{B2x}) be analytic for $|u|<1$ implies equations for $a_1,a_2$ in (\ref{E2f}), which are relevant to deducing (\ref{E2g}).
 On this point, substituting (\ref{E2f}) in (\ref{B2x}) shows that the first term on the RHS has poles at the zeros of
 \begin{equation}\label{4}
 u^4 + {(1 + a_1^2 + a_2^2) \over a_1 a_2} u^2 + 1 = 0.
  \end{equation}
  In the variable $u^2$ it is straightforward to check that the solutions of (\ref{4}) are both real and reciprocals of each other. Furthermore, provided $a_1,a_2$ have
  opposite signs they are both positive. Assuming this, the four solutions of (\ref{4}) can be written $u_0$, $1/u_0$, $-u_0$, $-1/u_0$ for a particular
  $0 < u_0 < 1$.

   For the RHS of (\ref{B2x}) to be analytic for $|u|  < 1$ requires that the poles at $u= \pm u_0$  cancel  the poles of the second term. 
   In relation to their location, this implies
 \begin{equation}\label{4v}   
  \zeta_{\rm s}(u_0) = w_{\rm s}, \qquad    \zeta_{\rm s}(-u_0) = -w_{\rm s}.
  \end{equation} 
  An immediate consequence of this, following too from the
  fact that the first term has a pole at $u_0$, is  that $ \zeta_{\rm s}\Big (  {1 \over u_0} \Big ) =  -{1 \over w_{\rm s}}$. From this equation,
  and the first equation in (\ref{4v}),
 we can use the functional form of $\zeta_{\rm s}(u)$ in (\ref{E2f}) to 
 solve for $u_0$ to obtain
 \begin{equation}\label{4u} 
 u_0 = { w_{\rm s} a_1 + {a_2 \over w_{\rm s}} \over a_1^2 - a_2^2}.
 \end{equation} 
 Substituting back in the first equation of (\ref{4v}) implies
  \begin{equation}\label{4u+} 
  (a_1^2 - a_2^2)^2 + a_1^2 + a_2^2 + a_1 a_2 \Big ( {1 \over w_{\rm s}^2} + w_{\rm s}^2 \Big ) = 0.
 \end{equation}   
 Introducing the variable
 \begin{equation}\label{cc}
 x=-{a_1 \over a_2}, \qquad a_2 > - a_1 > 0
 \end{equation} 
 (we know that $a_1$ and $a_2$ have opposite signs from two lines below (\ref{4}); the assumption on their order with respect to magnitude gives $0 < x < 1$, which recalling $w_{\rm s} > 1$ is a necessary condition for $0<u_0<1$ in (\ref{4u})), this can be solved for $a_2^2$ to give
  \begin{equation}\label{cc1} 
  a_2^2 = {x(1 + w_{\rm s}^4) - (1 + x^2)w_{\rm s}^2 \over (1 - x^2)^2 w_{\rm s}^2}.
   \end{equation} 
   For $a_2^2>0$, again recalling $w_{\rm s} > 1$, we have the further constraint ${1 \over w_{\rm s}^2} < x$.

 We require too that the  residues of the poles in the two terms cancel, which as we will see implies a further equation for $a_1,a_2$, or more specifically for $a_2^2,x$.

  \begin{prop}
 We have
  \begin{equation}\label{B5aY}  
  {Q_0 \over 1 + 2 Q_0 } =   { u_0^2 \zeta_{\rm s}'(u_0) \over u_0^2 \zeta_{\rm s}'(u_0) +  w_{\rm s}^2  \zeta'(1/u_0)},
  \end{equation}
  which implies
  \begin{equation}\label{B5aZ1}
  Q_0 w_s^2 \Big ( a_2^2(x^4 - 1) + x w_s^2 - 1 \Big ) =
  x (1 + Q_0) \Big ( w_s^2 x - 1 + 2 a_2^2 (x^2 - 1) \Big ).
  \end{equation}
  \end{prop}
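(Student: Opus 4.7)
The plan is to read off \eqref{B5aY} as the residue-cancellation condition at $u=u_0$ complementing the pole-location condition \eqref{4v}, and then convert it to \eqref{B5aZ1} by direct substitution of $\zeta_{\rm s}(u)=a_1u+a_2/u$ and elimination of $u_0^2$ using \eqref{4u}.

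First I would compute the residue of the first term on the RHS of \eqref{B2x} at the simple pole $u=u_0$. At this point $\zeta_{\rm s}(u_0)=w_{\rm s}$ and (as derived just above \eqref{4u}) $\zeta_{\rm s}(1/u_0)=-1/w_{\rm s}$, so the denominator $1+\zeta_{\rm s}(u)\zeta_{\rm s}(1/u)$ vanishes there. Its derivative in $u$ equals
\[
\zeta_{\rm s}'(u)\zeta_{\rm s}(1/u)-\frac{1}{u^2}\zeta_{\rm s}(u)\zeta_{\rm s}'(1/u),
\]
which at $u=u_0$ simplifies to $-\bigl(u_0^2\zeta_{\rm s}'(u_0)+w_{\rm s}^2\zeta_{\rm s}'(1/u_0)\bigr)/(w_{\rm s}u_0^2)$. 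Combined with the numerator value $\zeta_{\rm s}(1/u_0)=-1/w_{\rm s}$, the residue equals $u_0^2/\bigl(u_0^2\zeta_{\rm s}'(u_0)+w_{\rm s}^2\zeta_{\rm s}'(1/u_0)\bigr)$. The residue of the second term $-\frac{Q_0}{2Q_0+1}\bigl(1/(\zeta_{\rm s}(u)-w_{\rm s})+1/(\zeta_{\rm s}(u)+w_{\rm s})\bigr)$ at $u=u_0$ is $-\frac{Q_0}{(2Q_0+1)\zeta_{\rm s}'(u_0)}$ (the second fraction is regular at $u_0$). Analyticity of $H_{\rm s}(\zeta_{\rm s}(u))$ inside $|u|<1$ forces the sum of these two residues to vanish, which is exactly \eqref{B5aY}. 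The pole at $u=-u_0$ gives the same equation by the symmetry $\zeta_{\rm s}(-u)=-\zeta_{\rm s}(u)$.

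For \eqref{B5aZ1}, I would substitute $\zeta_{\rm s}'(u)=a_1-a_2/u^2$ to get
\[
u_0^2\zeta_{\rm s}'(u_0)=a_1u_0^2-a_2,\qquad w_{\rm s}^2\zeta_{\rm s}'(1/u_0)=w_{\rm s}^2(a_1-a_2u_0^2).
\]
Cross-multiplying \eqref{B5aY} in the form $(2Q_0+1)u_0^2\zeta_{\rm s}'(u_0)=Q_0\bigl(u_0^2\zeta_{\rm s}'(u_0)+w_{\rm s}^2\zeta_{\rm s}'(1/u_0)\bigr)$ and rearranging yields
\[
(1+Q_0)(a_1u_0^2-a_2)=Q_0w_{\rm s}^2(a_1-a_2u_0^2).
\]
Now I eliminate $u_0^2$. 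From \eqref{4u}, $u_0^2=(w_{\rm s}a_1+a_2/w_{\rm s})^2/(a_1^2-a_2^2)^2$; introducing $x=-a_1/a_2$ as in \eqref{cc} this simplifies to $u_0^2=(1-w_{\rm s}^2x)^2/\bigl(w_{\rm s}^2a_2^2(x^2-1)^2\bigr)$. Substituting and clearing denominators, then collecting powers of $x$ and $w_{\rm s}$, should produce \eqref{B5aZ1}.

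The only real obstacle is bookkeeping in the last step: one must be careful that the factors of $(x^2-1)$ from $u_0^2$ combine correctly with the factors $a_1u_0^2-a_2$ and $a_1-a_2u_0^2$ on each side, ending up with the compact factorization $a_2^2(x^4-1)+xw_{\rm s}^2-1$ on the left and $w_{\rm s}^2x-1+2a_2^2(x^2-1)$ on the right. No new identities are needed beyond \eqref{4u} and \eqref{cc}; everything else is algebraic manipulation.
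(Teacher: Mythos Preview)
Your derivation of \eqref{B5aY} via residue cancellation is correct and is essentially the paper's argument (the paper phrases it as multiplying \eqref{B2x} by $\zeta_{\rm s}(u)-w_{\rm s}$ and applying L'H\^opital, which amounts to the same computation).

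The gap is in the passage to \eqref{B5aZ1}. Squaring \eqref{4u} gives
\[
u_0^2=\frac{(1-w_{\rm s}^2x)^2}{w_{\rm s}^2a_2^2(x^2-1)^2},
\]
and substituting this into $(1+Q_0)(a_1u_0^2-a_2)=Q_0w_{\rm s}^2(a_1-a_2u_0^2)$ produces, after clearing denominators, a relation of degree $5$ in $x$, namely
\[
(1-w_{\rm s}^2x)^2\bigl[(1+Q_0)x - Q_0w_{\rm s}^2\bigr]=w_{\rm s}^2a_2^2(x^2-1)^2\bigl[Q_0w_{\rm s}^2x-(1+Q_0)\bigr].
\]
This is \emph{not} \eqref{B5aZ1}; the two polynomial identities in $(x,a_2^2,w_{\rm s}^2,Q_0)$ agree only modulo \eqref{cc1} (equivalently \eqref{4u+}). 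So the claim that ``no new identities are needed beyond \eqref{4u} and \eqref{cc}'' is wrong: an extra input is required.

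What the paper does instead is to use the first equation of \eqref{4v} once more, in the form $a_1u_0^2-w_{\rm s}u_0+a_2=0$, to \emph{linearise} $u_0^2$ as $u_0^2=(w_{\rm s}u_0-a_2)/a_1$, and only then substitute \eqref{4u} for $u_0$. This produces $u_0^2$ as an expression that is degree $1$ in $w_{\rm s}^2$ and in $a_2^2$, so the resulting equation stays at degree $4$ in $x$ and is exactly \eqref{B5aZ1}. In your scheme the quickest fix is to replace the squared form of $u_0^2$ by this linearised one.
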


  \begin{proof}
  For (\ref{B2x}) to be analytic at $u = u_0$ we require that
    multiplying by $\zeta_{\rm s}(u) - w_{\rm s}$ and taking the limit $u \to u_0$ gives 0, which implies
  \begin{equation}\label{B5g}    
  {Q_0 \over 1 + 2 Q_0 } = \lim_{u \to u_0}
  {\zeta_{\rm s}(1/u) ( \zeta_{\rm s}(u) - w_{\rm s}) \over 1 + \zeta_{\rm s}(u) \zeta_{\rm s}(1/u)}.
   \end{equation}  
   Computing the limit using L'H\^opital's rule, making use too of (\ref{4v}), we deduce (\ref{B5aY}).
   
   In relation to   (\ref{B5aZ}), minor manipulation of (\ref{B5aY}) gives
    \begin{equation}\label{D7}
   Q_0 w_{\rm s}^2 \zeta_{\rm s}'(1/u_0) = (1 + Q_0) u_0^2  \zeta_{\rm s}'(u_0).
    \end{equation}    
   The definition of $\zeta_{\rm s}(u)$ in (\ref{E2f}) gives
     \begin{equation}\label{D7a}
    \zeta_{\rm s}'(1/u_0) = a_1  - a_2 u_0^2, \quad u_0^2   \zeta_{\rm s}'(u_0) = a_1 u_0^2 - a_2.
    \end{equation}     
    Further, the definition of $\zeta_{\rm s}(u)$ in (\ref{E2f}) and the first equation in (\ref{4v}) gives
    that
   \begin{equation}\label{D7b}   
    u_0^2 = {1 \over a_1} (w_{\rm s} u_0 - a_2). 
     \end{equation} 
     In this we substitute (\ref{4u}), with the result substituted in (\ref{D7a}), and these results
     then substituted in (\ref{D7}). Simple manipulation then gives  
\begin{multline}\label{B5aZ}  
    Q_0 w_{\rm s}^2 \Big (   (a_1^2 + a_2^2) (a_1^2 - a_2^2) - a_2 (w_{\rm s}^2 a_1 + a_2) \Big ) \\
    = (1 + Q_0) \Big ( a_1(w_{\rm s}^2 a_1 + a_2) - 2a_2 a_1 (a_1^2 - a_2^2) \Big ).
    \end{multline}
    Introducing the variable $x$ in this according to (\ref{cc}) then gives (\ref{B5aZ1}).
    \end{proof}  

  \smallskip
 \noindent {\it Proof of Proposition \ref{P2.2x}. } \: \: Substituting (\ref{cc1}) in (\ref{B5aZ1}) gives a fourth order polynomial in $x$. Moreover, this polynomial has the factors $x$, $(x-1/w_{\rm s}^2)$ and
 \begin{equation}\label{cc2}
1 - {2 (1 + Q_0(1 + w_{\rm s}^4)) \over (1 + 2 Q_0) w_{\rm s}^2 }  x +  x^2 = 0.
\end{equation}
The solutions $x=0$ and $x= 1/w_{\rm s}^2$ are rejected  due to the requirement $x > 1/w_{\rm s}^2 $ noted below (\ref{cc1}). Thus $x$ is given by the unique solution of (\ref{cc2}) between 0 and 1. With this knowledge, $a_1$ and $a_2$ are uniquely determined by (\ref{cc}) and (\ref{cc1}), giving their values as implied by (\ref{E2f}) and (\ref{E2f+}).

\hfill $\square$
\begin{remark}\label{R1} ${}$ \\
1.~A constraint on the droplet shape is the normalisation condition
 \begin{equation}\label{4s} 
 \int_{\tilde{\Omega}_{\rm d_s} } \mu(z) \, d^2 z = 1.
  \end{equation}
  Taking into consideration that the normalised density is given by (\ref{E2}) we can make use of 
  (\ref{2.20a}) to conclude that this condition requires that for small $u$,
  \begin{equation}\label{B5e+}    
  H_{\rm s} \Big ( {1 \over u} \Big ) \sim  {u \over 1 + 2 Q_0}. 
   \end{equation}   
This asymptotic behaviour can be verified from (\ref{B2x}). Thus, from 
    the functional form of $\zeta_{\rm s}(u)$ in (\ref{E2f}), 
  we have that $\zeta_{\rm s}(u) \sim a_2/u$ as $u \to 0$ and also $\zeta_{\rm s}(1/u) \sim {a_1 \over u}$ in the same limit. Use of these behaviours in (\ref{B2x}) gives (\ref{B5e+}), independent of the particular values of $a_1, a_2$. \\
  2.~It also follows from (\ref{2.20a}) that higher order moments of the droplet density, i.e.~values of the average with respect to the normalised droplet density of the monomials $\{z^k\}_{k=1,2,\dots}$, can be read off from higher order terms in the expansion (\ref{B5e+}). With the charges symmetrically placed and equal, only even powers of $z$ average to a nonzero value. In particular, in the case of the second moment $\mu_2$ say, this leads to the exact evaluation
\begin{equation}\label{B5e+1} 
\mu_2 = {1 \over w^2} \Big ( w^4 Q_0 - Q_0 - 1 - \sqrt{(w^4 Q_0 - Q_0 - 1)^2 - w^4} \Big ).
 \end{equation}
We refer to \cite[Remark 2.7]{By24} for a similar discussion for the elliptic Ginibre ensembles.

 \end{remark}

\subsection*{Symmetrically placed charges with $Q_0 \ne Q_1$}
Repeating the considerations of the proof of Proposition \ref{P2.2y} shows that for $Q_0 \ne Q_1$ we have that there exists a function $\tilde{H}(u)$ analytic  in $\mathbb C \backslash \tilde{\Omega}_{\rm d_s}$ 
(up to the boundary), with the property that $\tilde{H}(u) \to 0$ as $u \to \infty$,
and such that for $w_{\rm s} \notin  \tilde{\Omega}_{\rm d_s}$
\begin{equation}\label{E5z}
\tilde{H}(u) = {\bar{u} \over 1 + u \bar{u}} - {1 \over Q_0 + Q_1 + 1} \bigg ( {Q_1 \over u - w_{\rm s}} + {Q_0 \over u + w_{\rm s}} \bigg ) , \quad u \in \partial \tilde{\Omega}_{\rm d_s}.
\end{equation}
As a generalisation of (\ref{E2f}) we propose the four parameter conformal mapping from the interior of the unit disk to the exterior of the droplet
\begin{equation}\label{E2fr}
\tilde{\zeta}(u) = {b_3 + b_2 u + b_1 u^2 \over u + b_0},
\end{equation}
where the $b_j$ are real but otherwise remain to be determined. In the case $Q_0 = Q_1$ we must have $b_3 = a_2$,
$b_1 = a_1$ and $b_2 = b_0 =0$, where $a_1,a_2$ are as in
Proposition \ref{P2.2x}. Using (\ref{E2fr}), analogous to (\ref{B2x}), the analytic function $\tilde{H}(u)$ for $u$ outside the droplet is then specified by
 \begin{equation}\label{B2xq}
 \tilde{H}(\tilde{\zeta}(u)) = {\tilde{\zeta}(1/u) \over 1 + \tilde{\zeta}(u) \tilde{\zeta}(1/u)} - {1 \over  Q_0 +Q_1+ 1} \bigg ({Q_0 \over \tilde{\zeta}(u) + w_{\rm s}}
 + {Q_1 \over \tilde{\zeta}(u) - w_{\rm s}} \bigg ),
  \quad |u|<1,
 \end{equation}
as is consistent with \cite[Lemma 2]{EM07}. Requiring that the poles $u_1$ and $u_2$ say of the first term on the RHS for $|u| < 1$ are such that $\tilde{\zeta}(u_1) = w_{\rm s}$ and $\tilde{\zeta}(u_2) = - w_{\rm s}$, and moreover that their residues cancel with poles of the remaining terms on the RHS, can be used to deduce four equations for the $b_i$. On the other hand, it turns out that the number of unknowns in the required conformal mapping can be reduced from four to three by consideration of the setting when the charge of strength $Q_0N $ is at the south pole. Therefore we do not pursue the details of the four equations, but rather return to consideration of (\ref{E1}).

\subsection{Equilibrium measure with the charge $Q_0 N$ at the south pole}\label{S2.3u}
 Beginning with (\ref{E1}), the method of the proof of Proposition \ref{P2.2y} gives
 that there exists a function $H(u)$ analytic  in $\mathbb C \backslash \tilde{\Omega}_{\rm d}$ 
(up to the boundary), with the property that $H(u) \to 0$ as $u \to \infty$,
and such that for $w \notin  \tilde{\Omega}_{\rm d}$
\begin{equation}\label{E5p}
H(u) = {\bar{u} \over 1 + u \bar{u}} - {Q_1 \over Q_0 + Q_1 + 1}  {1 \over u - w}, \quad u \in \partial \tilde{\Omega}_{\rm d}.
\end{equation}
According to \cite[Eq.~(27) with $N=1$]{CC03} and \cite[Th.~5]{EM07} the boundary  of the droplet is given by the
image of the unit circle under the action of the conformal mapping
\begin{equation}\label{B1t}
\zeta(u) = {R \over u} \Big ( {1 - bu \over 1 - au} \Big ),
\end{equation}
 for certain $R >0$ and
\begin{equation}\label{ab}
0 < a < b ,
\end{equation}
 which furthermore bijectively maps from the interior of the unit disk to the exterior of the droplet 
 --- justification of the inequalities (\ref{ab}) follows from consideration of the limit $w \to \infty$; see Remark \ref{Rr}.
 Since $\zeta(u) \to \infty$ as $u \to 0$, for (\ref{B1t}) to be a bijection,
 it must be that $|a| < 1$. For the same reason we require  $|b| < 1$ if and only if  $0 \notin \tilde{\Omega}_d$. Analogous to
 (\ref{B2x}), the analytic function ${H}(u)$ for $u$ outside the droplet is then specified by \cite[Lemma 2]{EM07}
 \begin{equation}\label{B2xq+}
 {H}({\zeta}(u)) = {{\zeta}(1/u) \over 1 + {\zeta}(u) {\zeta}(1/u)} - {Q_1 \over  Q_0 +Q_1+ 1} {1 \over {\zeta}(u) - w}
  \quad |u|<1.
 \end{equation}

 We consider first the consequence of the pole, to be denoted $v_0$ with the requirement that $|v_0| < 1 $, of the first term on the RHS of (\ref{B2xq+}). To cancel the pole of the second term on the RHS this must be such that 
\begin{equation}\label{zz}
\zeta(v_0) = w. 
\end{equation}
For $v_0$ to correspond to a pole of the first term we must have 
\begin{equation}\label{zzE}
{\zeta}(v_0) {\zeta}(1/v_0)=-1. 
\end{equation}
Hence $v_0$ solves the quadratic equation $A u^2 + B u + C = 0$ with
  \begin{equation}\label{B3+}
A = C = - (a + R^2 b), \quad B = (1 + a^2 + R^2 + R^2 b^2);
 \end{equation}
 see \cite[Eq.~(34) with $N=1$, $b-a \mapsto  -b$]{CC03}. The condition 
  \begin{equation}\label{B3a+}
 B \ge 2|A|
 \end{equation}
 is required for $v_0$ to be real.
In fact a simple calculation shows (\ref{B3+}) is guaranteed for $0 < a < b < 1$ (although the condition 
$b < 1$ is not necessary), and that then
\begin{equation}\label{abY}
a < v_0 < b.
\end{equation}

The equation (\ref{zz}), and its companion ${\zeta}(1/v_0)=-1/w$ implied by (\ref{zzE}), allows for the analogue of (\ref{4u}) to be deduced. Thus, after scaling $a,b$ according to
  \begin{equation}\label{abs}
  a = R \alpha, \quad b = \beta/R
\end{equation}
and eliminating $v_0^2$ in these two equations we obtain
 \begin{equation}\label{abs1}
 v_0 = R \mathcal A, \qquad \mathcal A := \bigg ( {1 + \alpha^2 \over \beta + w + (-\beta w + 1) \alpha } \bigg ).
 \end{equation}
Substituting this back in (\ref{zz}) gives that in terms of the variables (\ref{abs}),
\begin{equation}\label{abs2}
R^2 = {(\beta + w) \mathcal A - 1 \over \alpha w \mathcal A^2}.
\end{equation}

  After the introduction of (\ref{abs}),  the parameters $R,\alpha,\beta$ specify the conformal mapping (\ref{B1t}). The equation (\ref{abs2}) expresses $R^2$ in terms of $\alpha, \beta$. It remains then to determine two more equations linking the unknowns. We do this by considering the normalisation condition (recall Remark \ref{R1}) as well as the residue matching condition in (\ref{B2xq+}).

 \begin{prop}\label{P2.5r}
 Let $v_0$ be characterised by (\ref{zz}). We have
  \begin{equation}\label{B5a}  
  {Q_0 \over 1 + Q_0 + Q_1} = {1 \over 1 + {\beta \over \alpha}}, \quad
  {Q_1 \over 1 + Q_0 + Q_1} =   {v_0^2 \zeta'(v_0) \over v_0^2 \zeta'(v_0) +  w^2  \zeta'(1/v_0)}.
  \end{equation} 
  \end{prop}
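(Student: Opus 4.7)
The plan is to derive the two identities in Proposition \ref{P2.5r} from two complementary conditions imposed on $H$ via (\ref{B2xq+}): the second identity from residue cancellation at the pole $u = v_0$, and the first identity from the large-$z$ asymptotics of $H$ forced by the normalisation of the droplet density (\ref{E2}).

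The argument for the second identity would follow the derivation of (\ref{B5aY}) essentially verbatim. Since $H(\zeta(u))$ must remain analytic in $|u|<1$, the poles of the two terms on the RHS of (\ref{B2xq+}) at $u = v_0$ must cancel; multiplying by $\zeta(u) - w$ and letting $u \to v_0$ yields
\begin{equation*}
\frac{Q_1}{1 + Q_0 + Q_1} \; = \; \lim_{u \to v_0} \frac{\zeta(1/u)\bigl(\zeta(u) - w\bigr)}{1 + \zeta(u)\zeta(1/u)}.
\end{equation*}
By (\ref{zz}) and (\ref{zzE}) this is of indeterminate form $0/0$, so L'H\^opital's rule applies. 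Using $\zeta(v_0) = w$ and $\zeta(1/v_0) = -1/w$, the numerator derivative at $v_0$ reduces to $-\zeta'(v_0)/w$ and the denominator derivative to $-\zeta'(v_0)/w - w\zeta'(1/v_0)/v_0^2$; multiplying top and bottom by $-w v_0^2$ collapses the ratio to $v_0^2\zeta'(v_0)/\bigl(v_0^2\zeta'(v_0) + w^2\zeta'(1/v_0)\bigr)$, giving the second identity.

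For the first identity I would exploit the large-$z$ behaviour of $H$. The analog of (\ref{2.20a}) in the present asymmetric setting represents $H(z)$ for $z \notin \tilde\Omega_{\rm d}$ as the Cauchy transform of the \emph{un-normalised} density $\bigl(\pi(1+|z'|^2)^2\bigr)^{-1}$ (as is implicit in the reasoning (\ref{E3g+})--(\ref{2.11a+})). Expanding $(z-z')^{-1}$ in a geometric series and invoking $\int \mu\,d^2z = 1$ with $\mu$ as in (\ref{E2}) then yields the asymmetric analog of (\ref{B5e+}),
\begin{equation*}
H(z) \;\sim\; \frac{1}{(Q_0+Q_1+1)\,z}, \qquad |z| \to \infty.
\end{equation*}
On the other hand, (\ref{B1t}) gives $\zeta(u) \sim R/u$ and $\zeta(1/u) \sim Rb u/a$ as $u \to 0$; substituting into (\ref{B2xq+}) and extracting the $O(u)$ coefficient, matching via $u = R/z$ forces
\begin{equation*}
\frac{a}{a + R^2 b} \;=\; \frac{Q_0}{1 + Q_0 + Q_1},
\end{equation*}
which, after $a = R\alpha$ and $b = \beta/R$ (see (\ref{abs})), cancels $R$ and becomes $\alpha/(\alpha+\beta) = 1/(1 + \beta/\alpha)$.

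The only real pitfall I anticipate is bookkeeping of the prefactor in $H(z) \sim C/z$: the constant is $C = 1/(Q_0+Q_1+1)$ rather than $C = 1$, because $H$ is defined from the un-normalised density appearing in (\ref{E3g+}). Modulo this, both steps reduce to a single L'H\^opital calculation and a single Taylor expansion, with no deeper analytic machinery needed.
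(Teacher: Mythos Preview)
Your proposal is correct and follows essentially the same route as the paper: the second identity via residue cancellation at $u=v_0$ with L'H\^opital (exactly the derivation of (\ref{B5aY}) the paper invokes), and the first identity via the normalisation-forced $1/z$ coefficient of $H$, which the paper phrases as the residue condition (\ref{B5d}) and then computes from the small-$u$ asymptotics $\zeta(u)\sim R/u$, $\zeta(1/u)\sim Rbu/a$ in (\ref{B2xq+}) just as you do. Your bookkeeping of the prefactor $1/(Q_0+Q_1+1)$ is also handled correctly.
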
  
  
  \begin{proof}

Replacing (\ref{B5e+}) as a consequence of the normalisation condition is the requirement that
 \begin{equation}\label{B5d}  
\mathop{\rm Res} \limits_{u = 0} {H(1/u)  \over u^2} = {1   \over 1 + Q_0 + Q_1}.
 \end{equation}  
 On the other hand, 
  from (\ref{B1t}) we have that $\zeta(u) \sim R/u$ as $u \to 0$ and also $\zeta(1/u) \sim {R b \over a} u$ in the same limit.
  Substituting in (\ref{B2xq+}) shows
  \begin{equation}\label{B5f}      
  \mathop{\rm Res} \limits_{u = 0} {H(1/u)  \over u^2} = 
  {{b \over a} R^2  \over 1 + {b \over a} R^2} - {Q_1 \over 1 + Q_0 + Q_1 }. 
    \end{equation}   
    Equating this with (\ref{B5d}), after some simple manipulation and use of the scaled variables (\ref{abs}) we arrive at the first relation in (\ref{B5a}).
    
    The second relation in (\ref{B5a}) is derived in an identical manner to (\ref{B5aY}).
    
\end{proof}  

According to the first equation in Proposition \ref{P2.5r} we have that
 \begin{equation}\label{abd}
 \beta =  {1 + Q_1 \over Q_0}  \alpha.
\end{equation} 
Considering in addition (\ref{abs2}) it remains to specify $\alpha$. This can be done by making explicit the second equation in Proposition \ref{P2.5r}, and substituting for $R$ and $\alpha$. As a first step, by first computing the logarithmic derivative of (\ref{B1t}), making use of (\ref{zz}) and its companion noted above (\ref{abs}),  and use of the scaled variables (\ref{abs}) together with (\ref{abs1}) and (\ref{abs2}), the variable $R$ can be eliminated in this equation to give
 \begin{equation}\label{abd1}
 (1 + Q_0) \bigg ( {1 \over 1 - \alpha \mathcal B} - {1 \over 1 - \beta \mathcal A} \bigg ) - Q_1  \bigg ( {1 \over -1 +  \mathcal B/ \beta} - {1 \over -1 +  \mathcal A/\alpha} \bigg )  = 0.
\end{equation} 
Here $\mathcal A$ is given by (\ref{abs1}) and
 \begin{equation}\label{abd2}
 \mathcal B = {1 \over \alpha w} \bigg ( \beta + w - {1 \over \mathcal A} \bigg ).
\end{equation} 
In (\ref{abd1}) we can further eliminate $\beta$ using (\ref{abd}), with $\alpha$ then the sole unknown. Now using computer algebra to reduce  to rational function form reveals that $\alpha$ satisfies the fourth order polynomial equation
\begin{multline}\label{2.62}
    w Q_0 + \Big ( 1 + 2 Q_0 - (Q_0 + Q_1) w^2 \Big ) \alpha -3 (1 + Q_0 + Q_1) w \alpha^2 \\
    + (-1-2Q_1 + (2 + Q_0 + Q_1 ) w^2) \alpha^3 + (1 + Q_1) w \alpha^4 = 0.
\end{multline}

Significant  is that for $Q_0 = Q_1$ (\ref{2.62}) factorises to give 
 \begin{equation}\label{abd3}
 \Big ( -1 + 2 w  \alpha + \alpha^2 \Big ) \Big ( -Q_0 w -(1 + 2 Q_0)  \alpha  + (1 + Q_0) w \alpha^2 \Big ) = 0.
\end{equation}  
Each factor has one positive and one negative solution, with only the positive solutions of relevance (recall (\ref{ab}) and (\ref{abs})). Of these we reject that corresponding to the second factor, due to the denominator in the quantity $\mathcal A$ as defined in (\ref{abs1}) then vanishing. Hence
 \begin{equation}\label{abd4}
 \alpha = -w + (w^2 + 1)^{1/2},
\end{equation}  
and thus we have explicit knowledge of  the values of all the parameters in the conformal mapping (\ref{B1t}).

\begin{prop}\label{P2.6}
    Consider the setting that the charge $Q_0N$ is at the south pole, and let the position of the charge of strength $Q_1N $ have azimuthal angle $\phi=0$ and polar angle $\theta_w$, which furthermore maps to the point $w > 0$ on the real axis under stereographic projection. Restrict attention to the case $Q_0 = Q_1$, and require that $Q_0, w$ are such that the inequality
    (\ref{1F}) is violated (meaning that the associated spherical caps do intersect), which corresponds to the condition
    \begin{equation}\label{wdc}
        w^2 > {1 \over 4 Q_0 (1 + Q_0)}.
    \end{equation}
    In this setting the boundary of the droplet is given by (\ref{B1t}), where 
    according to (\ref{abs}), (\ref{abs2}), (\ref{abd}) the parameters $R,a,b$ therein are given in terms of $\alpha$, which is specified by (\ref{abd4}). 
    \end{prop}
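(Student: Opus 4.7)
The plan is to assemble the proposition by collating the analytic derivations built up in the paragraphs preceding it, and then to isolate the correct root of the quartic (\ref{2.62}) in the symmetric regime $Q_0 = Q_1$. First I would recall that the equilibrium condition (\ref{E1}) with the charge $Q_0 N$ at the south pole leads, by an argument identical to that of Proposition \ref{P2.2y}, to the existence of a function $H$ analytic off $\tilde{\Omega}_{\rm d}$, vanishing at infinity, with boundary value (\ref{E5p}). The Elbau--Marino principle (\ref{B2xq+}) then asserts that $H \circ \zeta$, with $\zeta$ the rational conformal map (\ref{B1t}), is analytic on the open unit disk, and this single requirement encodes the full parameter-matching problem for $(R,a,b)$.

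Next I would unpack that analyticity in the order already used above. The interior pole $v_0$ of the first term on the right of (\ref{B2xq+}) must coincide with the pole $\zeta(u) = w$ of the second term; combined with the reciprocal identity $\zeta(v_0)\zeta(1/v_0) = -1$, after the rescaling (\ref{abs}) this gives the explicit expressions (\ref{abs1}) for $v_0$ and (\ref{abs2}) for $R^2$. The normalization (\ref{4s}), rewritten via (\ref{B5d})--(\ref{B5f}), yields the first relation in Proposition \ref{P2.5r} and hence (\ref{abd}). The residue-matching condition at $v_0$ supplies the second relation, and substituting (\ref{abd}) together with (\ref{abs2}) into it reduces the problem to the quartic (\ref{2.62}) in $\alpha$ alone.

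Specialising to $Q_0 = Q_1$, the quartic factorises as (\ref{abd3}); each quadratic factor has one positive and one negative root. The positive root of the second factor must be discarded because substitution shows that at that value the denominator $\beta + w + (1 - \beta w)\alpha$ in the definition of $\mathcal{A}$ in (\ref{abs1}) vanishes, so $\mathcal{A}$ and hence $R$ become singular. What remains is the positive root $\alpha = -w + \sqrt{1 + w^2}$ of the first factor, namely (\ref{abd4}). A short computation using $|u_w|^2 = 1/(1+w^2)$ together with $\gamma_1 = 0$, $\gamma_2 = 1/Q_0$ in (\ref{1F}) also confirms that in the symmetric case the failure of (\ref{1F}) is exactly the condition (\ref{wdc}), so the stated hypothesis genuinely places us in the pre-critical phase.

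The main obstacle I anticipate lies not in the algebra itself but in the a posteriori verification that, throughout the range (\ref{wdc}), the triple $(R, a, b)$ so obtained corresponds to a valid Elbau--Marino conformal map --- specifically, $R > 0$, $0 < a < b$, $|a| < 1$, and the inequality (\ref{B3a+}) guaranteeing $v_0$ real --- so that (\ref{B1t}) truly is a bijection of the open unit disk onto the exterior of a simply connected droplet. These sign and size checks, although straightforward substitutions of (\ref{abd4}) into (\ref{abd}) and (\ref{abs2}), should be carried out explicitly in order to secure the conclusion.
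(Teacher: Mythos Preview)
Your proposal is correct and follows essentially the same route as the paper: Proposition \ref{P2.6} is not given a standalone proof but is stated as the conclusion of the preceding derivations --- the analyticity requirement on (\ref{B2xq+}), the pole/normalisation/residue conditions of Proposition \ref{P2.5r} yielding (\ref{abs1}), (\ref{abs2}), (\ref{abd}) and finally (\ref{2.62}), the factorisation (\ref{abd3}) at $Q_0=Q_1$, and the rejection of the second factor's positive root via the vanishing of the denominator of $\mathcal A$. Your extra step of explicitly checking the sign/size constraints on $(R,a,b)$ is a sensible completeness point that the paper leaves largely implicit (it handles some of them around (\ref{B3a+}) and (\ref{abY})); one minor quibble is that the cited conformal-map result is due to Etingof--Ma \cite{EM07} (and Crowdy--Cloke \cite{CC03}), not Elbau--Marino.
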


\begin{figure}
  \begin{minipage}[b]{0.19\linewidth}
    \centering
    \includegraphics[width=\textwidth]{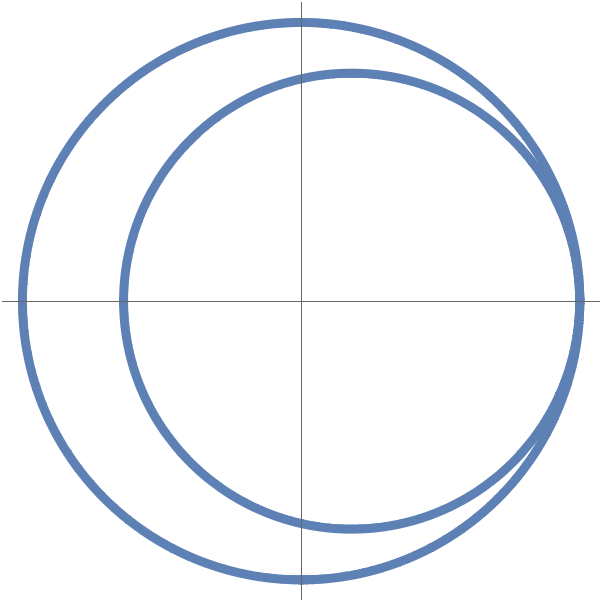}
    \subcaption{$w=w_{ \rm cri } \approx 0.11$}
  \end{minipage}  
  \begin{minipage}[b]{0.19\linewidth}
    \centering
   \includegraphics[width=\textwidth]{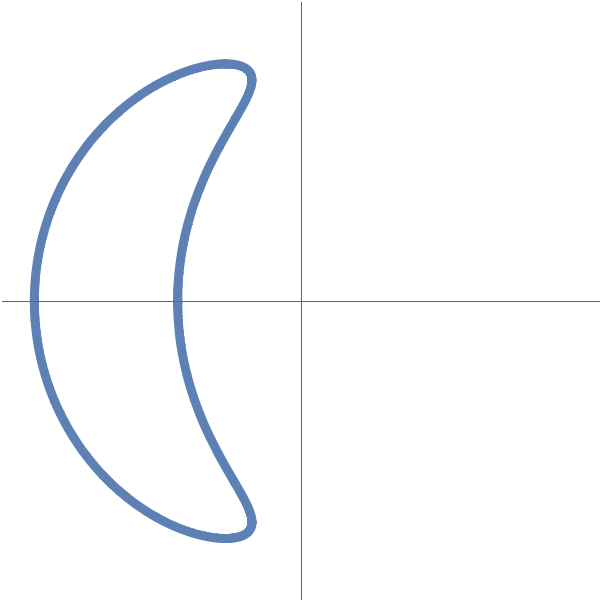}
    \subcaption{$w=0.3$}
  \end{minipage} 
  \begin{minipage}[b]{0.19\linewidth}
    \centering
     \includegraphics[width=\textwidth]{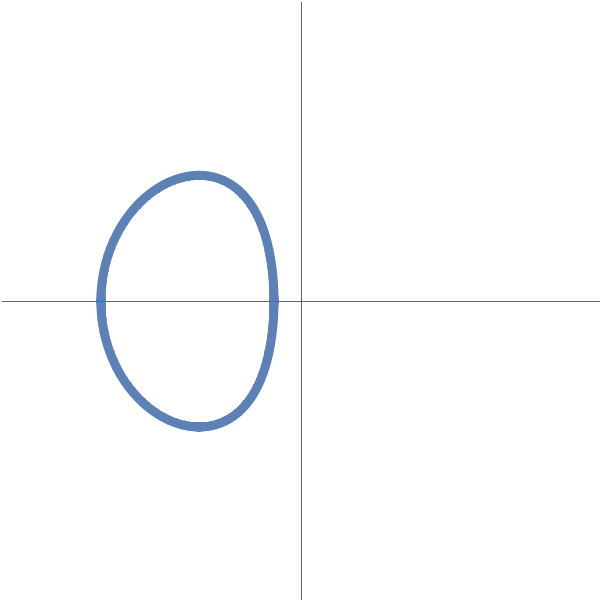}
    \subcaption{$w=1$}
  \end{minipage} 
  \begin{minipage}[b]{0.19\linewidth}
    \centering
 \includegraphics[width=\textwidth]{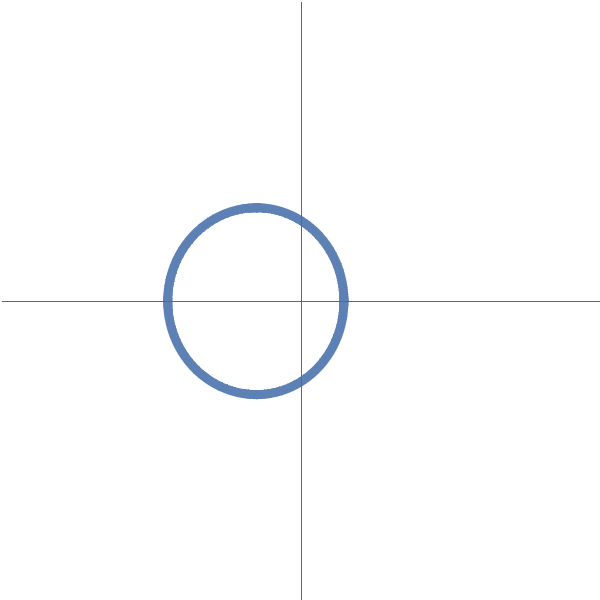}
    \subcaption{$w=3$}
  \end{minipage} 
  \begin{minipage}[b]{0.19\linewidth}
    \centering
   \includegraphics[width=\textwidth]{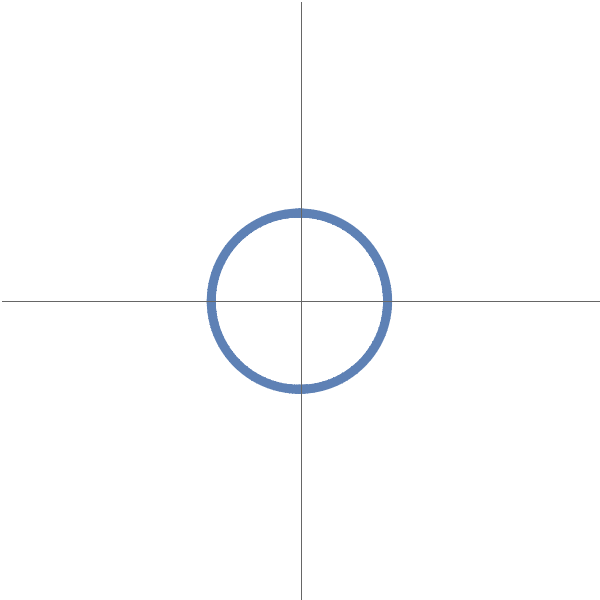}
    \subcaption{$w=80$}
  \end{minipage} 

  \begin{minipage}[b]{0.19\linewidth}
    \centering
    \includegraphics[width=\textwidth]{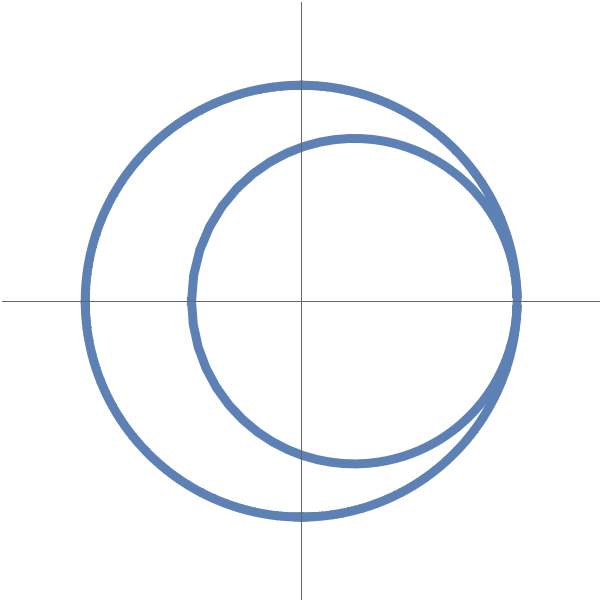}
      \subcaption{$w=w_{ \rm cri } \approx 0.15$}
  \end{minipage}  
  \begin{minipage}[b]{0.19\linewidth}
    \centering
    \includegraphics[width=\textwidth]{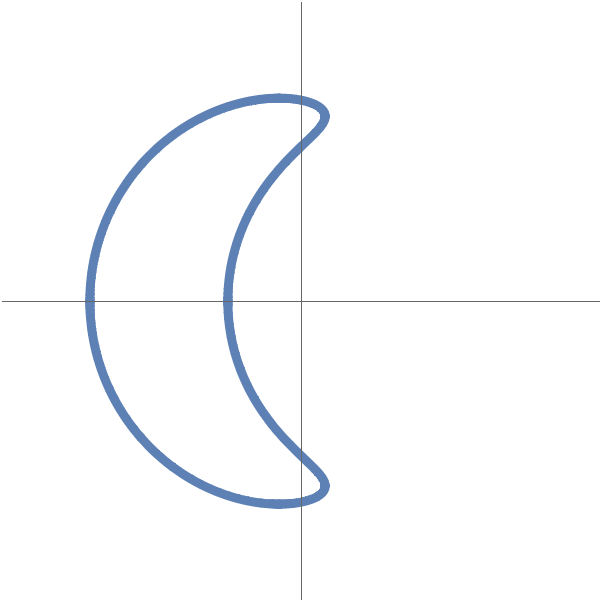}
    \subcaption{$w=0.3$}
  \end{minipage} 
  \begin{minipage}[b]{0.19\linewidth}
    \centering
    \includegraphics[width=\textwidth]{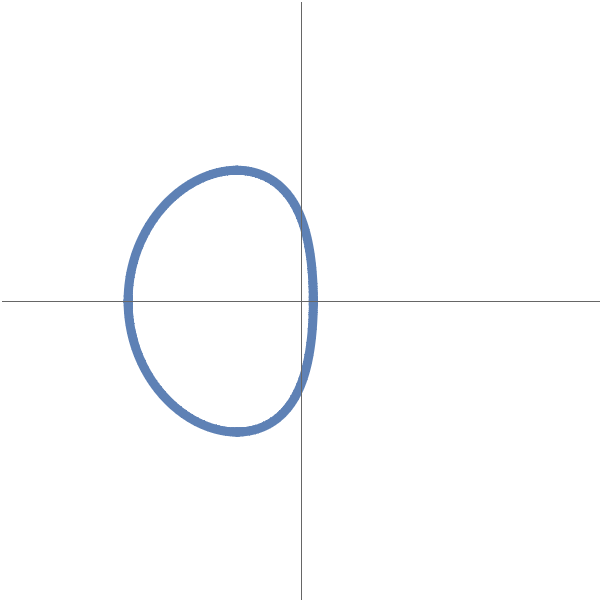}
    \subcaption{$w=1$}
  \end{minipage} 
  \begin{minipage}[b]{0.19\linewidth}
    \centering
    \includegraphics[width=\textwidth]{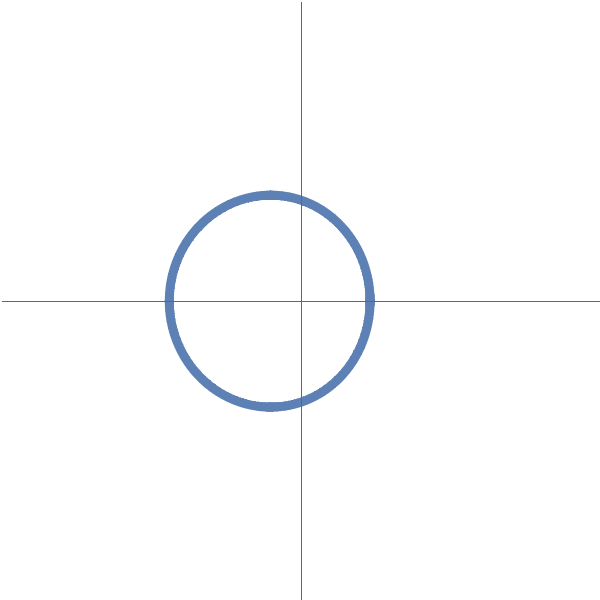}
    \subcaption{$w=3$}
  \end{minipage} 
  \begin{minipage}[b]{0.19\linewidth}
    \centering
    \includegraphics[width=\textwidth]{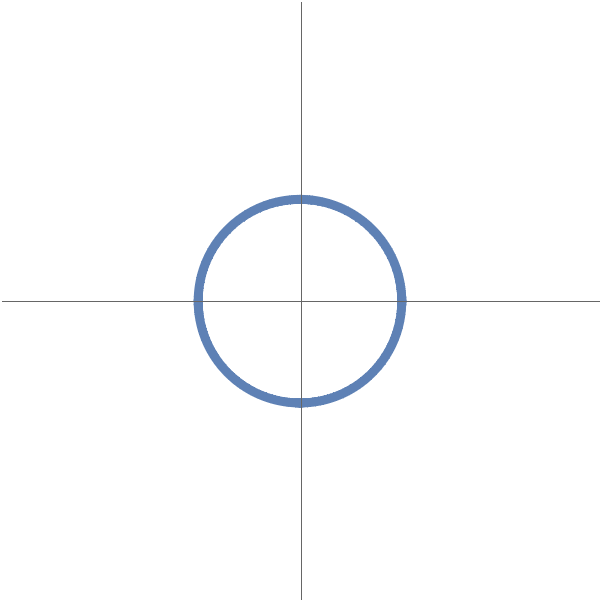}
    \subcaption{$w=80$}
  \end{minipage} 
    \caption{ Plots (a)--(e) show the droplet boundary for the case $Q_0=Q_1=4$, as determined by Proposition \ref{P2.6}, for different values of $w$. Plots (f)--(j) present the corresponding figures for $Q_0=4$ and $Q_1=2$, as determined by Proposition~\ref{P2.7}. In both cases, as $w$ increases, the droplet tends to be a centred disk with a radius of $1/\sqrt{Q_0+Q_1}$, as given in \eqref{Rc+}. Comparing the first and second rows, one can observe that the effect of the point charge at $w$ on the droplet boundary is ``smaller'' in the second.} 
    \label{Fig1}
\end{figure}

Using this result, for a particular $Q_0$ the droplet boundary can be computed and displayed using computer algebra software for varying $w$ obeying (\ref{wdc}); see the first row of Figure \ref{Fig1} for  examples. Some general properties can be identified. First, since from (\ref{abd4}), $\alpha \sim {1 \over 2 w}$ as $ w \to \infty$, we can read off the corresponding leading large $w$ form of $\beta$ from (\ref{abd}), then substitute in (\ref{abs1}) to deduce the first two terms in the large $w$ expansion of $\mathcal A$. Using this information in (\ref{abs2}) then gives
\begin{equation}\label{Rc}
\lim_{w \to \infty} R = \sqrt{1 \over 2 Q_0};
\end{equation}
see \eqref{Rc+} below  for the general $Q_0 \ne Q_1$ case.
This is consistent with (\ref{Y2}) in Appendix A (set $Q_1=0$ and $Q_0 \mapsto 2 Q_0$ to correspond to there being an external charge only at the south pole, with strength $2Q_0N$). The value (\ref{Rc}) can be checked to be in agreement with that seen for the radius of the circle in the final display of the first row of Figure \ref{Fig1}. A circle is also seen as the outer boundary in the limit that the inequality (\ref{wdc}) becomes an equality from above, with the approach to this limit illustrated by the first display of Figure \ref{Fig1}. At the limit the spherical caps no longer overlap, but rather only intersect at a point. This tells us that the value $v_0 = 1$ in (\ref{zz}) must be approached, leading to a breakdown of the requirement of the simultaneous equations $\zeta(v_0) = w$, $\zeta(1/v_0) = -1/w$. Thus in (\ref{abs2}) we must have $(R \mathcal A)^2 = 1$ which in turn implies $\alpha \beta = 1$, or equivalently from (\ref{abs}) that $ab = 1$. Substituting the latter in (\ref{B1t}) shows that then $|\zeta(u)| = R/a = 1/\alpha$ for all $u$ on the unit circle. Moreover, the value of $\alpha$ is specified in terms of $w$ by (\ref{abd4}), then in terms of $Q_0$ by making (\ref{wdc}) an equality; consistency with the numerics can be verified.

The result of Proposition \ref{P2.6} can be compared against a suitable rotation of the result of Proposition \ref{P2.2x} for the droplet boundary in the case of equal strength, symmetrically placed charges about the south pole. The required rotation corresponds to the conformal map
\begin{equation}\label{Rc1}
\eta(z) = {z - 1/w_{\rm s} \over 1 + z/w_{\rm s}},
\end{equation} 
characterised by its mapping in the stereographic projected plane of $-w_{\rm s}$ to infinity, and its antipode $1/w_{\rm s}$ (when viewed on the sphere) to 0. As a direct consequence of Proposition \ref{P2.2x} we therefore have the following alternative description of the droplet boundary in the case of the charge $Q_0N$ at the south pole, and an equal strength charge at $w$.

\begin{cor}\label{C2.1}
    Consider the setting of Proposition \ref{P2.6}. With $\zeta_{\rm s}(u)$ defined as in Proposition \ref{P2.2x}, the boundary of the droplet is given by the image of the unit circle under the conformal mapping
 $\eta(\zeta_{\rm s}(z))$.  
    \end{cor}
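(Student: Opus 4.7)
The plan is to exploit the rotational invariance of the Coulomb energy on the sphere. Because the pair potential (\ref{1A}) depends only on the chordal distance, the energy functional (\ref{E1y}) is invariant under rotations of the sphere provided the external charges are rotated together with the integration domain, and by uniqueness of the equilibrium measure the droplet transforms equivariantly under such rotations. On the stereographically projected plane, these rotations are realised by Möbius transformations of the form $z \mapsto (az - \bar c)/(cz + \bar a)$ with $|a|^2 + |c|^2 = 1$, and as indicated in the paragraph preceding the Corollary, the specific map $\eta$ in (\ref{Rc1}) is singled out as the rotation that sends $-w_{\rm s}$ to $\infty$ (so that the $Q_0 N$ charge is carried to the south pole after inverse stereographic projection) and the antipode $1/w_{\rm s}$ to $0$.

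First I would verify that $\eta$ really does arise from a sphere rotation, by checking that multiplying numerator and denominator of (\ref{Rc1}) by $w_{\rm s}/\sqrt{1 + w_{\rm s}^2}$ brings it into the standard form above. Next, computing $\eta(w_{\rm s}) = (w_{\rm s}^2 - 1)/(2 w_{\rm s})$ and identifying this image with the position $w$ of Proposition \ref{P2.6} fixes the parameter correspondence $w_{\rm s} = w + \sqrt{1 + w^2}$. A short calculation using the resulting identity $w_{\rm s}^2 - 1 = 2 w w_{\rm s}$ then shows that (\ref{wdc}) is equivalent to (\ref{W1}) under this substitution, so that the pre-critical regimes of the two formulations coincide.

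With these identifications in place, the rotational equivariance of the equilibrium measure yields that the droplet for the configuration of Proposition \ref{P2.6} at $Q_0 = Q_1$ is precisely the $\eta$-image of the droplet of Proposition \ref{P2.2x}. Since the latter has boundary equal to the image of the unit circle under $\zeta_{\rm s}$, the new droplet has boundary equal to the image of the unit circle under $\eta \circ \zeta_{\rm s}$, which is the claim. The one point requiring a brief orientation check --- the main but essentially cosmetic obstacle --- is that $\eta \circ \zeta_{\rm s}$ maps the interior of the unit disk to the \emph{exterior} of the new droplet, not to its interior. This is immediate from the fact that $-w_{\rm s}$ lies in the exterior of the original ellipse and $\eta(-w_{\rm s}) = \infty$ must lie in the exterior of the new droplet, since the south pole carries the external charge $Q_0 N$ and is therefore disjoint from the droplet support.
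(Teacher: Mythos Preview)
Your argument is correct and follows the same route as the paper: the Corollary is presented there as ``a direct consequence of Proposition \ref{P2.2x}'' via the rotation \eqref{Rc1}, i.e.\ precisely the rotational-equivariance reasoning you give. Your additional checks (that $\eta$ has the Möbius form of a sphere rotation, that the parameter correspondence $w_{\rm s}=w+\sqrt{1+w^2}$ makes \eqref{wdc} and \eqref{W1} equivalent, and the interior/exterior orientation) are details the paper leaves implicit.
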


See Figure~\ref{Fig_conformal images} for an illustration.

\begin{figure}[t]
\begin{center}
\begin{tikzpicture}[scale=2, every node/.style={align=center}]
\node at (-2.5, 0) {\includegraphics[width=3cm]{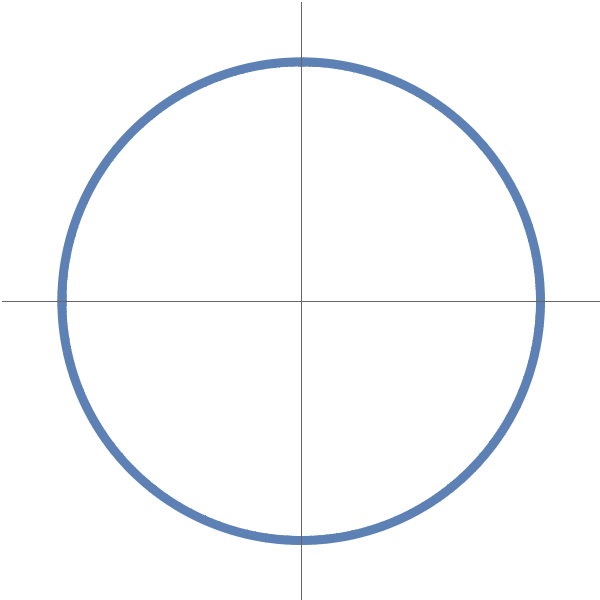}}; 
\draw[->] (-1.6,0) -- (-0.9,0); 
\node[above] at (-1.25, 0) {$\zeta_{ \rm s}(u)$}; 
\node at (0, 0) {\includegraphics[width=3cm]{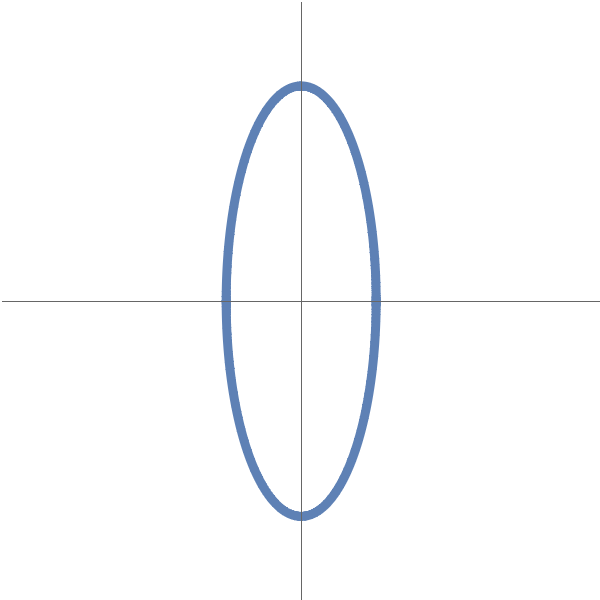}}; 
\node at (2.5, 0) {\includegraphics[width=3cm]{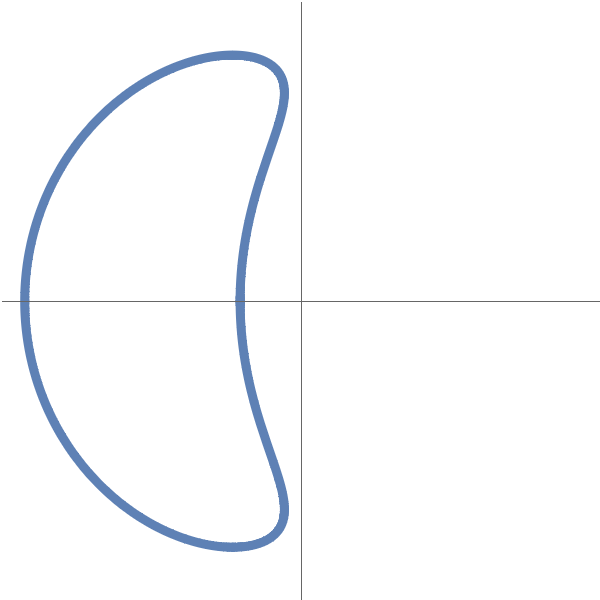}}; 
\draw[->] (0.9,0) -- (1.6,0); \node[above] at (1.25, 0) {$\eta(z)$}; 
\draw[->] (-1.6, 0.5) arc[start angle=110, end angle=70, radius=4.8];
\node[above] at (0, 0.8) {$\zeta(u)$};
\end{tikzpicture}
\end{center}
    \caption{The plot illustrates the conformal mappings $\zeta_{\rm s}$, $\zeta$, and $\eta$ given in \eqref{E2f}, \eqref{B1t}, and \eqref{Rc1}, respectively, as well as the boundary of the droplets $\Omega_{\rm d}$ and $\tilde{\Omega}_{\rm d}$, which are the images of the unit circle under these maps. Here, $Q_0 = Q_1 = 2$ and $w = 2$, so $w_{\rm s} = (1 + \sqrt{5})/2$. }
    \label{Fig_conformal images}
\end{figure}

    \begin{remark} ${}$ \\
    1.~While the conformal mappings in Corollary \ref{C2.1} and Proposition \ref{P2.6} have the same image of the unit circle, they are otherwise distinct. \\
    2. The rotation (\ref{Rc1}) must map $w_{\rm s}$ to $w$, so we have
    \begin{equation}\label{Rc2}
    w = {1 \over 2} \bigg ( w_{\rm s} - {1 \over w_{\rm s}} \bigg ), \qquad w_{\rm s} = w + (w^2 + 1)^{1/2},
   \end{equation} 
   with the second equation following as the positive root of the quadratic in $w_{\rm s}$ implied by the first.
   Being able to map between $w$ and $w_{\rm s}$ is necessary for the implementation of Corollary \ref{C2.1}.
   Note too that this allows  for a simplification of (\ref{abd4}), giving
   \begin{equation}\label{Rc3}
   \alpha = {1 \over w_{\rm s}}.
   \end{equation} 
   Substituting in (\ref{abs1}) and (\ref{abs2}) then shows
   \begin{equation}\label{Rc4}
   R^2 = {1 + Q_0(w_{\rm s}^2 + 1) \over 2 Q_0^2 (w_{\rm s}^2 -1 )}, \qquad v_0^2 = {2  w_s^2 \over   Q_0 w_s^4 + w_s^2 - (1 +Q_0)} ;
   \end{equation} 
   note in particular that the first of these equations is consistent with (\ref{Rc}) and the second with the requirement that $v_0 \sim R/w$ for $w$ large as follows from (\ref{B1t}) and (\ref{zz}). The second of these equations is further seen to be consistent with the requirement that $v_0^2 < 1$, once the condition (\ref{wdc}) is accounted for.
     \end{remark}

    We return now to consider the case $Q_0 \ne Q_1$, which is to say the quartic equation (\ref{2.62}).
    An explicit solution analogous to (\ref{abd4}) or
    (\ref{Rc3}) can no longer be expected. By way of analysis, we begin by observing from the signs of the
    constant term, the coefficient of $\alpha^2$, and the coefficient of $\alpha^4$ that there are no more
    than two positive roots. In fact computations carried out by computer algebra of the discriminant give that there are precisely two positive roots as for the case $Q_0 = Q_1$, which furthermore are
    always distinct; see Appendix B. Of these, for consistency with (\ref{abd4}) we must choose the solution which permits the large $w$ expansion
  \begin{equation}\label{2.71} 
  \alpha = \sum_{p=0}^\infty {c_{2p+1} \over w^{2p+1}};
 \end{equation}  
 note that this being an odd function of $w$ is consistent with the parity in $w$ of the coefficients in the powers of $\alpha$ in (\ref{2.62}). Substitution of this form and equating powers of $w$ shows
 \begin{equation}\label{2.72} 
 c_1 = {Q_0 \over Q_0 + Q_1}, \quad c_3 = - {Q_0 Q_1 (Q_0 - Q_1 + Q_0^2 + Q_0 Q_1) \over (Q_0 + Q_1)^4}, \quad \dots
 \end{equation}  
 which with $Q_0 = Q_1$ is indeed consistent with (\ref{abd4}), with the dependence on $Q_0$ then cancelling out entirely. Another characterisation of the solution (\ref{2.71}) is as the smallest positive root, due to
 it approaching zero for $w \to \infty$ and the non-crossing property with the other solution.
 With this established, we have available the specification of all parameters in the conformal mapping (\ref{B1t}), thus allowing for Proposition
 \ref{P2.6}
 to be extended to the general case.

 \begin{prop}\label{P2.7}
 Consider the setting of Proposition
 \ref{P2.6}, but without imposing the restriction $Q_0 = Q_1$. 
 In particular, the 
 requirement that the spherical caps associated with two charges overlap restricts $Q_0,Q_1,w$ to be such that
 \begin{equation}  \label{def of w critical point}
 w  > w_{\rm cri} := \Big( 2Q_0Q_1+Q_0+Q_1 + 2\sqrt{ Q_0 Q_1(1+Q_0)(1+Q_1) } \Big)^{-1/2}    
 \end{equation}
 as follows from (\ref{1F});
 cf.~(\ref{wdc}).
 With this assumed, we have that the boundary of the droplet is given by
 (\ref{B1t}), with the parameters $R,a,b$ given in terms of $\alpha$ 
 according to (\ref{abs}), (\ref{abs2}), (\ref{abd}), while $\alpha$ itself is specified as the solution
 (\ref{2.62}) which exhibits the large $w$ form (\ref{2.71}), (\ref{2.72}) and is the smallest positive root.
 \end{prop}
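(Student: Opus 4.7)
The plan is to assemble Proposition \ref{P2.7} as a synthesis of the constructions carried out in the paragraphs preceding it, treating it as a summary of the general $Q_0\ne Q_1$ case whose individual pieces have already been laid out. First I would verify the critical condition (\ref{def of w critical point}) by substituting $|u_w|^2 = 1/(1+w^2)$ into (\ref{1F}), specialising to $\gamma_1 = -1 + Q_0/Q_1$, $\gamma_2 = 1/Q_1$, and solving the resulting quadratic in $w^2$; the monotonicity of $|u_w|^2$ as a function of $w$ then converts the violation of (\ref{1F}) into the lower bound $w > w_{\rm cri}$, consistent with Appendix A. This puts us in the pre-critical phase where the droplet characterisation via the equilibrium equation (\ref{E1}) applies.

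Next I would record that the Cauchy--Pompeiu argument used in the proof of Proposition \ref{P2.2y}, applied with $\partial_z$ of (\ref{E1}) and the identity (\ref{2.11a+}), produces an analytic function $H(u)$ on $\mathbb{C}\setminus\tilde{\Omega}_{\rm d}$ vanishing at infinity and taking the boundary value (\ref{E5p}). Invoking \cite[Th.~5]{EM07} and \cite[Eq.~(27)]{CC03}, I would then parametrise $\partial\tilde{\Omega}_{\rm d}$ by the three-parameter conformal map (\ref{B1t}) with $0<a<b$, so that the pullback (\ref{B2xq+}) is meromorphic inside the unit disk. The requirement that its unique interior pole $v_0$ of the first term satisfy the pole-matching conditions with the pole of the second term forces (\ref{zz}) together with $\zeta(1/v_0) = -1/w$; after the rescaling (\ref{abs}) these solve uniquely for $v_0$ and $R^2$ via (\ref{abs1}) and (\ref{abs2}).

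The two remaining scalar equations come from Proposition \ref{P2.5r}: the normalisation condition (Remark \ref{R1}) applied to the residue expansion (\ref{B5d}) of $H(1/u)/u^2$ at $u=0$ gives the clean relation (\ref{abd}), while the residue-matching condition at $v_0$ for the pole of the $Q_1$ term in (\ref{B2xq+}) gives the second relation of (\ref{B5a}). Substituting (\ref{abs1}), (\ref{abs2}), (\ref{abd}) and reducing to rational form with computer algebra yields the quartic (\ref{2.62}) in $\alpha$ alone. At this point the argument is purely algebraic: Appendix B together with Descartes' rule of signs (noting the sign pattern $+,+,-,?,+$ of the coefficients viewed as functions of $\alpha$) shows that (\ref{2.62}) has exactly two positive roots, and computer algebra verifies that the discriminant is nonzero so these are distinct.

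The main obstacle is selecting the correct root, since an explicit solution analogous to (\ref{abd4}) is unavailable. I would resolve this by inserting the ansatz (\ref{2.71}) into (\ref{2.62}) and matching powers of $1/w$ to produce the coefficients (\ref{2.72}); this identifies one distinguished solution, and the non-crossing of the two positive branches (guaranteed by the distinct-roots claim) together with the fact that this branch tends to $0$ as $w\to\infty$ characterises it as the smallest positive root. Specialising to $Q_0=Q_1$ the quartic factorises as (\ref{abd3}) and the present branch reduces to (\ref{abd4}), in agreement with Proposition \ref{P2.6}. Since the equilibrium measure is unique \cite{LD21}, exactly one of the two positive roots can correspond to an admissible droplet, and the consistency with Proposition \ref{P2.6} together with the large-$w$ limit (recovering (\ref{Rc+}) and the circular droplet) forces it to be the one selected. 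This completes the specification of all parameters in (\ref{B1t}) and hence the proof.
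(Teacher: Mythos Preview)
Your proposal is correct and follows essentially the same route as the paper: Proposition \ref{P2.7} is indeed a synthesis statement, and the paper's ``proof'' consists precisely of the chain of constructions in \S\ref{S2.3u} that you have outlined --- the boundary-value function (\ref{E5p}), the conformal map (\ref{B1t}) from \cite{CC03,EM07}, the pole conditions (\ref{zz})--(\ref{abs2}), Proposition \ref{P2.5r}, the reduction to the quartic (\ref{2.62}), the discriminant analysis of Appendix B, and the large-$w$ expansion (\ref{2.71})--(\ref{2.72}) to single out the smallest positive root. One small correction: the sign pattern of the coefficients of (\ref{2.62}) is $+,?,-,?,+$ rather than $+,+,-,?,+$ (the linear coefficient $1+2Q_0-(Q_0+Q_1)w^2$ can have either sign), but as the paper observes, the signs of the constant, $\alpha^2$ and $\alpha^4$ coefficients alone force exactly two sign changes in every case, so Descartes still gives at most two positive roots.
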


\begin{remark}\label{Rr} ${}$ \\
 1.~The reasoning relating the case $w \to \infty$ given below Proposition
 \ref{P2.6} carries over, with (\ref{Rc}) now replaced by
 \begin{equation}\label{Rc+}
 \lim_{w \to \infty} R = \sqrt{1 \over Q_0 + Q_1}.
  \end{equation}  
This is in consistent with the fact that the equilibrium measure associated with the rotationally symmetric potential $(Q_0+Q_1+1)\log(1+|z|^2)$ is a centred disk of radius $1/{\sqrt{Q_0+Q_1}}$; see e.g. \cite[\S 5.2]{BF24}.
  Similarly we have prediction that as the negation of (\ref{1F}) begins to break down by the inequality becoming an equality (i.e.~at the pre-critical, post-critical boundary), the support of the droplet will involve an outer circle of radius
  $1/\alpha$. These quantitative properties are borne out by numerical computations, which are displayed qualitatively in the second row of Figure \ref{Fig1}. \\
  2.~Using the large $w$ forms (\ref{2.71}) ($p=0$ term only)
and (\ref{Rc+}), as well as the relation (\ref{abd}) between $\alpha$ and $\beta$, it follows from
(\ref{abs}) that for large $w$
\begin{equation}\label{Rd+}
a \sim{Q_0 \over (Q_0 + Q_1)^{3/2}} {1 \over w}, \qquad {b \over a} \sim (1 + Q_1) \Big (1 + {Q_1 \over Q_0}
\Big ),
 \end{equation}  
 which justifies (\ref{Rr}) by continuity of $a,b$ with respect to $w$, and the fact that the cases
 $a=b$ and $a=0$ can be excluded due the analytic structure of (\ref{B1t}) then being incompatible with
 the cited theory. As part of this remark we recall too the  large $w$ form
 $v_0 \sim R/w$ as previously noted below (\ref{Rc4}).
 \end{remark}
 
  Particular to the case $Q_0 \neq Q_1$ is a scaling limit involving $Q_0 \to \infty$ and $w \to 0$, which in fact relates to the GinUE with an external charge model  of \cite{BBLM15}. To see this we first note that replacing (\ref{E4g}) in the case $Q_0 \ne Q_1$ and with the charge $NQ_0$ at the south pole is the equation
 \begin{equation}\label{S1}
 {Q_1 \over z - w} + {Q_0 + Q_1 + 1 \over 2 \pi i} \int_{\partial \tilde{\Omega}_{\rm d}}
 {1 \over z - u} {\bar{u} \over 1 + u \bar{u}} \, du = 0, \quad z \in \tilde{\Omega}_{\rm d}.
 \end{equation}  
 Suppose now that we scale
  \begin{equation}\label{S1a}
  z \mapsto \epsilon z, \quad w \mapsto \epsilon w, \quad Q_0 = {1 \over \epsilon^2}, \quad 
  \partial \tilde{\Omega}_{\rm d}\mapsto {1 \over \epsilon} \partial \Sigma_{\rm d}.
   \end{equation} 
After changing variables $u \mapsto \epsilon u$ in the integral and taking the limit $\epsilon \to 0$,
   (\ref{S1}) then reads
 \begin{equation}\label{S2}
 {Q \over z - w} + {1 \over 2 \pi i} \int_{\partial {\Sigma}_{\rm d}}
 {\bar{u}  \over z - w}  \, du = 0, \quad z \in {\Sigma}_{\rm d},
 \end{equation} 
 where for convenience we have set $Q_1 = Q$.
 This relation is indeed a consequence of the GinUE with an external charge equilibrium equation (\ref{1Ha}) (after identifying $w$ with $a$, 
  ${\Sigma}_{\rm d}$ with $D^{\rm d}$, and setting $\rho_{\rm b} = {1 \over \pi}$),
 as follows from the working the first paragraph of the proof of Proposition \ref{P2.2y}. Note that this finding is consistent with (\ref{US}).

 For the scaling of $ \partial \tilde{\Omega}_{\rm d}$ to be well defined, we must have the additional scalings
  \begin{equation}\label{S3}
  R \mapsto \epsilon R, \quad \alpha \mapsto {\alpha \over \epsilon}, \quad \beta \mapsto \epsilon \beta,
 \end{equation}  
 as follows from (\ref{B1t}) and (\ref{abs}). Using (\ref{S1a}) and (\ref{S3}) as appropriate in  (\ref{2.62}) implies that in the limit $\epsilon \to 0$ the rescaled $\alpha$
 satisfies the cubic equation
 \begin{equation}\label{S4}
 2 - 3 w \alpha + (-1-2Q + w^2) \alpha^2 + (1 + Q) w \alpha^3 = 0.
  \end{equation} 
  We again require the solution with a large $w$ series form (\ref{2.71}), for which substitution shows
  \begin{equation}\label{S4a} 
c_1 = 1, \quad c_3 = -Q, \quad c_5 = Q (2Q-1), \quad c_7 = Q (-5Q^2 + 7 Q - 1), \dots
  \end{equation}  
  Also, from (\ref{abs}), (\ref{abs2}) and (\ref{abd}), for the rescaled values of $\beta$
 and $R$ we have
  \begin{equation}\label{S5}
  \beta = (1+Q)\alpha, \qquad R^2 = {(1+Q) \alpha^2 + w \alpha - 1 \over w \alpha^3}= {1 \over w \alpha(2 - w \alpha)},
 \end{equation} 
where the second equality in the formula for $R^2$ follows by using (\ref{S4}) in the first equality. Agreement is seen with (\ref{1Hb}) upon identifying $r$ therein with $R$, $q=R \alpha$, $a=w$, together with $\kappa = R \alpha^2 (1 + Q - R^2)$. Thus with these substitutions the second equation in (\ref{1Hb}) can be solved for $R^2$ to give the second equality in the second formula of (\ref{S5}). Next making the same substitution for $q$ in (\ref{1Hc}), then substituting for $R^2$ according to the second equality in second formula of (\ref{S5}) reduces this sixth order polynomial equation in $q$ to the third order polynomial equation in $\alpha$ (\ref{S4}). In relation to $\kappa$, we begin by rewriting the equation to be verified as $q \kappa = R^2 \alpha((1 + Q)\alpha^2 - q^2)$. Now using the first equality in the second equation of (\ref{S5}) reduces the RHS of this to $R^2 \alpha (1 - q^2) (1 - w \alpha)$. Use of the second equality in the second equation of (\ref{S5}) allows this to be identified with the value of $q \kappa$ in (\ref{1Hb}).

\subsection{Electrostatic energies}\label{S2.4r}
Here we take up the question of deriving the replacement of $\mathcal K_N$, $\mathcal K_N^{\rm pre}$ say,
in the Coulomb formulation the computation of the Boltzmann factor \cite[\S 1.4.1]{Fo10}
for the precritical regime. From this viewpoint, and with the surface of the sphere stereographically projected to the plane, in the pre-critical regime there is a background charge density $- N \mu(z)$, with $\mu(z)$ given by (\ref{E2}), restricted to the domain $\tilde{\Omega}_{\rm d}$. In this domain there are $N$ mobile unit charges.

The electrostatic energy due to the interaction of the background with the unit charges
is given by
\begin{align} \label{2.8a}
U_2 & :=  N \sum_{j=1}^N  \int_{\tilde{\Omega}_{\rm d}} \mu(z') \log |z_j - z'| \, d^2 z' \nonumber \\
& = {N \over 2} \bigg ( N C  +  \sum_{j=1}^N \Big ( (Q_0 + Q_1 + 1) \log (1 + |z_j|^2) -
Q_1 \log | w - z_j|^2 \Big ) \bigg ),
\end{align}
where the equality follows from (\ref{E1}). Note that the constant $C$ in (\ref{E1}), which shows itself
again in (\ref{2.8a}), is yet to be determined. Making the explicit choice $z = \zeta(1)$ in (\ref{E1}),
which from the defining property of $\zeta(u)$ relating to $\partial \Omega_{\rm d}$ corresponds to the
right boundary of the droplet on the real axis, we read off the form
\begin{equation} \label{2.8r}
C = - (Q_0 + Q_1 + 1) \log (1 + |\zeta(1)|^2) + Q_1 \log |w - \zeta(1) |^2 + W(\zeta(1)),
\end{equation}
where
\begin{equation}\label{Xx4} 
W(Z):=\int_{\tilde{\Omega}_{\rm d}} \log | Z - z|^2 \mu(z) \, d^2 z.
\end{equation}
The electrostatic energy of the interaction of the background with itself is given by
\begin{align} \label{2.8b}
U_3 & := - {1 \over 2}  \int_{\tilde{\Omega}_{\rm d}} d^2 z \, \mu(z) \int_{\tilde{\Omega}_{\rm d}} d^2 z' \, \mu(z') \log | z - z'|  \nonumber \\
& = - {N^2 \over 4} \bigg ( C   +  (Q_0 + Q_1 + 1) \int_{\tilde{\Omega}_{\rm d}} \log (1 + |z|^2) \mu(z) \,
 d^2 z -  
Q_1  W(w)
 \bigg ),
\end{align}
where (\ref{E1}) has been used to reduce the double integral to a single integral.
From the Coulomb gas construction of the Boltzmann factor \cite[\S 1.4.1]{Fo10}, one has that $\mathcal K_N^{\rm pre}$ is the
factor independent of $\{z_j\}$ in $e^{-\beta(U_2 + U_3)}$ (and also that the terms independent of
$\{z_j\}$ in $U_2 + U_3$ are equal to the  the energy functional (\ref{E1y}) after stereographic projection), and so
\begin{multline} \label{2.8c}
\mathcal K_N^{\rm pre} = \exp \bigg ( - {\beta N^2 \over 4} C + 
{\beta N^2 \over 4} (Q_0 + Q_1 + 1) \int_{\tilde{\Omega}_{\rm d}} \log (1 + |z|^2) \mu(z) \,
 d^2 z 
 - {\beta N^2 Q_1 \over 4} W(w) \\
 +{\beta N^2 Q_1 \over 2} (1 + Q_0) \log (1 + |w|^2) \bigg ).
 \end{multline}
 Here the term on the final line compensates for the factor $1/(1+|w|^2)^{r(K+N)}$ on the RHS of (\ref{13}) when the parameters are specialised as in (\ref{sf}).
Thus, after substituting (\ref{2.8r}), 
\begin{multline}\label{2.8c+}
\frac{4}{\beta N^2} \log \mathcal K_N^{\rm pre}  =  (Q_0 + Q_1 + 1) \log (1 + |\zeta(1)|^2) - Q_1 \log |w - \zeta(1) |^2 - W(\zeta(1))
\\
+(Q_0 + Q_1 + 1) \int_{\tilde{\Omega}_{\rm d}} \log (1 + |z|^2) \mu(z) \,  d^2 z - Q_1 W(w) +
2 Q_1(1+Q_0)  \log (1 + |w|^2). 
\end{multline}

Explicit evaluation of the integral (\ref{Xx4}) specifying $W(z)$, and the one on the second line of (\ref{2.8c+}), is in fact possible. We begin with the integral appearing explicitly in the second line.

 \begin{prop}\label{P8x}
 Let $\zeta(u)$ be specified by (\ref{B1t}), let $a$ be as therein, let $v_0$ be as in (\ref{zzE}),
 and define $c$ as in (\ref{Xx1}) below. We have
  \begin{multline} \label{2.9s} 
  \int_{\tilde{\Omega}_{\rm d}} \log (1 + |z|^2) \mu(z) \, d^2 z 
 =  1 - Q_0 \log \Big ( 1 + {1+Q_1 \over Q_0} \Big ) + {w^2 \zeta'(1/v_0) \over v_0^2 \zeta'(v_0)} Q_1
 \log \Big ( {1  - v_0^2 \over 1  - a v_0} \Big ) \\
 + Q_0 \log \Big ( {v_0 \over a} \Big ) 
 - (Q_0 + Q_1 + 1) \log \Big ( {1  - v_0 b \over 1 - ab} \Big ) - {Q_1} \log  ( |c| (v_0 - a)).
 \end{multline}
 \end{prop}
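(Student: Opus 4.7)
The strategy is to apply Stokes' theorem to convert the area integral into contour integrals over $\partial\tilde{\Omega}_{\rm d}$, then pull those contours back to $|u|=1$ via the conformal map $\zeta$, and finally evaluate the resulting contour integrals by residue calculus at the singularities in $|u|<1$.

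First I would use the identity (\ref{2.11a+}) to write $\mu(z)\,d^2z = \frac{Q_0+Q_1+1}{\pi}\partial_{\bar z}\bigl(\bar z/(1+|z|^2)\bigr)\,d^2z$ and distribute $\partial_{\bar z}$ across its product with $\log(1+|z|^2)$. Stokes' theorem then produces a boundary term plus a residual area integral of $|z|^2/(1+|z|^2)^2 = (1+|z|^2)^{-1} - (1+|z|^2)^{-2}$; the second piece contributes the explicit constant $-1$ after multiplication by $(Q_0+Q_1+1)/\pi$, via the normalization of $\mu$ recalled in Remark \ref{R1}. For the first piece I would apply a second Stokes trick based on $\partial_{\bar z}\bigl(z^{-1}\log(1+|z|^2)\bigr) = (1+|z|^2)^{-1}$ (whose apparent singularity at $z=0$ is removable), converting the remaining area integral $\int_{\tilde\Omega_{\rm d}} d^2 z/(1+|z|^2)$ into another boundary integral. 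Both boundary integrals are now over $\partial\tilde\Omega_{\rm d}$ and involve $\log(1+|z|^2)$ together with rational functions of $z,\bar z$.

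Next I would parametrize $\partial\tilde\Omega_{\rm d}$ by $z=\zeta(u)$ with $|u|=1$, using $\overline{\zeta(u)} = \zeta(1/u)$ on the unit circle (the coefficients of (\ref{B1t}) are real) and noting the orientation reversal: counterclockwise traversal of $|u|=1$ corresponds to clockwise traversal of $\partial\tilde\Omega_{\rm d}$ because $\zeta$ maps the disk interior to the droplet exterior. Setting $S(u) := 1 + \zeta(u)\zeta(1/u)$, the quadratic (\ref{B3+}) yields the factorization
\begin{equation*}
S(u) = \frac{-(a+R^2 b)(u-v_0)(u-1/v_0)}{-a(u-a)(u-1/a)},
\end{equation*}
so the winding numbers of numerator and denominator cancel and $\log S(u)$ is single-valued on $|u|=1$. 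The integrands are then meromorphic functions of $u$ multiplied by $\log S(u)$, with interior singularities at $u=0$, $u=a$, and a branch-log singularity at $u=v_0$ (the preimage of $w$).

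To evaluate them I would deform the contour inward and collect contributions from each interior singularity. The purely rational residues at $u=0$ and $u=a$ produce the $Q_0\log(v_0/a)$ and $-(Q_0+Q_1+1)\log\bigl((1-v_0 b)/(1-ab)\bigr)$ terms, with coefficients consolidating via the relation $\beta/\alpha = (1+Q_1)/Q_0$ from Proposition~\ref{P2.5r}. For the integrals containing $\log S(u)$ I would use the integration-by-parts identity $\oint \log S\cdot g(u)\,du = -\oint G(u)\,d\log S$ (with $G$ a primitive of $g$ and with no endpoint term since $\log S$ is single-valued on the contour), reducing each logarithmic contour integral to a rational residue at the zero $v_0$ of $S$; these produce the $\log\bigl((1-v_0^2)/(1-av_0)\bigr)$ and $-Q_1\log(|c|(v_0-a))$ contributions, with the coefficient $\bigl(w^2\zeta'(1/v_0)/(v_0^2\zeta'(v_0))\bigr)Q_1$ appearing as the logarithmic derivative of $S$ at $v_0$ combined with the second identity of Proposition~\ref{P2.5r}. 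The constant piece $1 - Q_0\log\bigl(1+(1+Q_1)/Q_0\bigr)$ arises from combining the $u=0$ residue of the second Stokes trick with the normalization constant $\pi/(Q_0+Q_1+1)$. The main obstacle is the bookkeeping of branch-cut conventions for $\log S(u)$ (with cuts running between $v_0$ and $1/v_0$, and between $a$ and $1/a$, through the exterior of the unit disk), which must be consistently tracked so that the various residues combine in the stated form; I would sanity-check the final combination against the $w\to\infty$ limit, where the droplet reduces to the centred disk of radius $1/\sqrt{Q_0+Q_1}$ from (\ref{Rc+}) and a direct polar-coordinates computation is available.
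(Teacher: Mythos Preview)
Your opening moves match the paper's (Stokes to the boundary, pull back to $|u|=1$), though you use two Stokes applications where the paper uses the single identity $\frac{\log(1+z\bar z)}{(1+z\bar z)^2}=\frac{1}{(1+z\bar z)^2}-\partial_{\bar z}\bigl(\frac{\log(1+z\bar z)}{z(1+z\bar z)}\bigr)$ to reach one boundary integral directly. The substantive divergence is in how the factor $\log S(u)$, $S(u)=1+\zeta(u)\zeta(1/u)$, is handled. The paper does not integrate by parts: it observes that $S(u)<0$ precisely on the real segment $(a,v_0)$ (from your own factorisation, the signs of the four linear factors give this), so the branch cut of $\log S$ inside the disk is that segment. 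After first picking up the residue at $u=0$ (producing $-Q_0\log(1+(1+Q_1)/Q_0)$), the contour is shrunk onto $(a,v_0)$. A small circle at $v_0$ gives the $-Q_1\log(|c|\epsilon)$ piece, while along the two sides of the cut only the $\pm\pi i$ from the logarithm survives, leaving the real rational integral $-\int_a^{v_0}\zeta'(u)/(\zeta(u)S(u))\,du$. Its simple poles are at $1/v_0$, $0$, $1/b$ --- not at $a$ --- and partial fractions then yields (\ref{2.9s}).

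Your integration-by-parts route $\oint(\log S)\,g\,du=-\oint G\,d\log S$ has a gap: the primitive $G$ is generally not single-valued on $|u|=1$, because $g$ has nonzero residues inside the disk. For $g=\zeta'/\zeta$ (your second Stokes term) the residue at $u=0$ is $-1$, so $G=\log\zeta$ picks up $2\pi i$ monodromy and the boundary term in the IBP does not vanish; for $g=\zeta'/(\zeta S)$ the residues at $0$ and $v_0$ are $-Q_0/(1+Q_0+Q_1)$ and $-Q_1/(1+Q_0+Q_1)$. Separately, your claim of a ``purely rational residue at $u=a$'' is incorrect: $1/S$ has a \emph{zero} at $a$ and $\zeta'/\zeta$ is regular there, so $\zeta'/(\zeta S)$ vanishes at $u=a$; the point $a$ enters only as the left endpoint of the branch cut. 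Your scheme could likely be repaired by tracking these monodromy terms, but as written it misses the branch-cut structure on $(a,v_0)$ that drives the paper's computation.
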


 \begin{proof}
     Making use of the identity
 \begin{equation} \label{2.8s}   
 {\log (1 + z \bar{z}) \over (1 + z \bar{z})^2} 
 = {1 \over (1 + z \bar{z})^2} - \partial_{\bar{z}}  \Big ( { \log (1 + z \bar{z}) \over z(1 + z \bar{z}) } \Big ) ,
 \end{equation}
 recalling the explicit form of $\mu(z)$ (\ref{E2}), and using the complex form of Green's theorem we have
  \begin{multline} \label{2.8t} 
 \int_{\tilde{\Omega}_{\rm d}} \log (1 + |z|^2) \mu(z) \, d^2 z = 1 - {Q_0 + Q_1 + 1 \over 2 \pi i}
 \int_{\partial \Omega_{\rm d}} { \log (1 + z \bar{z}) \over z(1 + z \bar{z}) } \, dz \\
 = 1 +  {Q_0 + Q_1 + 1 \over 2 \pi i} \oint_{|u|=1}
{ \log (1 + \zeta(u) \zeta(1/u)) \over \zeta(u) (1 + \zeta(u) \zeta(1/u)) } \zeta'(u) \, du,
\end{multline}
where the second equality follows by parameterising $\partial \Omega_{\rm d}$ using $\zeta(u)$. (There is the subtlety that since the conformal mapping is from the inside of the unit circle to the outside of the droplet, the direction of traversing the boundary is reversed, and thus the change of sign.)

Due to the factor $ \zeta'(u)/\zeta(u)$, there is a simple pole at $u=0$. We thus deform the contour from
the unit circle $|u|=1$ to a new closed contour $\mathcal C$ which extends the unit circle by running
along the real axis from $u=-1$ on the upper half plane side to encircle $u=0$ in a small loop, then return to
$u=-1$ along the real axis on the lower half plane side. For this to leave the value of the integral unchanged,
we must add $2 \pi i$ times the residue at $u=0$. This can be computed making use of the first relation in 
(\ref{B5a}) to give in place of the RHS of (\ref{2.8t})
 \begin{equation} \label{2.8n} 
  1 - Q_0 \log \Big ( 1 + {1+Q_1 \over Q_0} \Big ) +  {Q_0 + Q_1 + 1 \over 2 \pi i} \oint_{\mathcal C}
{ \log (1 + \zeta(u) \zeta(1/u)) \over \zeta(u) (1 + \zeta(u) \zeta(1/u)) } \zeta'(u) \, du.
 \end{equation}

Consider $1/(2 \pi i)$ times the contour integral in (\ref{2.8n}), which we will
denote $I_1$. From the property of
$v_0$ (\ref{zzE}) we see that the integrand has a branch point at $v_0$. 
Examination of the functional form of $\zeta(u) \zeta(1/u)$ from (\ref{B1t}) and recalling the inequality (\ref{abY}), we see that inside $|u| = 1$ the integrand is discontinuous along the segment of the real axis
$(a,v_0)$ due to the argument of the logarithm function then being negative. By Cauchy's theorem, we can shrink the contour $\mathcal C$  to begin at $u_0 + \epsilon$ ($0 < \epsilon \ll 1$), traverse the half circle $u_0 + \epsilon e^{i \theta}$, $0 < \theta < \pi$, run along the
interval on the real axis starting at $u_0 - \epsilon$, staying in the upper half plane side of the real axis, until it reaches the point $a$. Then the contour is to change direction, closing by following the mirror image of what was just traversed, now on the lower half plane side of the real axis.

To determine the contribution from the small circle about $v_0$ to $I_1$, we note that for $u \to v_0$,
\begin{equation}\label{Xx1}
{\zeta'(u) \over \zeta(u)} \to {\zeta'(v_0) \over w}, \quad 1+ \zeta(u) \zeta(1/u) \approx c (u - u_0) 
\: \: {\rm with} \: \:  c = - {\zeta'(v_0) \over w} \Big ( {1 + Q_0 + Q_1 \over Q_1} \Big ),
\end{equation}
where the second formula follows from a first order Taylor expansion and making use of the second formula in (\ref{B5a}).
Parametrising this circle as $u = v_0 + \epsilon e^{i \theta}$, $- \pi < \theta < \pi$, and substitution of
(\ref{Xx1}) in the integrand as is valid for $0< \epsilon \ll 1$ shows that the leading contribution to $I_1$ from this portion of the deformed contour equals
\begin{equation}\label{Xx2}
- {Q_1 \over 1 + Q_0 + Q_1} \log(|c| \epsilon).
\end{equation}
For the integral along both sides of the branch cut, only the imaginary part of the logarithm, which is equal to $\pi i$ (resp., $-\pi i$) for when the contour is in the upper (resp., lower) half plane, does not cancel out, with the contribution to $I_1$
then equalling
\begin{equation}\label{Xx3}
- \int_a^{v_0 - \epsilon} { \zeta'(u) \over \zeta(u) (1 + \zeta(u) \zeta(1/u)) } \, du.
\end{equation}
The integral $I_1$ is the sum of (\ref{Xx2}) and (\ref{Xx3}) in the limit $\epsilon \to 0^+$. Substituting  in (\ref{2.8n}) gives, upon minor manipulation to permit the computation of the limit, 
\begin{multline}
1 - Q_0 \log \Big ( 1 + {1+Q_1 \over Q_0} \Big ) -(Q_0 + Q_1 + 1) \\ \times
\int_a^{v_0}  \bigg (  { \zeta'(u) \over \zeta(u) (1 + \zeta(u) \zeta(1/u)) }
  - {Q_1 \over 1 + Q_0 + Q_1} {1 \over v_0 - u} \bigg )
  \, du   - {Q_1} \log  ( |c| (v_0 - a)) .
\end{multline}

The integral  in this expression can be evaluated by noting  that the integrand is a rational function of $u$ with poles at $u={1 \over v_0}$, 0 and ${1 \over b}$, and converting to partial fraction form making use of (\ref{B5a}),
with the result being  (\ref{2.9s}).
\end{proof}

\begin{remark}\label{R2.5} ${}$ \\
1.~In the case $w \to \infty$ we know from Remark \ref{Rr} that the support of the droplet is a disk of radius (\ref{Rc+}). Hence we must have
\begin{multline}\label{2.97}
\lim_{w \to \infty}  \int_{\tilde{\Omega}_{\rm d}}  \log (1 + |z|^2)  \mu(z) \, d^2 z 
 =  {(1+Q_0 + Q_1) \over \pi} \int_{|z|<1/\sqrt{1/(Q_0+Q_1)}} 
 {\log (1 + |z|^2) \over (1 + |z|^2)^2} 
 \, d^2 z  \\
 = 1 - (Q_0 + Q_1) \log \Big ( 1 + {1 \over Q_0 + Q_1} \Big )
.
\end{multline}
To compare against the RHS of (\ref{2.9s}), for $w \to \infty$ we note from Remark \ref{Rr}.2 that the third and fifth terms do not contribute. For the fourth term the limiting value is
$Q_0 \log (1 + Q_1/Q_0)$. In relation to the final term,
from the definition of $c$ in (\ref{Xx1}) one has $c \sim w (1 + Q_0 + Q_1)/(R Q_1)$, allowing us to compute
$$
\lim_{w \to \infty}  \log  ( |c| (v_0 - a)) = \log \Big ( {1 + Q_0 + Q_1 \over Q_0 + Q_1} \Big ).
$$
Substituting these limiting forms in the RHS of (\ref{2.9s}) gives agreement with (\ref{2.97}). \\
2.~Consider the case $Q_0 = Q_1$, and set $Q_0 = \mu /(w_s^2 - 1)$, $\mu > 1$. From (\ref{Rc2}) and 
(\ref{wdc}) one sees that $ \mu \to 1^+$ corresponds to the boundary with the post-critical regime. In the case $Q_0 = Q_1$ we have available too the analytic formulas (\ref{Rc3}) and (\ref{Rc4}). With the aid of computer algebra, we find that this limit can be computed on the RHS of (\ref{2.9s}), giving the simplified functional form
\begin{equation}\label{2.97x} 
1 + \log 2 - (2Q_0 -1) \log \Big ( 1 + {1 \over 2 Q_0} \Big ) - \log \Big (1 + {1 \over Q_0} \Big ).
\end{equation}

\end{remark}

\smallskip
 Recalling (\ref{2.8r}), the remaining integrals in (\ref{2.8c}) are of the form (\ref{Xx4}),
for $Z$ on or outside of the boundary of $\tilde{\Omega}_{\rm d}$. To simplify this class of integral, 
use will be made of the analogue of (\ref{2.20a}) in relation to $\tilde{\Omega}_{\rm d}$ and $H(z)$, which upon substituting
(\ref{B2xq+}) reads 
\begin{equation}\label{Xx5} 
{1 \over Q_0 + Q_1 + 1} \int_{\tilde{\Omega}_{\rm d}} { \mu(z) \over \zeta(u) - z} \, d^2 z =
{\zeta(1/u) \over 1 + \zeta(u) \zeta(1/u)} -
{Q_1 \over Q_0 + Q_1 + 1} {1 \over \zeta(u) - w}.
\end{equation}

\begin{prop}\label{P8y}
Let $\zeta(u)$ be specified by (\ref{B1t}), let $R,a$ be as specified in Proposition \ref{P2.7} and
let $v_0$ be given by (\ref{abs1}). Also, define $v_1$ by (\ref{U1}) below
and let $u_Z$ with $0 < u_Z \le 1$ be such that $Z = \zeta(u_Z)$. With $W(Z)$ defined by (\ref{Xx4}) we have
\begin{equation}\label{Xx91} 
{1 \over 2} W(Z) =  -(Q_0 + 1) \log (1 - u_Z a) + Q_1 {w^2 \zeta'(1/v_0) \over v_0^2 \zeta'(v_0)} \log (1 - u_Z v_0) - Q_1 \log \Big ( 1 - {u_Z \over v_1}  \Big ) +  \log  \Big ({ R \over u_Z} \Big ).
\end{equation}
\end{prop}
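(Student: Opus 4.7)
The plan is to mimic the strategy of Proposition \ref{P8x}: convert the two-dimensional integral defining $W(Z)$ into an explicit antiderivative by combining the chain rule with the Stieltjes-transform identity (\ref{Xx5}), and then integrate along the curve $Z = \zeta(u_Z)$. The first step is to observe that $\zeta(u_Z) \sim R/u_Z$ as $u_Z \to 0^+$, while $W(Z)/2 - \log|Z| \to 0$ as $|Z| \to \infty$, so the function
$$
F(u_Z) := \tfrac{1}{2}W(\zeta(u_Z)) - \log(R/u_Z)
$$
satisfies $F(0^+) = 0$. The target formula (\ref{Xx91}) is then equivalent to expressing $F(u_Z)$ in closed form, and it suffices to compute $dF/du_Z$ and integrate from $0$ to $u_Z$.

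Since $\mu$ and $\tilde{\Omega}_{\rm d}$ are invariant under $z\mapsto \bar z$ (recall $w>0$), for real $Z$ the complex Stieltjes transform $G(Z):=\int_{\tilde{\Omega}_{\rm d}} \mu(z)/(Z-z)\,d^2z$ is real and equals $\tfrac12\partial_X W$. Applying the chain rule to $Z = \zeta(u_Z)$ and substituting (\ref{Xx5}) yields
$$
\frac{dF}{du_Z} = \zeta'(u_Z)\left[(Q_0+Q_1+1)\frac{\zeta(1/u_Z)}{1+\zeta(u_Z)\zeta(1/u_Z)} - \frac{Q_1}{\zeta(u_Z)-w}\right] + \frac{1}{u_Z},
$$
which is a rational function of $u_Z$. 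Using the consequence $Q_1\, w^2\zeta'(1/v_0) = (Q_0+1)\, v_0^2\zeta'(v_0)$ of the second identity in Proposition \ref{P2.5r}, the target reduces to the partial-fraction identity
$$
\frac{dF}{du_Z} = \frac{(Q_0+1)a}{1-au_Z} - \frac{(Q_0+1)v_0}{1-v_0 u_Z} + \frac{Q_1}{v_1 - u_Z},
$$
with $v_1$ the second root of $\zeta(u)=w$ (besides $v_0$).

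Both sides vanish as $u_Z \to \infty$, so it suffices to verify that they share the same simple poles and residues. Exploiting the factorisations $1+\zeta(u)\zeta(1/u) = -(a+R^2 b)(u-v_0)(u-1/v_0)/\bigl((1-au)(u-a)\bigr)$ and $\zeta(u) - w = aw(u-v_0)(u-v_1)/\bigl(u(1-au)\bigr)$, the candidate poles of the LHS lie at $u_Z \in \{0, 1/a, v_0, 1/v_0, v_1\}$. Short local expansions give: residue $-(Q_0+1)$ at $u_Z=1/a$ (using $R^2b/(a+R^2b) = (1+Q_1)/(Q_0+Q_1+1)$ from the first equation of Proposition \ref{P2.5r}); residue $Q_0+1$ at $u_Z=1/v_0$ (dual to the computation at $v_0$); residue $-Q_1$ at $u_Z=v_1$; and cancellations to zero at both $u_Z=0$ and $u_Z=v_0$. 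Integrating the partial-fraction form termwise from $0$ to $u_Z$ then gives $F(u_Z) = -(Q_0+1)\log(1-au_Z) + (Q_0+1)\log(1-v_0 u_Z) - Q_1\log(1 - u_Z/v_1)$, and restoring the middle coefficient as $Q_1\, w^2\zeta'(1/v_0)/(v_0^2\zeta'(v_0))$ reproduces (\ref{Xx91}).

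The main technical obstacle is the residue cancellation at $u_Z=v_0$: both the first and second bracketed terms in $dF/du_Z$ contribute simple poles there, and the second equation of Proposition \ref{P2.5r} forces the residue of the Stieltjes-type term to be exactly $Q_1/(Q_0+Q_1+1)$, which after multiplication by $Q_0+Q_1+1$ produces the $+Q_1$ that precisely cancels the $-Q_1$ contribution from $\zeta'(u)/(\zeta(u)-w)$. An analogous residue matching at $u_Z=0$ relies on the first equation of Proposition \ref{P2.5r} and combines nontrivially with the added $1/u_Z$ term; the remaining poles at $1/a, 1/v_0, v_1$ only require direct local expansions of $\zeta$ and $\zeta(1/\cdot)$.
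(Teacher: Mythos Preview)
Your proof is correct and follows essentially the same approach as the paper: both differentiate $W(\zeta(u_Z))$ via the Stieltjes identity (\ref{Xx5}), regularise the $u_Z\to 0$ behaviour by subtracting $\log(R/u_Z)$, and then integrate the resulting rational function using its partial-fraction decomposition with poles at $1/a$, $1/v_0$, and $v_1$. Your treatment is slightly more explicit than the paper's in that you verify the partial fractions by residue matching---in particular making manifest the cancellation at $u_Z=v_0$ via the second relation in Proposition~\ref{P2.5r}---whereas the paper simply asserts the pole structure and converts; both arrive at the same formula, with your observation $Q_1\,w^2\zeta'(1/v_0)/(v_0^2\zeta'(v_0))=Q_0+1$ making the agreement with (\ref{Xx91}) transparent.
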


\begin{proof}
We multiply both sides of (\ref{Xx5}) by $\zeta'(u)$, integrate from $u_1$ to $u_Z$ and take the real part of both sides to deduce
\begin{multline}\label{Xx6}
{1 \over 2 (Q_0+Q_1+1)} \Big ( W(\zeta(u_Z)) - W(\zeta(u_1)) \Big )  \\ =
\int_{u_1}^{u_Z} { \bigg (
{  \zeta(1/u) \over  (1 + \zeta(u) \zeta(1/u)) }} 
-
{Q_1 \over Q_0 + Q_1 + 1} {1 \over \zeta(u) - w} \bigg )  \zeta'(u) \, du.
\end{multline}
Now take $u_1 \to 0^+$ using that $\zeta(u_1) \sim {R \over u}$ and ${1 \over 2} W(\zeta(u_1)) \sim \log(R/u_1)$. Moving the term involving $W(\zeta(u_1))$ to the RHS then shows that the limit $u_1 \to 0^+$ is
well defined.
As a further manipulation, the fact the integrand (which itself is $H(\zeta(u)) \zeta'(u)$)
has a simple pole at $u=0$ with residue given by the negative of (\ref{B5d}) can be compensated for by
adding (\ref{B5d}) times $1/u$ to the integrand, with the value of this after integration then subtracted.
The value of $u_1$ in the terminals of integration can now be set to zero, giving 
\begin{multline}\label{Xx9} 
{1 \over 2( Q_0 + Q_1 + 1)} W(Z) = 
\int_{0}^{u_Z} { \bigg ( {  \zeta(1/u) \zeta'(u) \over  (1 + \zeta(u) \zeta(1/u)) }} 
- {1\over Q_0 + Q_1 + 1} \Big ( {Q_1  \zeta'(u) \over \zeta(u) - w} - {1 \over u} \Big )\bigg )  \, du \\
+
{ 1 \over Q_0 + Q_1 + 1} \log  \Big ({ R \over u_Z} \Big ). 
\end{multline}

In relation to the integral in this expression, we have from (\ref{B1t}) and (\ref{zzE}) that the integrand is a rational function of $u$ with simple poles at $u=1/a, 1/v_0$, and a further point $u=v_1$, where $v_1 > 1$ has the property $\zeta(v_1) = w$; cf.~(\ref{zz}). From (\ref{B1t}), the equation $\zeta(u) = w$ is quadratic in $u$, so the two roots $v_0, v_1$ (since $v_0<1$ the statement that $v_1>1$, is a consequence of the map $\zeta$ being injective in the unit disc) are simply related,
\begin{equation}\label{U1}
v_1 = {R \over a w} {1 \over v_0}.
\end{equation}
Converting to partial fractions form allows the integrals to be computed, giving (\ref{Xx91}). 
\end{proof} 

\begin{figure}
\begin{minipage}[b]{0.48\linewidth}
    \centering
    \includegraphics[width=\textwidth]{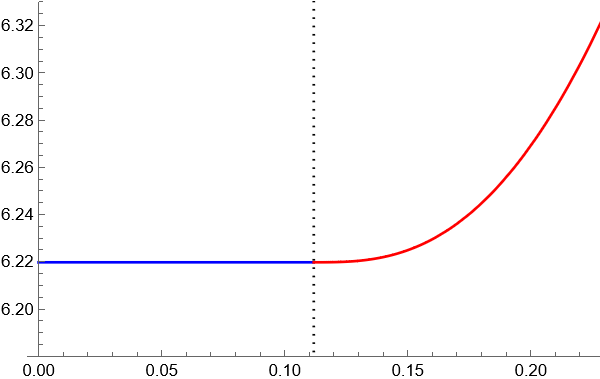}
    \subcaption{$Q_0=Q_1=4$}
  \end{minipage} \quad 
   \begin{minipage}[b]{0.48\linewidth}
    \centering
    \includegraphics[width=\textwidth]{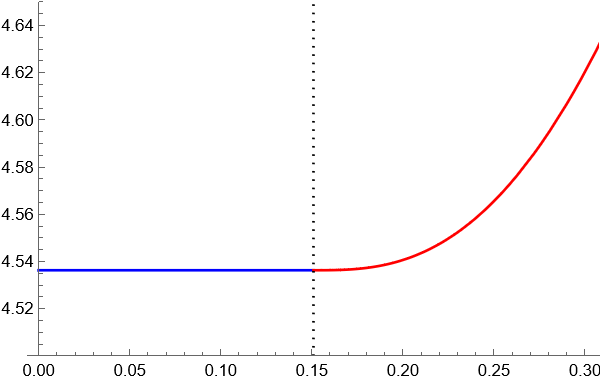}
    \subcaption{$Q_0=4, Q_1=2$}
  \end{minipage}   
    \caption{The graph shows the energy $w \mapsto  \frac{1}{N^2} \log K_N^{\rm post}$ when the condition \eqref{wdc} is violated, and the energy $w \mapsto  \frac{1}{N^2} \log K_N^{\rm pre}$ when  \eqref{wdc} holds. Here, plot (a) illustrates the case with symmetric charges, where $Q_0=Q_1=4$, while plot (b) illustrates the case with asymmetric charges, where $Q_0=4$ and $Q_1=2$. The vertical lines indicate the critical point \eqref{def of w critical point}. The two graphs (a) and (b) have a similar shape, but a difference in their values can be observed; specifically, the values decrease as \( Q_1 \) decreases. 
    }
    \label{Fig_energy Kpostpre}
\end{figure}

\begin{remark} ${}$ \\
1.~For application to (\ref{2.8c+})
the relevant choices of $Z$ in Proposition \ref{P8y} are $Z=\zeta(1)$ and $Z=w$, with the corresponding values
of $u_Z$ being $u_Z = 1$ and $u_Z=w$ respectively. \\
2.~Analogous to (\ref{2.97}) we have
\begin{equation}\label{2.97c}
\lim_{w \to \infty} W(z) = {(1 + Q_0 + Q_1) \over \pi}
 \int_{|z|<1/\sqrt{1/(Q_0+Q_1)}} 
 {\log |Z - z|^2 \over (1 + |z|^2)^2} 
 \, d^2 z  
 =  \log |Z|^2,
\end{equation}
where the second equality follows from the fact that $\log |Z - z|^2 $ can, for $|Z|$ outside of the droplet,
be replaced by $\log |Z|^2$ as follows from the multipole expansion and  rotational symmetry. With regards to (\ref{Xx91}), we see that the first three terms vanish in the limit $w \to \infty$. For the final term, in this limit the requirement that $Z= \zeta(u_Z)$ simplifies to $Z= R/ u_Z$, giving agreement with (\ref{2.97c}). \\
3.~Analogous to  the result of Remark \ref{R2.5}.2, we can simplify (\ref{Xx91}) in the case $Q_0 = Q_1$ with
$Q_0 = \mu /(w_s^2 - 1)$ for $\mu \to 1^+$ (i.e.~at the boundary with the post-critical phase). Thus, again with the aid of computer algebra, we find the simplified expressions as are relevant to (\ref{2.8c+})
\begin{align}\label{2.97d}
\lim_{\mu \to 1^+}{1 \over 2} W(\zeta(1)) & = - (Q_0 + 1) \log\Big ( {3 w_s^2 - 1 \over w_s^2 + 1} \Big ) -
Q_0 \log \bigg ( {1 \over 2} \Big (1 + {1 \over w_s^2} \Big ) \bigg ) + \log w_s, \nonumber \\
\lim_{\mu \to 1^+}{1 \over 2} W(\zeta(v_0)) & = - (Q_0 + 1) \log\Big (
{2 w_s^2 \over 1 + w_s^2} \big ) - Q_0 \log \bigg ( {1 \over 2} \Big (1 + {1 \over w_s^2} \Big ) \bigg ) + \log w_s,
\end{align}
where in this limit $Q_0 = 1/(w_s^2 - 1)$. \\
4.~Substituting (\ref{2.97d}) and the result of Remark \ref{R2.5}.2 in (\ref{2.8c+}), supplemented by the facts for in the setting of point 3.
\begin{equation}\label{2.97e}
\lim_{\mu \to 1^+}\log (1 + |\zeta(1)|^2) = \log \Big ( {(1 + w_s^2)^3 \over ( 3 w_s^2 - 1)^2}  \Big ), \quad 
\lim_{\mu \to 1^+}\log (w - \zeta(1))  = \log \Big ( {(1 + w_s^2)^2 \over 2 w_s  ( 3 w_s^2 - 1)}  \Big ),
\end{equation}
we are able to compute a simplified formula for (\ref{2.8c+}) in this circumstance. Thus we find
\begin{equation}\label{2.97f}
\lim_{\mu \to 1^+} \Big (\frac{4}{\beta N^2} \log \mathcal K_N^{\rm pre} \Big ) =
1+2Q_0 +2(1+2Q_0) \log(1+2Q_0) - 2 (1 + Q_0)^2 \log(1+Q_0) - 2Q_0^2 \log Q_0.
\end{equation}
This is identical to ${4 \over \beta N^2} \log \mathcal K_N^{\rm post}$ with $Q_0 = Q_1$ as given by
(\ref{C2}), which is necessary since the limit $\mu \to 1^+$ corresponds to the phase boundary. cf. Figure~\ref{Fig_energy Kpostpre}.
\\
5.~A companion to the simplified form (\ref{2.97f}), this time applying for general $Q_0, Q_1 \ge 0$, is the large $w$ expansion
\begin{multline}\label{2.97g}~
\frac{4}{\beta N^2} \log \mathcal K_N^{\rm pre}  \sim 2 Q_0 Q_1 \log |w|^2 + (Q_0+Q_1+1) +
 (Q_0 + Q_1) \log \Big ( {1 \over Q_0 + Q_1} \Big ) \\
 - (Q_0+Q_1+1)  (Q_0 + Q_1) \log\Big (1 +  {1 \over Q_0 + Q_1} \Big ) + \cdots,
 \end{multline}
 where higher order terms go to zero with $w$. This is deduced using (\ref{2.97}) and (\ref{2.97c}) in (\ref{2.8c+}). 
cf. \cite[Eq.(1.13)]{BKS23} for a comparison. See also the left graph in Figure~\ref{Fig_energy2D1D}.
\\
6.~A question of long standing interest in the study of the Ginibre ensembles and one-components plasmas more generally is the large $N$ asymptotic form of the probability that a region is free of eigenvalues (particles); see \cite[\S 3.1]{BF24}. The leading term in this asymptotic expansion is known to involve a weighted integration over the hole region, as well as over of a measure supported on its boundary (the so-called balayage measure) \cite{Ad18,AR17,Ch23}. The case of the spherical ensemble is considered explicitly in \cite[Th.~2.5(ii)]{Ch23} (see also \cite{BP24}), where the weighted integration over the hole region is shown to be precisely the LHS of (\ref{2.9s}). Another point of interest is that the balayage measure can be specified in terms of the conformal map from the unit disk to the hole region \cite[\S 4.2]{Ch23}.
\end{remark}

\section{Duality identity from random matrix theory and first consequences} 
 
 \subsection{The complex (generalised) spherical ensemble}
 
In random matrix theory, the complex (generalised) spherical ensemble of non-Hermitian random matrices is specified by matrices
of the form $G_1^{-1} G_2$, where $G_1, G_2$ are independent
 GinUE matrices (GinUE matrices are $N \times N$ matrices where each entry in an independent standard complex Gaussian);
 see \cite[\S 2.5]{BF24}.
 We denote by SrUE${}_{(N,K)}$ 
  the generalisation of the complex spherical ensemble defined by the construction $(G^\dagger G)^{-1/2} X$,
 where $G$ is an $(N+K) \times N$  complex standard Gaussian matrix, and $X$ is a GinUE matrix (the spherical ensemble
 corresponds to the case $K=0$). 
 
 From \cite[Exercises 3.6 q.3 with $\beta =2$ and $n_1 = N + K$]{Fo10}, a matrix $Y \in {\rm SrUE}_{(N,K)}$ has distribution
\begin{equation}
\det ( \mathbb I_N + Y^\dagger Y)^{-K - 2N}.
\end{equation}
It can be deduced from this that the eigenvalue probability density function of matrices from $ {\rm SrUE}_{(N,K)}$ is
proportional to (see e.g.~\cite[Eq.~(2.54) with $M=N$ and $n=K+N$]{BF24})
\begin{equation}\label{3}
\prod_{l=1}^N {1 \over (1 + | z_l|^2)^{K+N+1} }\prod_{1 \le j < k \le N} |z_k - z_j|^2.
\end{equation}
We recognise this as the case $r=0$ of the RHS of (\ref{13}).

\subsection{Duality identity}\label{S3.2}
Let ${\rm JUE}_{n,(a_1,a_2)}$ (the Jacobi unitary ensemble) refer to the random matrix ensemble with 
 eigenvalue PDF proportional to
\begin{equation}\label{1.1}
\prod_{l=1}^n  \lambda_l^{a_1} (1 - \lambda_l)^{a_2} \prod_{1 \le j < k \le n} | \lambda_k - \lambda_j|^2,
\end{equation}
supported on $0 < \lambda_l < 1$.
In the recent work \cite[Final equation of Prop.~5.4]{Fo24+}, the duality identity
\begin{equation}\label{1}
\langle | \det ( x \mathbb I_N - A X) |^{2r} \rangle_{X \in {\rm SrUE}_{N,K+r}} =
\Big \langle \prod_{l=1}^r \det  (  |x|^2 \mathbb I_N + t_l A A^\dagger  )    \Big \rangle_{\mathbf t \in {\rm JUE}_{r,(0,K-r)}}
\end{equation}
was obtained; cf.~(\ref{1Dz}). (The terminology duality relation comes about due to the 
interchange of the role of the key parameters
on each side.)

Our interest is in the case that $A =  \mathbb I_N$ and $x = w$ (the latter for notational convenience) when the RHS reduces to 
   \begin{equation}\label{1a}
\Big \langle \prod_{l=1}^r (  |w|^2 \mathbb I_N + t_l   )^N   \Big \rangle_{\mathbf t \in  {\rm JUE}_{r,(0,K-r)}}.
 \end{equation}
In terms of the eigenvalues of matrices from ${\rm SrUE}_{N,K}$,
the factor $ | \det ( x \mathbb I_N - A X) |^{2r}$ on the LHS of (\ref{1}) with $A =  \mathbb I_N$, $x=w$ is equal to
\begin{equation}\label{1aY}
\prod_{l=1}^N|w -   z_l |^{2r}.
\end{equation}
Hence the duality (\ref{1}) in this special case reads
\begin{equation}\label{1aX}
\Big \langle \prod_{l=1}^N|w -   z_l |^{2r}
\Big \rangle_{\mathbf z \in {\rm SrUE}_{N,K+r}} = \Big \langle \prod_{l=1}^r (  |w|^2 \mathbb I_N + t_l   )^N   \Big \rangle_{\mathbf t \in  {\rm JUE}_{r,(0,K-r)}};
\end{equation}
cf.~(\ref{1Dz}).
In relation to the LHS, significant from the viewpoint of the Coulomb gas system introduced in \S \ref{S1.1}, or more explicitly its stereographic projection to the plane as discussed in \S \ref{S2.1},
is that multiplying (\ref{1aY}) with (\ref{3}) gives the functional form
\begin{equation}\label{3a}
\prod_{l=1}^N { | w -  z_l |^{2r}  \over (1 + | z_l|^2)^{K+r+N+1} }  \prod_{1 \le j < k \le N} |z_k - z_j|^2,
\end{equation}
as appears on the RHS of (\ref{13}),  up to the factor of $(1 + |w|^2)^{-r(K+N)} $.

According to (\ref{1.1}), up to normalisation, the RHS of (\ref{1aX}) is equal to
  \begin{equation}\label{X}
  \int_0^1 dt_1 \cdots \int_0^1 dt_r \, \prod_{l=1}^r (|w|^2 + t_l)^N (1 - t_l)^{K-r} \prod_{1 \le j < k \le r} | t_k - t_j|^2,
 \end{equation}
 where it is required that $K - r > -1$. Changing variables $1 - t_l = s_l$, then $s_l \mapsto (1 + |w|^2) s_l$ gives that  (\ref{X})
 is equal to
 \begin{equation}\label{X1}    
 (1 + |w|^2)^{r(N+K)}
 \int_0^{1/(1 + |w|^2)} ds_1 \cdots  \int_0^{1/(1 + |w|^2)} ds_r  \, \prod_{l=1}^r (1 - s_l)^N s_l^{K-r} \prod_{1 \le j < k \le r} | s_k - s_j|^2.
 \end{equation}
 Removing the factor of $ (1 + |w|^2)^{r(N+K)}$ by associating its reciprocal with the functional form on the LHS of the duality (\ref{X}) --- this is precisely the factor that is otherwise missing in identifying
 the LHS of the duality with the RHS of (\ref{13}).
 The $r$-dimensional integral which remains in
 (\ref{X1}) is, after normalisation, equal to the gap probability $E(0,(1/(1 + |w|^2),1);  {\rm JUE}_{r,(K-r,N)})$.
 Here we have made use of notation analogous to that used below (\ref{1Ex}). Specifically, this quantity specifies the probability that, for the matrix
 ensemble ${\rm JUE}_{r,(K-r,N)}$ there are no eigenvalues in the interval $(1/(1 + |w|^2,1)$. Denote by $\mathcal Q_N(Q_0,Q_1,w)$ the RHS of (\ref{13}) with $r$ and $K$ specialised as in (\ref{sf}) and integrated over $\mathbb C^N$ (this
 then is the configuration integral associated with the partition function for the Coulomb gas).
 In terms of the gap probability we then have the rewrite of the duality identity
 \begin{equation}\label{X9} 
 {\mathcal Q_N(Q_0,Q_1,w) \over \mathcal Q_N(Q_0,0,\cdot)} = C_{N,Q_0N,Q_1N} \,
E(0,(1/(1 + |w|^2),1);  {\rm JUE}_{Q_1 N ,((Q_0 - Q_1)N,N)}),
\end{equation}
where
 \begin{equation}\label{X10} 
C_{N,Q_0N,Q_1N} = {J_{Q_1 N ,(0,(Q_0 - Q_1)N)} \over J_{Q_1 N ,((Q_0 - Q_1)N, N)}}, \quad
J_{n,(a,b)} := \int_0^1 dt_1 \cdots \int_0^1 dt_n \, \prod_{l=1}^n t_l^a (1 - t_l)^b
\prod_{1 \le j < k \le n} (t_k - t_j)^2.
\end{equation}

We remark that an analogous rewrite of the duality (\ref{1Ey}) was given in \cite{BSY24}, and it was from this rewrite that the asymptotic expansions of the partition function for the one component plasma with a macroscopic point charge insertion (\ref{2Ex}), (\ref{2E}), (\ref{2EA}) were deduced. An opportunity for similar analysis is provided by a duality identity
relating the averaged characteristic polynomial for the eigenvalues of the random matrix ensemble formed by a sub-block of a random Haar unitary matrix  to a gap probability in a particular Jacobi ensemble \cite{DS22,SSD23,SS24,Se23}; see Appendix C below. The point process corresponding to the eigenvalues of a sub-block of a random Haar unitary matrix can be identified with a one-component Coulomb gas model in the Poincar\'e disk \cite{FK09}, or equivalently on the surface of the so-called pseudo-sphere (defined in hyperbolic geometry and which has a constant negative curvature) \cite{JT98}. A study of the associated potential problem for a macroscopic point insertion in this model has been undertaken in the recent work \cite{BCMS25}.

 \subsection{Proof of Proposition \ref{P1.1}}\label{S3.3}
 In accordance with the definition (\ref{1E}), our interest is in the integral over the spherical domains
 on the LHS of (\ref{13}) in the large $N$ limit.  This multidimensional
 integral can be studied using the duality identity (\ref{1aX}).
 Furthermore, we want to study this limit with $K, r$ as in (\ref{sf}).
  
 To study the large $N$ limit we can consider the RHS of  (\ref{X9}). 
 Relevant is the density of Jacobi ensemble ${\rm JUE}_{n,(n \gamma_1,n \gamma_2)}$
 in the large $n$ limit.   We note that the requirement in (\ref{X9}) that $Q_0 \ge Q_1$ does not restrict
 our consideration of Proposition \ref{P1.1} to this case, as by rotational invariance of the sphere, we
 can always choose to have the biggest charge at the south pole.
The Jacobi density is given by a functional form first identified by
 Wachter \cite{Wa78}
   \begin{equation}\label{Wa1}
 \rho^{\rm J}(x)  =  (\gamma_1 + \gamma_2+2) {\sqrt{(x - c^{\rm J}) ( d^{\rm J} - x)} \over 2 \pi x (1 - x)}
  \end{equation}   
  supported on $(c^{\rm J}, d^{\rm J})$ with these endpoints specified by
   \begin{equation}\label{Wa2}   
 {1 \over (\gamma_1 + \gamma_2 + 2)^2} \Big ( \sqrt{(\gamma_1 + 1) (\gamma_1 + \gamma_2 + 1)} \pm \sqrt{\gamma_2 + 1} \Big )^2.
  \end{equation} 
(See also \cite{MSV79} and \cite[Corollary 4.1]{Co05} for the appearance of (\ref{Wa1}) in  related  contexts.)
  Comparing with the RHS of (\ref{X9}) we have $n = r = Q_1 N$, $n \gamma_1  = K - r = (Q_0 - Q_1) N$,  $ n \gamma_2 = N $,
   and thus
    \begin{equation}\label{gg} 
\gamma_1 = (Q_0 - Q_1)/Q_1,   \quad \gamma_2 = 1/Q_1.
   \end{equation} 
  Outside of this interval, the density is exponentially small with respect to $N$ \cite{Fo12}.
  
  Significant is the situation that
   \begin{equation}\label{Y}
   1/(1 + |w|^2) > d^{\rm J}
   \end{equation}  
   i.e.~$ 1/(1 + |w|^2) $ is greater than the right hand endpoint of the support. 
   Then, due to the functional dependence on $N$ for $x > d^{\rm J}$,
    up to terms exponentially small in $N$ we have that for large $N$
    \begin{equation}\label{Y1}   
  E(0,(1/(1 + |w|^2),1); {\rm JUE}_{_{Q_1 N ,((Q_0 - Q_1)N,N)}}) = 1 +{\rm{O}}(e^{ -\epsilon }),
  \end{equation}  
  for some $\epsilon>0.$ In this case we do not need to do any more calculation relating to (\ref{X1}), and moreover our task of computing the normalisation of the Boltzmann
  factor is seen to be independent of $w$. This allows us  to take $w=0$, which in the Coulomb gas on a sphere picture corresponds to placing
  the charge $Q_1N$ at the north pole. This is the circumstance considered in \cite{FF11}. Thus we can read off the result (\ref{1G})
  (which follows by inserting the
  various normalisations) from \cite[Prop.~4.1]{FF11}; the statement in relation to the structure of the $1/N$ expansion follows from the asymptotics of the Barnes $G$-function as carried out explicitly in \cite[derivation of Eqns.~(3.6) and (3.7)]{BSY24}.   
  
  \subsection{Critical regime}
  We consider next the situation that to leading order for large $r$ the inequality (\ref{Y}) becomes an equality.
  In analogy with (\ref{2F}) introduce a sub-leading correction proportional to $N^{-2/3}$ by requiring that
   \begin{equation}\label{2Fa}
    1/(1 + |w|^2) =  d^{\rm J}    + {s \over \alpha(Q_1,Q_2) N^{2/3}}.
  \end{equation}      
 To specify $\alpha(Q_1,Q_2)$ herein, with $\gamma_1, \gamma_2$ as in (\ref{gg}), introduce
    \begin{equation}\label{2Fb}
  \mathfrak c^2 = {\gamma_1 + 1 \over \gamma_1 + \gamma_2 + 2}, \quad
  \tilde {\mathfrak c}^2 = {1 \over \gamma_1 + \gamma_2 + 2}, \quad
   \mathfrak s^2 = {\gamma_2 + 1 \over \gamma_1 + \gamma_2 + 2}, \quad
  \tilde {\mathfrak s}^2 = {\gamma_1 + \gamma_2 + 1 \over \gamma_1 + \gamma_2 + 2}.
  \end{equation}
  One then has \cite[Eq.~(7)]{HF12}
      \begin{equation}\label{2Fc}
\alpha(Q_0,Q_1) = {1 \over       \mathfrak c  \mathfrak s   \tilde {\mathfrak c}   \tilde {\mathfrak s}}
\Big ( {  \mathfrak c  \mathfrak s   \tilde {\mathfrak c}   \tilde {\mathfrak s} (\gamma_1 + \gamma_2 + 2)^{1/2} \over 
 \tilde {\mathfrak c}   \tilde {\mathfrak s}  (  \mathfrak c^2  -   \mathfrak s^2 ) +   \mathfrak c  \mathfrak s 
( \tilde {\mathfrak c}^2 -     \tilde {\mathfrak s}^2)} \Big )^{4/3}.
  \end{equation}
  The significance of (\ref{2Fa}) is that it replacing $1/(1+|w|^2)$ in the LHS of (\ref{Y1}) allows for the $\beta = 2$ soft edge limit
law  formula \cite{HF12}
   \begin{equation}\label{2Fd} 
   \lim_{N \to \infty} E\Big (0,(d^{\rm J}    + s / ( \alpha(Q_0,Q_1) N^{2/3}),1  ); {\rm JUE}_{r,(K-r,N)} \Big ) \Big |_{K=Q_0 N, r = Q_1 N} =
  E_2^{\rm soft}(0;(Q_0^{-2/3}s,\infty)); 
   \end{equation}
in relation to the RHS recall (\ref{2F1}). 

\begin{cor}
With $w$ dependent on $N$ as specified by (\ref{2Fa}), the partition function (\ref{1E}) has for its large $N$ expansion
the RHS of (\ref{1F}), with the addition of $\log E_2^{\rm soft}(0;(Q_0^{-2/3}s,\infty))$, corresponding to a further term of
${\rm O}(1)$.
\end{cor}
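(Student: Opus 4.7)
The plan is to mirror the proof of Proposition \ref{P1.1} given in \S\ref{S3.3}, replacing only the estimate of the Jacobi gap probability. Starting from the duality rewrite (\ref{X9}), one has
\begin{equation*}
\mathcal Q_N(Q_0,Q_1,w) = \mathcal Q_N(Q_0,0,\cdot)\, C_{N,Q_0N,Q_1N}\, E\bigl(0,(1/(1+|w|^2),1);{\rm JUE}_{Q_1N,((Q_0-Q_1)N,N)}\bigr),
\end{equation*}
and the partition function $Z_N^{\rm s}(w;Q_0,Q_1,R_{\rm s})$ differs from $\mathcal Q_N(Q_0,Q_1,w)$ only by the explicit prefactors recorded in (\ref{1E}) and (\ref{13}), which are independent of $w$ up to the factor $(1+|w|^2)^{-r(K+N)}$ that is compensated by the inverse power in (\ref{X1}). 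In the post-critical argument of \S\ref{S3.3}, the two factors $\mathcal Q_N(Q_0,0,\cdot)$ and $C_{N,Q_0N,Q_1N}$ (combined with the remaining explicit prefactors) produce precisely the RHS of (\ref{1G}), via \cite[Prop.~4.1]{FF11} and the Barnes $G$-function asymptotics quoted after (\ref{Y1}); that analysis is identical here, so it is inherited without change.

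What requires modification is the treatment of the Jacobi gap probability. In \S\ref{S3.3}, strict post-criticality let one replace the gap probability by $1+O(e^{-\epsilon N})$, since $1/(1+|w|^2)$ lay outside the support of the Wachter density (\ref{Wa1}). Under the scaling (\ref{2Fa}), however, $1/(1+|w|^2)$ is tuned to the right soft edge $d^{\rm J}$ of the Wachter density with Jacobi parameters specialised as in (\ref{gg}), with a subleading shift on the natural $N^{-2/3}$ scale set by $\alpha(Q_0,Q_1)$ in (\ref{2Fc}). The $\beta=2$ Jacobi soft-edge limit (\ref{2Fd}) of \cite{HF12} then applies directly and yields
\begin{equation*}
\log E\bigl(0,(1/(1+|w|^2),1);{\rm JUE}_{Q_1N,((Q_0-Q_1)N,N)}\bigr) \;=\; \log E_2^{\rm soft}(0;(Q_0^{-2/3}s,\infty)) + o(1).
\end{equation*}

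Adding this $O(1)$ contribution to the previously established expansion of the remaining factors produces the claimed asymptotic, with all other terms in (\ref{1G}) preserved. The only genuinely technical point I anticipate is verifying that the edge rescaling $\alpha(Q_0,Q_1)$ defined through (\ref{2Fb})--(\ref{2Fc}), when combined with the substitution $\gamma_1 = (Q_0-Q_1)/Q_1$, $\gamma_2 = 1/Q_1$ of (\ref{gg}), simplifies the soft-edge argument to exactly $Q_0^{-2/3}s$ rather than some more awkward function of $(Q_0,Q_1)$; this is a direct but slightly tedious algebraic reduction using the identities among $\mathfrak c,\mathfrak s,\tilde{\mathfrak c},\tilde{\mathfrak s}$, and constitutes the main (essentially only) obstacle. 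Everything else amounts to a transcription of the post-critical argument with the soft-edge limit swapped in for the exponential tail bound.
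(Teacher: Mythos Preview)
Your proposal is correct and matches the paper's (implicit) approach: the Corollary is stated immediately after (\ref{2Fd}) without a separate proof, the intended argument being precisely to rerun \S\ref{S3.3} with the soft-edge limit (\ref{2Fd}) replacing the exponential bound (\ref{Y1}) on the Jacobi gap probability. The algebraic reduction you flag as the main obstacle---that the rescaled argument comes out as $Q_0^{-2/3}s$---is already absorbed into the statement of (\ref{2Fd}) as quoted from \cite{HF12}, so in the paper's logic there is nothing further to check.
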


\section{Electrostatic consequences of the duality in the pre-critical regime}\label{S4t}

\subsection{Preliminaries}
From the discussion of \S \ref{S3.3} we have that in the so called post-critical regime, when
the inequality (\ref{1F}) holds, the duality identity (\ref{X9}) gives that for large $N$
\begin{equation}\label{Y8}
\mathcal Q_N^{\rm post}(Q_0,Q_1,w) = C_{N,Q_0N,Q_1N}  \mathcal Q_N(Q_0,0,\cdot))  \Big ( 1 + {\rm O}(e^{-\epsilon N})
\Big ),
\end{equation}
for some $\epsilon > 0$, i.e.~the gap probability factor can effectively we set equal to unity.
We take (\ref{Y8}) as the meaning of $\mathcal Q_N^{\rm post}(Q_0,Q_1,w)$ in the full parameter range
beyond (\ref{1F}).
Comparison with (\ref{X9}) without this simplifying the gap probability factor then shows that for
the parameter range of the pre-critical phase
\begin{equation}\label{Y9}
{\mathcal Q_N^{\rm post}(Q_0,Q_1,w) \over \mathcal Q_N^{\rm pre}(Q_0,Q_1,w) }
=  E(0,(1/(1 + |w|^2),1);  {\rm JUE}_{Q_1 N ,((Q_0 - Q_1)N,N)}) \Big ( 1 + {\rm O}(e^{-\epsilon N})
\Big ).
\end{equation}

In the present work we are restricting attention to the leading order large $N$ terms in the pre-critical regime. The Coulomb gas on a sphere viewpoint gives the leading large  $N$ forms
\begin{equation}\label{Y9a}
\log \mathcal Q_N^{\rm post}(Q_0,Q_1,w) = - 2  N^2  K_N^{\rm post} + \cdots , \quad
\log \mathcal Q_N^{\rm pre}(Q_0,Q_1,w) = - 2  N^2  K_N^{\rm pre} + \cdots,
\end{equation}
where $K_N^{\rm post} = - {1\over  \beta  N^2} \log  \mathcal K_N^{\rm post}$ as specified by (\ref{C2}), while $K_N^{\rm pre} = - {1  \over \beta N^2} \log  \mathcal K_N^{\rm pre}$ as specified by
(\ref{2.8c+}); in particular both $K_N^{\rm post}, K_N^{\rm pre}$ are independent of $N$. For consistency with (\ref{Y9}), we must then have the leading large $N$ form
of the gap probability
\begin{equation}\label{Y9b}
\log E(0,(1/(1 + |w|^2),1);  {\rm JUE}_{Q_1 N ,((Q_0 - Q_1)N,N)}) = 2N^2 ( K_N^{\rm pre} - K_N^{\rm post}  
 ) +  \cdots;
\end{equation}
note that the RHS must be negative for parameter values in the pre-critical regime since the LHS is the logarithm of a probability. This is consistent with the equilibrium measure having the property that it minimises the energy functional corresponding to $K_N^{(\cdot)}$ in the particular parameter regime. In the context
of probability theory, such a functional form is characteristic of the large deviation regime;
see \cite{RKC12} for the present context.
In fact the leading large $N$ form of LHS of (\ref{Y9b}) in the pre-critical regime can be
computed by electrostatic arguments particular to the log-gas interpretation of
the Jacobi unitary ensemble. This is to be discussed in the next subsection. It gives rise to an identity between
electrostatic energies of distinct Coulomb gas systems, which moreover involves distinct functional forms, see \eqref{relation 2D 1D energies}.

\subsection{Comparative electrostatic energies}

Let us first introduce the constrained spectral density of the JUE. This is the limiting spectral density of ${\rm JUE}_{n,(n \gamma_1, n \gamma_2)}$, with the requirement that its maximal eigenvalue is smaller than a given $\zeta \in (0, d^{\rm J})$, where $d^{\rm J}$ is specified in \eqref{Wa2}. 
In general, for one-dimensional log gases, if one imposes such a hard wall constraint that overlaps with the associated equilibrium measure, it causes macroscopic changes, resulting in a new equilibrium measure. Nonetheless, this new measure remains absolutely continuous with respect to the one-dimensional Lebesgue measure. This phenomenon contrasts with two-dimensional Coulomb gases, where the new measure (given in terms of the balayage measure \cite{Ch23}) under the hard wall constraint is singular with respect to the two-dimensional Lebesgue measure.

\begin{figure} 
 \begin{minipage}[b]{0.48\linewidth}
    \centering
    \includegraphics[width=\textwidth]{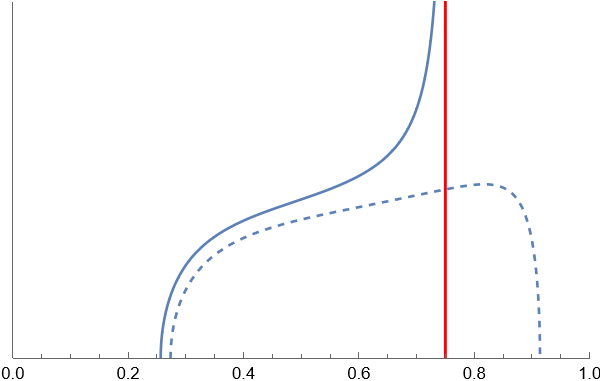}
    \subcaption{$\rho^{\rm J}(x)$ and  $\rho^{\rm J}(x; (0, \zeta))$}
  \end{minipage} \quad 
   \begin{minipage}[b]{0.48\linewidth}
    \centering
    \includegraphics[width=\textwidth]{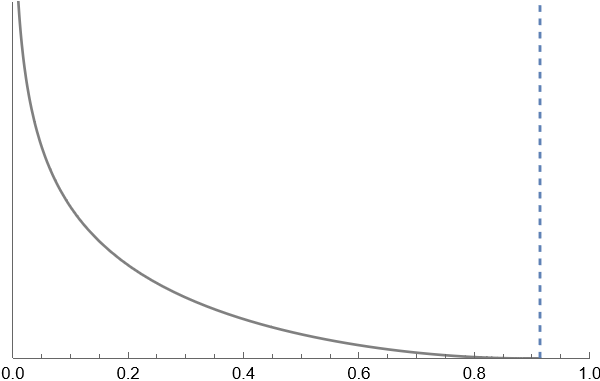}
    \subcaption{$S(L(\zeta),\zeta)-S(c^{ \rm J },  d^{ \rm J } )$}
  \end{minipage}  
    \caption{The plot (a) illustrates the Wachter distribution \( \rho^{\rm J}(x) \) (dashed line) alongside the constrained density \( \rho^{\rm J}(x; (0, \zeta)) \) (solid line), subject to the hard wall constraint \( \zeta = 0.75 \) (vertical line). Here, \( \gamma_1 = 4 \) and \( \gamma_2 = 2 \), resulting in \( c^{\rm J} \approx 0.274 \), \( d^{\rm J} \approx 0.914 \), and \( L(\zeta) \approx 0.247 \).  
The plot (b) shows the graph \( \zeta \mapsto \big( S(L(\zeta), \zeta) - S(c^{\rm J}, d^{\rm J}) \big) \) with the same parameter choices. The vertical line indicates the right edge \( d^{\rm J} \). } 
    \label{Fig_JUE constrained}
\end{figure}

For purposes of describing the constrained spectral density, it is convenient to write
\begin{align*}
\mathsf{B}= \frac{ \gamma_1 \sqrt{\zeta} }{ 2+\gamma_1+\gamma_2 }, \quad \mathsf{C}= \frac{\gamma_2 \sqrt{1-\zeta} }{ 2+\gamma_1+\gamma_2 }.  
\end{align*}
Set
\begin{align*}
\mathsf{Q}= -\frac19 (1-\mathsf{B}^2-\mathsf{C}^2)^2,
\quad 
\mathsf{R} = \frac{1}{27} \Big( \mathsf{B}^6-3\mathsf{B}^4(1-\mathsf{C}^2)+3\mathsf{B}^2(1+16\mathsf{C}^2+\mathsf{C}^4)-(1-\mathsf{C}^2)^3\Big) .
\end{align*}
These are building blocks to define 
\begin{equation}
L(\zeta):= \frac14 \bigg( \mathsf{B}+\sqrt{z_0} -\sqrt{ \mathsf{B}^2-2\mathsf{C}^2+2 -z_0-\frac{2}{ \sqrt{z_0}  }  \mathsf{B}(\mathsf{C}^2+1) } \bigg)^2, 
\end{equation}
where 
\begin{align*}
z_0= -\frac{2\mathsf{C}^2-\mathsf{B}^2-2 }{3} +( \mathsf{R}+\sqrt{\mathsf{Q}^3+\mathsf{R}^2} )^{ 1/3 } +( \mathsf{R}-\sqrt{\mathsf{Q}^3+\mathsf{R}^2} )^{ 1/3 }. 
\end{align*}
This value \( L(\zeta) \) corresponds to the left edge of the new equilibrium measure. In particular, \( L(\zeta) \leq c^{\rm J} \), indicating that the JUE support under the constraint is shifted towards the left.  
More precisely, it was shown in \cite{RKC12} that the limiting density of \( {\rm JUE}_{n, (n \gamma_1, n \gamma_2)} \), conditioned on having its maximal eigenvalue smaller than $\zeta$, is given by
\begin{equation} \label{Wa constrained}
\rho^{\rm J}(x;(0,\zeta)) =  (\gamma_1+\gamma_2+2) \sqrt{ \frac{x-L(\zeta)}{\zeta-x} } \frac{1}{2\pi x(1-x)} \Big( \frac{\gamma_1\, \sqrt{  \zeta/ L(\zeta)   }}{2+\gamma_1+\gamma_2} -x \Big) 
\end{equation}
supported on $(L(\zeta),\zeta)$.  
In particular, if $\zeta=d^{ \rm J }$, it follows from \eqref{Wa2} that 
\begin{align*}
\rho^{\rm J}(x;(0,\zeta)) \Big|_{ \zeta=d^{ \rm J } } = \rho^{\rm J}(x). 
\end{align*}
This agrees with the intuition that if the hard wall constraint does not overlap with the equilibrium measure, then the constraint does not cause a macroscopic change to the ensemble.

The resulting density \eqref{Wa constrained} can be used to derive the leading order asymptotic behaviour of the probability that the maximal eigenvalue of \( {\rm JUE}_{n, (n \gamma_1, n \gamma_2)} \) is smaller than $\zeta$. By computing the logarithmic energy associated with \eqref{Wa constrained}, it was derived in \cite{RKC12} that this probability is asymptotically given by 
\begin{equation}
\exp \Big( -n^2 ( S(L(\zeta),\zeta)-S(c^{ \rm J },  d^{ \rm J } ) ) + {\rm o}(n^2)\Big),
\end{equation}
where 
\begin{align}
\begin{split}
S(x,y) &= -(\gamma_1+\gamma_2+2) \Big( \gamma_1 \log \frac{\sqrt{x}+\sqrt{y}}{ 2 } +  \gamma_2 \log \frac{\sqrt{1-x}+\sqrt{1-y}}{ 2 }  \Big)
\\
&\quad + \frac{\gamma_1^2}{4}\log(xy) + \frac{\gamma_2^2}{4}\log\Big((1-x)(1-y) \Big) 
\\
&\quad + \gamma_1 \gamma_2 \log \frac{ \sqrt{x(1-y)}+\sqrt{y(1-x)} }{ 2} - \log\frac{y-x}{4}.
\end{split}
\end{align}
We mention that for more general one-dimensional log gases, the precise asymptotic behaviour can also be derived using the Riemann-Hilbert analysis \cite{CG21}, cf. \cite{BCMS25a}. In addition, when $\zeta \to d^{ \rm J }$, the large deviation rate function has a cubic-decay, which is often referred to as a third order phase transition, see e.g. \cite{MS14} and references therein.  
Notice from \eqref{Y9b} that
\begin{equation} 
\log E(0,(1/(1 + |w|^2),1);  {\rm JUE}_{ N ,( \frac{Q_0}{Q_1} - 1)N,\frac{1}{Q_1}N)}) = \frac{2}{Q_1^2}N^2 ( K_N^{\rm pre} - K_N^{\rm post}  ) +  \cdots;
\end{equation}
This leads to the relation 
\begin{equation} \label{relation 2D 1D energies}
\frac{2}{Q_1^2}  ( K_N^{\rm post}  - K_N^{\rm pre} ) =  \Big( S(L(\zeta),\zeta)-S(c^{ \rm J },  d^{ \rm J } ) \Big)\bigg|_{ \gamma_1=\frac{Q_0}{Q_1}-1, \gamma_2 =\frac{1}{Q_1},\zeta=\frac{1}{1+|w|^2} } . 
\end{equation}
See Figure~\ref{Fig_energy2D1D} for the numerics.

\begin{figure}  
    \includegraphics[width=0.48\textwidth]{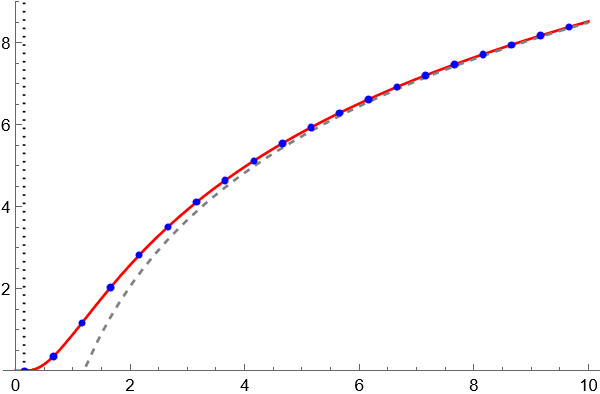}
   \quad 
    \includegraphics[width=0.48\textwidth]{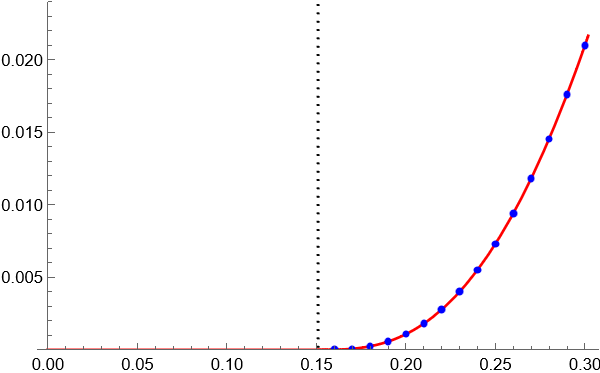} 
    \caption{ The blue dots represent $w \mapsto \frac{2}{Q_1^2}  ( K_N^{\rm post}  - K_N^{\rm pre} )$, where $Q_0=4, Q_1=2$. The solid red curve shows the RHS of \eqref{relation 2D 1D energies}, with $\gamma_1=1$, $\gamma_2=1/2$ as specified. The vertical line marks the critical transition point defined in \eqref{def of w critical point}. In the left figure, where \( w \) ranges from small to large values, the dashed curve represents the RHS of \eqref{2.97g}, scaled by an appropriate multiplicative constant, demonstrating that the asymptotic behaviour is in good agreement for large $w$. 
    } 
    \label{Fig_energy2D1D}
\end{figure}

\subsection*{Acknowledgements}
 Sung-Soo Byun was supported by the POSCO TJ Park Foundation (POSCO Science Fellowship), by the New Faculty Startup Fund at Seoul National University and by the LAMP Program of the National Research Foundation of Korea (NRF) grant funded by the Ministry of Education (No. RS-2023-00301976).
Funding support to Peter Forrester for this research was through the Australian Research Council Discovery Project grant DP250102552. 
Sampad Lahiry acknowledges financial support from the International Research Training Group (IRTG) between KU Leuven and University of Melbourne and Melbourne Research Scholarship of University of Melbourne.
 
\appendix
\section*{Appendix A}\label{A1}
\renewcommand{\thesection}{A} 
\setcounter{equation}{0}
Associate with the south pole a spherical cap of area $\pi Q_0/(Q_0 + Q_1 + 1)$. 
  Denote the angle determining the boundary of this spherical cap by $\theta_0$. In terms of this angle,
  the corresponding Cayley-Klein parameters on the boundary at $\phi=0$, $(u_0, v_0)$ say,
  are such that $ \sin^2 {\theta_0 \over 2} =
   | v_0 |^2$ and so
    \begin{equation}\label{Y1x}  
  | v_0 |^2 = \sin^2 {\theta_0 \over 2} =
  {Q_0     \over Q_0 + Q_1 + 1},
  \end{equation}  
  where we have used the fact that the area of the spherical cap is equal to $\pi \sin^2 {\theta_0 \over 2}$.
  Denoting by $r_{Q_0}$ the point on the positive real axis that $(u_0, v_0)$ is mapped to under the
  stereographic projection, it follows that 
    \begin{equation}\label{Y2}    
  r_{Q_0}^2 =    {Q_1 +1 \over Q_0},
   \end{equation}  
as previously noted in \cite[second equation in (2.24)]{FF11}.

Let the point $\mathbf a_w$ on the sphere with Cayley-Klein parameters $(u_w,v_w)$ have spherical coordinates
$(\theta,\phi) = (\theta_w,0)$.  Associate with this point a spherical cap of area
$\pi Q_1/(Q_0 + Q_1 + 1)$. Let the polar coordinates of the extremities of the spherical cap about $(\theta_w,0)$ be given by
  $\theta_{w} - \hat{\theta} < \theta < \theta_{w} + \hat{\theta}$. By rotation invariance, we can take $\theta_{w} = 0$ allowing
  us to conclude
    \begin{equation}\label{Y1a}  
   \sin^2 {\hat{\theta} \over 2} =
  {Q_1     \over Q_0 + Q_1 + 1}.
  \end{equation}  
  Let the extremity of this cap in the direction of the south pole, and restricted to $\phi=0$,  be denoted $\mathbf a_+$,
  and its stereographic projection to the positive real axis be denoted $|w_+|$.
  It follows from (\ref{Y1a}) and  (\ref{SP1}) that
   \begin{equation}\label{Y1c}  
   |\mathbf a_+ - \mathbf a_w|^2 =  {Q_1     \over Q_0 + Q_1 + 1} = {1 \over (1 + |w|^2)} {1 \over (1 + |w_+|^2)} (|w_+| - |w|)^2.
  \end{equation}  
  
  For the spherical cap associated with the south pole not to overlap the spherical cap associated
  with $\mathbf a_w$, we require
   \begin{equation}\label{Y1d}    
   |w_+|^2 <  r_{Q_0}^2 =    {Q_1 +1 \over Q_0}, 
    \end{equation} 
  where the equality is
  (\ref{Y2}). We use this inequality as an equality in (\ref{Y1c}), leaving a quadratic equation for $|w|$.
 After some manipulation we can verify that the
 critical case of  (\ref{1F}), when the inequality is an equality, is a solution for $1/(1+|w|^2)$.

 \appendix
\section*{Appendix B} \label{B1}
\renewcommand{\thesection}{B} 
\setcounter{equation}{0}
Denote the fourth order polynomial in $\alpha$ on the LHS of (\ref{abd3}) by $p(\alpha)$. Here we will outline calculations, based on the discriminant Disc${}_\alpha(p)$, which show that for $w > 0$ this polynomial has all roots real and distinct, two of which are positive and two negative.

First, use of computer algebra shows
 \begin{equation}
 {\rm Disc}_{\alpha}(p) = C (1 + w^2) h(w^2),
  \end{equation} 
  where $h(x)$ is a degree three polynomial in $x$, and $C > 0$. Thus all
  roots are real. Next we will show that all the roots are distinct, which requires showing that $h(x)$ has
  no positive roots. The polynomial $h(x)$ has a positive constant term and positive coefficient of $x^3$, implying that there is one negative root. A further computer algebra computation shows $ {\rm Disc}_{x}(h)<0$. Hence the remaining two roots of $h$ are complex conjugate pairs. Thus we have the required result that roots of $h$ are all positive, and so all roots of $p$ are distinct. On the other hand, in the special case $Q_0 = Q_1$ we know from the factorisation (\ref{abd3}) that there are exactly two positive and two negative roots, a situation which then must persist, considering too that for $w \ne 0$, $\alpha = 0$ is not a root.

 \appendix
\section*{Appendix C}  
\renewcommand{\thesection}{C} 
\setcounter{equation}{0}

We discuss a seemingly non-trivial connection with the truncated unitary ensemble.
We consider the top-left $N \times N$ submatrix ${\rm{T}}_{N,M}$ of a random unitary $M \times M$ matrix picked with respect to the normalised Haar measure. 
The PDF for the eigenvalues $\{ \lambda_j \}_{j=1}^N$ of ${\rm{T}}_{N,M}$ is given by \cite{ZS99}
\begin{equation} \label{def of tUE}
  \frac{1}{N!} \prod_{l=0}^{N-1} \frac{ \Gamma(M-N+l+1) } {l! \, \Gamma(M-N)}  \prod_{l=1}^N \frac{ 1  }{  (1 - | z_l|^2)^{N-M+1} }   \prod_{1 \le j < k \le N} |z_k - z_j|^2 \, d\mathbf r_1 \cdots d \mathbf r_N \qquad |\lambda_j| \le 1.  
\end{equation} 

It was shown in \cite[Theorem 1.10]{DS22} that for $r \in \mathbb Z^+$
\begin{align} 
 \label{TUE JUE relation}
\Big \langle  | \det ( {\rm{T}}_{N,M}-v )|^{2r } \Big \rangle  &=  \Big \langle | \det ( {\rm{T}}_{N,M} )|^{2r } \Big \rangle  (1-|v|^2)^{ -r(M-N+r)  }  \nonumber
\\
&\quad \times E(0,1-|v|^2,1);  {\rm JUE}_{r,(M-N,N)})   . 
\end{align} 
Further, by \cite[Eq.(1.24)]{DS22}, we have 
	\begin{align}  
		\begin{split}
			 \Big \langle | \det ( {\rm T}_{N,M} ) |^{2r} \Big \rangle  &= \prod_{l=0}^{N-1} \frac{ \Gamma(r+l+1) \Gamma(M-N+l+1) }{ l! \, \Gamma(M-N+r+l+1) },   
		\end{split}
	\end{align}
and so
\begin{multline}\label{C4}
\frac{1}{N!}  \int_{|\mathbf r_1| < 1} d\mathbf r_1 \cdots  \int_{|\mathbf r_N| < 1} d\mathbf r_N \,
\prod_{l=1}^N \frac{ | v -  z_l |^{2r}  }{  (1 - | z_l|^2)^{N-M+1} }  \prod_{1 \le j < k \le N} |z_k - z_j|^2
\\
=  (1-|v|^2)^{ -r(M-N+r)  } \, E(0,(1-|v|^2,1);  {\rm JUE}_{r,(M-N,N)})  \prod_{l=0}^{N-1}  
     \frac{ \Gamma(M-N) \Gamma(r+l+1)   }{   \Gamma(M-N+r+l+1) }   .   
\end{multline}

On the other hand
\begin{multline}
 {1 \over  (1 + |w|^2)^{r(K+N)} }  \frac{1}{N!}  \int_{|\mathbf r_1|<1} d\mathbf r_1 \cdots  \int_{|\mathbf r_N|<1} d\mathbf r_N  \,
\prod_{l=1}^N { | w -  z_l |^{2r}  \over   (1 + | z_l|^2)^{K+r+N+1} } \prod_{1 \le j < k \le N} |z_k - z_j|^2 
\\
 \propto E(0,(1/(1 + |w|^2),1);  {\rm JUE}_{r,(K-r,N)}), 
\end{multline}
as follows from the working in the paragraph below (\ref{1}).
Comparison with the RHS of (\ref{C4}) shows that under the setting of the parameters
\begin{equation}
|v|= \frac{|w|}{ \sqrt{1+|w|^2} }, \qquad K+N=M+r,
\end{equation}
the partition functions for the Coulomb gas systems corresponding to the  averaged power of the
characteristic polynomial in truncated unitary ensemble, and in the (generalised) spherical ensemble,
agree up to normalisations. 


\providecommand{\bysame}{\leavevmode\hbox to3em{\hrulefill}\thinspace}
\providecommand{\MR}{\relax\ifhmode\unskip\space\fi MR }
\providecommand{\MRhref}[2]{%
  \href{http://www.ams.org/mathscinet-getitem?mr=#1}{#2}
}
\providecommand{\href}[2]{#2}


\begin{thebibliography}{10}

\bibitem{Ad18}
K. Adhikari, \emph{Hole probabilities for $\beta$-ensembles and determinantal point processes in the complex plane}, Electron.
J. Probab. \textbf{23} (2018), Paper No. 48, 21 pp.

\bibitem{AR17}
K. Adhikari and N.K. Reddy, \emph{Hole probabilities for finite and infinite Ginibre ensembles}, Int. Math. Res. Not.
\textbf{2017} (2017), 6694--6730.

\bibitem{ABK21} G. Akemann, S.-S. Byun and N.-G. Kang, \emph{A non-Hermitian generalisation of the Marchenko–Pastur distribution: from the circular law to multi-criticality}, Ann. Henri Poincaré \textbf{22} (2021), 1035--1068.

\bibitem{ACC23}
Y. Ameur, C. Charlier and J. Cronvall, \emph{Free energy and fluctuations in the random normal matrix model with spectral gaps},
arXiv:2312.13904.


  \bibitem{AKS21}
Y. Ameur, N-G. Kang and S.-M. Seo,
\emph{The random normal matrix model: insertion of a point charge},
Potential Anal. \textbf{58}
(2023), 331--372.

\bibitem{BBLM15}
 F. Balogh, M. Bertola, S.-Y. Lee and K.D.T.-R. McLaughlin, \emph{Strong asymptotics of the orthogonal polynomials with respect to a measure supported on the plane}, Comm. Pure Appl. Math. \textbf{68} (2015), 112--172.

\bibitem{BM15}
F. Balogh and M. Merzi, \emph{Equilibrium measures for a class of potentials with discrete rotational symmetries}, Constr. Approx.
\textbf{42} (2015), 399--424.

\bibitem{BC12} F. Benaych-Georges and F. Chapon, \emph{Random right eigenvalues of Gaussian quaternionic matrices}, Random Matrices Theory Appl. \textbf{1} (2012), 1150009.

\bibitem{BDSW18}
J.S. Brauchart, P.D. Dragnev, E.B. Saff and R.S.~Womersley, 
\emph{Logarithmic and Riesz equilibrium for multiple sources on the sphere: the exceptional case}, Contemporary
Computational Mathematics - A Celebration of the 80th Birthday of Ian Sloan (J. Dick, F. Kuo, and H. Wo\'zniakowski, eds.), Springer, Cham, 2018, pp. 179--203.

\bibitem{By24} S.-S. Byun, \emph{Planar equilibrium measure problem in the quadratic fields with a point charge}, Comput. Methods Funct. Theory \textbf{24} (2024), 303--332.

\bibitem{BCMS25}
S.-S. Byun, C.~Charlier, P.~Moreillon and N.~Simm, \emph{Planar equilibrium measure for the truncated ensembles with a point charge}, manuscript in progress.

\bibitem{BCMS25a} S.-S. Byun, C.~Charlier, P.~Moreillon and N.~Simm, \emph{Duality in random matrix theory, Coulomb gases and last passage percolation}, manuscript in progress. 

\bibitem{BF23a} S.-S. Byun and P.J. Forrester, \emph{Spherical induced ensembles with symplectic symmetry}, SIGMA Symmetry Integrability Geom. Methods Appl. 19 (2023), 033, 28pp.

 \bibitem{BF24}
S.-S.~Byun and P.J.~Forrester,   \emph{Progress on the study of the Ginibre ensembles}, KIAS Springer Series in Mathematics \textbf{3}, Springer, 2024.

\bibitem{BKS23}
S.-S. Byun, N.-G. Kang and S.-M. Seo, \emph{Partition functions of determinantal and Pfaffian Coulomb gases with radially symmetric potentials}, Comm. Math. Phys. \textbf{401} (2023), 1627--1663.

\bibitem{BP24} S.-S. Byun and S. Park, \emph{Large gap probabilities of complex and symplectic spherical ensembles with point charges}, arXiv:2405.00386.  

 \bibitem{BSY24}
S.-S. Byun, S.-M. Seo, and M. Yang, \emph{Free energy expansions of a conditional GinUE and large deviations of the smallest eigenvalue of the LUE}, arXiv:2402.18983.


\bibitem{Ch23} C. Charlier, \emph{Hole probabilities and balayage of measures for planar Coulomb gases}, arXiv:2311.15285.  

\bibitem{CG21} C. Charlier and R. Gharakhloo, \emph{Asymptotics of Hankel determinants with a Laguerre-type or Jacobi-type potential and Fisher–Hartwig singularities} Adv. Math. \textbf{383} (2021), 107672.

\bibitem{Co05} B. Collins, \emph{Product of random projections, Jacobi ensembles and universality problems arising from free probability}, Probab. Theory Relat. Fields \textbf{133} (2005), 315--344.

 \bibitem{CK19}
J.G. Criado del Rey and A.B.J. Kuijlaars, \emph{An equilibrium problem on
the sphere with two equal charges}, preprint arXiv:1907.0480.

\bibitem{CK22} 
J.G. Criado del Rey and A.B.J. Kuijlaars, \emph{A vector equilibrium problem for symmetrically located point charges on a sphere}, Constr. Approx. \textbf{55} (2022), 775--827.

 \bibitem{CC03}
D. Crowdy and M. Cloke, \emph{Analytical solutions for distributed multipolar vortex equilibria
on a sphere}, Physics of Fluids, \textbf{15} (2003), 22--34.

\bibitem{DS22}
 A. Dea\~{n}o and N. Simm, \emph{Characteristic polynomials of complex random matrices and Painlev\'e transcendents}, Int. Math. Res.
Not., \textbf{2022} (2022), 210--264.

\bibitem{EM07}
P. Etingof and X. Ma, \emph{Density of eigenvalues of random normal matrices with an arbitrary potential, and of generalized normal matrices}, SIGMA \textbf{3} (2007), 048.

    \bibitem{FF11}
J.~Fischmann and P.J. Forrester, \emph{One-component plasma on a spherical
  annulus and a random matrix ensemble}, J. Stat. Mech. \textbf{2011} (2011),
  P10003.


    \bibitem{Fo10}
P.J. Forrester, \emph{Log-gases and random matrices}, Princeton University Press,
  Princeton, NJ, 2010.
  
   \bibitem{Fo12}
P.J. Forrester, \emph{Large deviation eigenvalue density for the soft edge Laguerre and Jacobi $\beta$-ensembles}, J.~Phys.~A
  {\bf 45} (2012), 145201.

  
\bibitem{Fo24+}
  P.J. Forrester, \emph{Dualities  in random matrix theory}, preprint.

      \bibitem{FK09}
P.J. Forrester and M.~Krishnapur, \emph{Derivation of an eigenvalue probability
  density function relating to the {P}oincar\'e disk}, J. Phys. A \textbf{42} (2009), 385204. 
  
  \bibitem{FR09}
P.J. Forrester and E.M. Rains, \emph{Matrix averages relating to the {G}inibre
  ensemble}, J. Phys. A \textbf{42} (2009), 385205
  
   \bibitem{FW15}
P.J. Forrester and N.S. Witte,  \emph{Painlev\'e II in random matrix theory and related fields}, Constr. Approx. \textbf{41}
 (2015),  589--613.


 \bibitem{GT11}
 B. Gustafsson and V.G. Tkachev, \emph{On the exponential transform of lemniscates},
Comp. Methods Function Theory \textbf{11} (2011), 591--615.
 
 \bibitem{HF12} 
 D. Holcomb and G.R.M. Flores, \emph{Edge Scaling of the $\beta$-Jacobi ensemble}, J. Stat. Phys. \textbf{149} (2012), 1136--1160.

 \bibitem{JT98}
B.~Jancovici and G.~T\'ellez, \emph{Two-dimensional {Coulomb} systems on a
  surface of constant negative curvature}, J. Stat. Phys. \textbf{91} (1998),
  953--977.


 \bibitem{KKL24}
 M. Kieburg, A.B.J.~Kuijlaars and S. Lahiry, \emph{Orthogonal polynomials in the normal matrix model with two insertions}, arXiv:2408.12952.

\bibitem{KLY23}
 T. Kr\"uger, S.-Y. Lee and M. Yang, \emph{Local statistics in normal matrix models with merging singularity}, arXiv:2306.12263.

   \bibitem{LY23}
S.-Y. Lee and M. Yang, \emph{Strong asymptotics of planar orthogonal polynomials: Gaussian weight perturbed by finite number of
point charges}, Comm. Pure Appl. Math. \textbf{76} (2023), 2888--2956.
  
  \bibitem{LD21} 
  A.R. Legg and P.D. Dragnev, \emph{Logarithmic equilibrium on the sphere in the presence of multiple point charges}, 
Constr. Approx., \textbf{54} (2021), 237--257.

\bibitem{MS14} S. N. Majumdar and G. Schehr, \emph{Top eigenvalue of a random matrix: Large deviations and third order phase transition}, J. Stat. Mech. \textbf{2014} (2014), P01012.

\bibitem{May13} A. Mays, \emph{A real quaternion spherical ensemble of random matrices}, J. Stat. Phys. \textbf{153} (2013), 48--69.

\bibitem{MP17}  A. Mays and A. Ponsaing, \emph{An induced real quaternion spherical ensemble of random matrices}, Random Matrices Theory Appl. \textbf{6} (2017), 1750001.

\bibitem{MSV79} D. S. Moak, E. B. Saff and R. S. Varga, \emph{On the zeros of Jacobi polynomials $P_n^{(\alpha_n,\beta_n)}(x)$}, Trans. Am. Math. Soc. \textbf{249} (1979), 159--162. 

\bibitem{RKC12} H.M. Ramli, E. Katzav and I.P. Castillo, \emph{Spectral properties of the Jacobi ensembles via the
Coulomb gas approach}, J. Phys. A: Math. Theor. \textbf{45} (2012), 465005.


 \bibitem{Ri72}
S. Richardson, \emph{Hele-Shaw 
flows with a free boundary produced by the injection of 
fluid into a
narrow channel}, J. Fluid Mech., \textbf{56} (1972), 609--618.
  
  \bibitem{ST97}
E.B.~Saff and V.~Totik , \emph{Logarithmic potentials with external fields}, Grundlehren der Mathematischen Wissenschaften [Fundamental Principles of Mathematical Sciences], Vol. 316, Springer-Verlag, Berlin, 1997, appendix B by Thomas Bloom. 

\bibitem{Se23}
A.~Serebryakov, \emph{Multi-point correlators in non-Hermitian matrices and beyond},
Ph.D. Thesis, University of Sussex, 2023.

\bibitem{SSD23}
A. Serebryakov, N. Simm, and G. Dubach, \emph{Characteristic polynomials of random truncations: Mo-
ments, duality and asymptotics}, Random Matrices: Theory and Applications, \textbf{12} (2023),  2250049.

\bibitem{SS24}
A. Serebryakov and N. Simm, \emph{Schur function expansion in non-Hermitian ensembles and averages of characteristic polynomials}, \\
 Ann. Henri Poincar\'e (2024). https://doi.org/10.1007/s00023-024-01483-6

  \bibitem{Se24}
  S.~Serfaty, \emph{Lectures on Coulomb and Riesz gases}, arXiv:2407.21194.


  
    \bibitem{TW94a}
C.A. Tracy and H.~Widom, \emph{Level-spacing distributions and the {Airy} kernel}, Commun.
  Math. Phys. \textbf{159} (1994), 151--174.


       \bibitem{Wa78}
K.W. Wachter, \emph{The strong limits of random matrix spectra for sample
  matrices of independent elements}, Annal. Prob. \textbf{6} (1978), 1--18.



   \bibitem{WW19} 
  C. Webb and M.D. Wong, \emph{On the moments of the characteristic polynomial of a
Ginibre random matrix}, Proc. Lond. Math. Soc., \textbf{118} (2019), 1017--1056.

   \bibitem{ZS99}
K.~Zyczkowski and H.-J. Sommers, \emph{Truncations of random unitary matrices},
  J. Phys. A \textbf{33} (2000), 2045--2057.

   
   \end{thebibliography}
\end{document}